\definecolor{darkgreen}{rgb}{0.0,0,0.9}
\definecolor{Red}{rgb}{1,0,0}
\definecolor{Blue}{rgb}{0,0,1}
\definecolor{Olive}{rgb}{0.41,0.55,0.13}
\definecolor{Green}{rgb}{0,1,0}
\definecolor{MGreen}{rgb}{0,0.8,0}
\definecolor{DGreen}{rgb}{0,0.55,0}
\definecolor{Yellow}{rgb}{1,1,0}
\definecolor{Cyan}{rgb}{0,1,1}
\definecolor{Magenta}{rgb}{1,0,1}
\definecolor{Orange}{rgb}{1,.5,0}
\definecolor{Violet}{rgb}{.5,0,.5}
\definecolor{Purple}{rgb}{.75,0,.25}
\definecolor{Brown}{rgb}{.75,.5,.25}
\definecolor{Grey}{rgb}{.5,.5,.5}
\definecolor{Pink}{rgb}{1,0,1}
\definecolor{DBrown}{rgb}{.5,.34,.16}
\definecolor{Black}{rgb}{0,0,0}
\DeclareMathAlphabet{\mathpzc}{OT1}{pzc}{m}{it}
\newtheorem{propo}{Proposition}[section]
\newtheorem{lemma}[propo]{Lemma}
\newtheorem{definition}[propo]{Definition}
\newtheorem{coro}[propo]{Corollary}
\newtheorem{thm}[propo]{Theorem}
\newtheorem{remark}[propo]{Remark}
\def\tlambda{\widetilde{\lambda}}
\def\tZ{\widetilde{Z}}
\def\hT{\widehat{T}}
\def\cF{{\cal F}}
\def\cC{{\cal C}}
\def\cE{{\cal E}}
\def\reals{{\mathbb R}}
\def\eps{{\varepsilon}}
\def\prob{{\mathbb P}}
\def\E{{\mathbb E}}
\def\Var{{\rm Var}}
\def\L0{{L_0}}
\def\de{{\rm d}}
\def\<{\langle}
\def\>{\rangle}
\def\bX{{\mathbf X}}
\def\htheta{\widehat{\theta}}
\def\hSigma{\widehat{\Sigma}}
\def\hsigma{\widehat{\sigma}}
\def\supp{{\rm supp}}
\def\F{{\sf F}}
\def\ind{{\mathbb I}}
\def\F{{\sf F}}
\def\normal{{\sf N}}
\def\sT{{\sf T}}
\def\id{{\rm I}}
\def\avglength{{\rm Avglength}}
\def\cov{\widehat{\sf Cov}}
\def\hprob{\widehat{\prob}}
\def\event{\mathcal{E}}
\def\v*{v_0}
\def\T*{T_0}
\def\u*{u_0}
\def\F*{F_0}
\definecolor{olivegreen}{rgb}{0,0.6,0.4}
\def\dist{\overset{{\rm d}}{\longrightarrow}} 
\def\tW{\widetilde{W}}
\def\bias{{\sf Bias}}
\def\coh{{\mu}}
\def\com{{\mu}_{\rm min}}
\def\lb{{\gamma}}
\def\coev{{\cal G}}
\def\re{\phi_{\rm RE}}
\def\cB{{\cal B}}
\def\FWER{{\rm FWER}}
\def\hTf{\widehat{T}^{{\rm F}}}
\newcommand{\ajcomment}[1]{}
\newcommand{\labitem}[2]{%
\def\@itemlabel{\text{#1}}
\item
\def\@currentlabel{#1}\label{#2}}
\title{Confidence Intervals and Hypothesis Testing for\\
  High-Dimensional Regression}
\author{Adel Javanmard
            \footnote{Department of Electrical Engineering, Stanford University. Email: \url{adelj@stanford.edu} }
             \,and Andrea~Montanari 
            \footnote{Department of Electrical Engineering and Department of Statistics, Stanford University. Email: \url{montanar@stanford.edu}}
            }
\begin{document}

\maketitle

\begin{abstract}
Fitting high-dimensional statistical models often requires the use
of non-linear parameter estimation procedures. As a consequence, it is
generally impossible to obtain an exact characterization of the probability
distribution of the parameter estimates. This in turn implies that it
is extremely challenging to quantify the \emph{uncertainty} associated
with a certain parameter estimate. Concretely, no commonly accepted
procedure exists for computing classical measures of uncertainty and
statistical significance as confidence intervals or $p$-values for these models.

We consider here high-dimensional linear regression problem, and propose 
an efficient algorithm  for constructing confidence intervals and $p$-values. 
The resulting confidence intervals have nearly optimal size. When
testing for the null hypothesis that a certain parameter is vanishing,
our method has nearly optimal power.

Our approach is based on constructing a `de-biased' version of
regularized M-estimators. The new construction  improves
over recent work in the field in that it does not assume a special
structure on the design matrix. 
%Furthermore, proofs are remarkably simple.
We test our method on synthetic data and a high-throughput genomic data set 
about riboflavin production rate, made publicly available by~\cite{BuhlmannBio}.
\end{abstract}

%\begingroup
%\hypersetup{linktocpage}
%\tableofcontents
%\endgroup
%==============================================================
\section{Introduction}

It is widely recognized that modern statistical problems are
increasingly high-dimensional, i.e. require estimation of more
parameters than the number of observations/samples. Examples abound
from signal processing \cite{CSMRI}, to genomics \cite{peng2010regularized}, collaborative
filtering \cite{KBV09} and so on.  A number of successful estimation
techniques have been developed over the last ten years to tackle these
problems. A  widely applicable approach consists in optimizing a suitably regularized 
likelihood function. Such estimators are, by necessity, non-linear
and non-explicit (they are solution of certain optimization problems).

The use of non-linear parameter estimators
comes at a price. In general, it is impossible to characterize the
distribution of the estimator. This situation is very different from
the one of classical statistics in which either exact characterizations are
available, or asymptotically exact ones can be derived from large
sample theory \cite{van2000asymptotic}. This has an important and very concrete
consequence. In classical statistics, generic and well accepted 
procedures are available for characterizing the uncertainty associated
to a certain parameter estimate in terms of
confidence intervals or $p$-values \cite{wasserman2004all,lehmann2005testing}. However, no analogous procedures
exist in high-dimensional statistics.

In this paper we develop a computationally efficient procedure for constructing confidence
intervals and $p$-values for a broad class of high-dimensional regression
problems. The salient features of our procedure are:
\begin{enumerate}
\item[$(i)$] Our approach guarantees nearly optimal confidence interval
sizes and testing power. 
\item[$(ii)$] It is the first one to achieve this goal
under essentially no  assumptions beyond the standard conditions for high-dimensional
consistency. 
\item[$(iii)$] It allows for a streamlined analysis with
respect to earlier work in the same area.
\end{enumerate}
For the sake of clarity, we will focus our presentation on the case of
linear regression, under Gaussian noise. Section \ref{sec:NonGaussian} 
provides a detailed study of the case of non-Gaussian noise.
A preliminary report on our results was presented in NIPS 2013
\cite{ConfidenceNIPS2013}, which also discusses generalizations of the
same approach to generalized linear models, and regularized
maximum likelihood estimation.

In a linear regression model, we  are given $n$ i.i.d. pairs 
$(Y_1,X_1), (Y_2,X_2), \dots, (Y_n,X_n)$, with vectors
$X_i \in \reals^p$ and response variables $Y_i$ given by
\begin{eqnarray}\label{eqn:regression}
Y_i \,=\, \<\theta_0,X_i\> + W_i\, ,\;\;\;\;\;\;\;\; W_i\sim
\normal(0,\sigma^2)\, .
\end{eqnarray}
Here $\theta_0 \in \reals^p$ and $\<\,\cdot\,,\,\cdot\,\>$ is the standard scalar product in
$\reals^p$. 
In matrix form,
letting  $Y = (Y_1,\dotsc,Y_n)^\sT$ and denoting by $\bX$ the design matrix with
rows $X_1^\sT,\dotsc, X_n^\sT$, we have
\begin{eqnarray}\label{eq:NoisyModel}
Y\, =\, \bX\,\theta_0+ W\, ,\;\;\;\;\;\;\;\; W\sim
\normal(0,\sigma^2 \id_{n\times n})\, .
\end{eqnarray}
The goal is to estimate the unknown (but fixed) vector of parameters $\theta_0 \in \reals^p$.

In the classic setting, $n\gg p$ and the estimation method of choice is ordinary
least squares yielding $\htheta^{\rm OLS} =
(\bX^{\sT}\bX)^{-1}\bX^{\sT}Y$. In particular $\htheta^{\rm OLS}$ is Gaussian
with mean $\theta_0$  and covariance $\sigma^2 (\bX^{\sT}\bX)^{-1}$. 
This directly allows to construct confidence intervals\footnote{For
instance, letting $Q\equiv (\bX^{\sT}\bX/n)^{-1}$, $\htheta^{\rm
  OLS}_i-1.96\sigma\sqrt{Q_{ii}/n},\htheta^{\rm OLS}_i+1.96\sigma\sqrt{Q_{ii}/n}]$ is
a $95\%$ confidence interval \cite{wasserman2004all}.}.

In the high-dimensional setting where $p>n$, the matrix $(\bX^{\sT}\bX)$ is
rank deficient and one has to resort to biased estimators.
A particularly successful approach is the  LASSO~\cite{Tibs96,BP95} which
promotes sparse reconstructions through an $\ell_1$ penalty:
\begin{align}
\htheta^{n}(Y,\bX;\lambda) \equiv \arg\min_{\theta\in\reals^p}
\Big\{\frac{1}{2n}\|Y-\bX\theta\|^2_2+\lambda\|\theta\|_1\Big\}\, . \label{eq:LASSOEstimator}
\end{align}
In case the right hand side has more than one minimizer, one of them
can be selected arbitrarily for our purposes.
We will often omit the arguments $Y$, $\bX$, as they are clear from
the context. 

We denote by $S \equiv \supp(\theta_0)$ the support of
$\theta_0 \in \reals^p$, defined as
$$\supp(\theta_0) \equiv \{i\in [p]:\, \theta_{0,i} \neq 0\}\,,$$
where we use the notation $[p] = \{1,\dotsc, p\}$.
We further let $s_0\equiv |S|$. A copious theoretical literature \cite{CandesTao,BickelEtAl,buhlmann2011statistics}
shows that, under suitable assumptions on $\bX$, the 
LASSO is nearly as accurate as if the support $S$ was known \emph{a
  priori}. Namely, for $n= \Omega(s_0\log p)$, we have  $\|\htheta^n-\theta_0\|_2^2 = O(s_0\sigma^2(\log p)/n)$.

\begin{algorithm}[t]
\caption*{{\bf Table 1:} Unbiased estimator for $\theta_0$ in high-dimensional linear regression models}
\begin{algorithmic}[1]

\REQUIRE Measurement vector $y$, design matrix $\bX$, parameters $\lambda$, $\coh$.

\ENSURE Unbiased estimator $\htheta^u$.

\STATE Let $\htheta^n=\htheta^n(Y,\bX;\lambda)$ be the LASSO estimator as per Eq.~\eqref{eq:LASSOEstimator}.

\STATE Set $\hSigma \equiv (\bX^\sT \bX)/n$.

\FOR{$i = 1, 2, \dotsc, p$} 

\STATE  Let $m_i$ be a solution of the convex program:
\begin{eqnarray}
\begin{split}\label{eq:optimization}
&\text{minimize } \quad \, m^\sT \hSigma m\\
&\text{subject to} \quad \|\hSigma m - e_i \|_{\infty} \le \coh\,,
\end{split}
\end{eqnarray}
where $e_i \in \reals^p$ is the vector with one at the $i$-th position and zero everywhere else.
\ENDFOR

\STATE Set $M = (m_1,\dotsc,m_p)^\sT$. If any of the above problems is not feasible, then set $M= \id_{p\times p}$. 

\STATE Define the estimator $\htheta^u$ as follows:
\begin{eqnarray}
\htheta^u = \htheta^n(\lambda) + \frac{1}{n}\, M \bX^\sT(Y - \bX \htheta^n(\lambda)) \label{eq:hthetau}
\end{eqnarray}
\end{algorithmic}
\end{algorithm}
As mentioned above, these remarkable properties come at a price. Deriving an exact
characterization for the distribution of $\htheta^n$ is not tractable in general, and hence there is no simple procedure to construct
confidence intervals and $p$-values. A closely related property is that
$\htheta^n$ is biased, an unavoidable property in high dimension,
since a point estimate  $\htheta^n\in\reals^p$ must be produced from
data in lower dimension $Y\in\reals^n$, $n<p$. We refer to
Section \ref{sec:BiasDiscussion} for further discussion of this point.

In order to overcome this challenge, we construct a de-biased
estimator from the LASSO solution.
The de-biased estimator is given by the simple formula $\htheta^u = \htheta^n + (1/n)\, M \bX^\sT(Y - \bX
\htheta^n)$, as in Eq. (\ref{eq:hthetau}). The basic intuition is
that  $\bX^\sT(Y - \bX \htheta^n)/(n\lambda)$ is  a subgradient
of the $\ell_1$ norm at the LASSO solution $\htheta^n$. By
adding a term proportional to this subgradient, our procedure
compensates the bias introduced by the $\ell_1$ penalty in the LASSO.

We will prove in Section \ref{sec:Debiased} that $\htheta^u$ is
approximately Gaussian, with mean $\theta_0$ and covariance
$\sigma^2(M\hSigma M)/n$, where $\hSigma = (\bX^{\sT}\bX/n)$ is
the empirical covariance of the feature vectors. This result allows to
construct confidence intervals and $p$-values in complete analogy with
classical statistics procedures. For instance, letting $Q \equiv M\hSigma M$,
$[\htheta^u_i-1.96\sigma\sqrt{Q_{ii}/n},
\htheta^u_i+1.96\sigma\sqrt{Q_{ii}/n}]$ is a $95\%$ confidence
interval. The size of this interval is of order
$\sigma/\sqrt{n}$, which is the optimal (minimum) one, i.e. the same
that would have been obtained by knowing \emph{a priori} the support
of $\theta_0$. In practice the noise standard deviation is
not known, but $\sigma$ can be replaced by any consistent estimator
$\hsigma$ (see Section \ref{sec:Inference} for more details on this).

A key role is played by the matrix $M \in \reals^{p\times p}$ whose function  is to
`decorrelate' the columns of $\bX$. We propose here to construct
$M$ by solving a convex program that aims at optimizing two
objectives. One one hand, we try to control 
$|M \hSigma - \id |_{\infty}$ (here and below $|\,\cdot\,|_{\infty}$
denotes the entrywise $\ell_{\infty}$ norm) which --as shown in
Theorem \ref{thm:main_thm}-- controls the
non-Gaussianity and bias of $\htheta^u$. On the other, we minimize
$[M\hSigma M]_{i,i}$, for each $i\in [p]$, which controls the variance of
$\htheta_{i}^u$. 

%We refer to Section~\ref{sec:opt_distributed} for further remarks on optimization problem~\eqref{eq:optimization}
%which are useful for both implementation purposes and theoretical analysis.

The idea of constructing a de-biased estimator of the form $\htheta^u = \htheta^n + (1/n)\, M \bX^\sT(Y - \bX
\htheta^n)$ was used by the present authors in
\cite{javanmard2013hypothesis}, that suggested the choice $M=c\Sigma^{-1}$, with $\Sigma =
\E\{X_1X_1^{\sT}\}$ the population covariance matrix and $c$ a positive constant. A simple estimator
for $\Sigma$ was proposed for sparse covariances, but asymptotic
validity and optimality were proven only for uncorrelated Gaussian designs
(i.e. Gaussian $\bX$ with $\Sigma = \id$). 
Van de Geer, B\"ulhmann, Ritov and Dezeure \cite{GBR-hypothesis} used the same
construction with $M$ an estimate of $\Sigma^{-1}$ which is appropriate for
sparse inverse covariances. These authors prove semi-parametric
optimality in a non-asymptotic setting, provided the sample size is at
least $n = \Omega((s_0\log p)^2)$.

From a technical point of view, our proof starts from a simple decomposition
of the de-biased estimator $\htheta^u$ into a Gaussian part and an
error term, already used in \cite{GBR-hypothesis}.
However --departing radically from earlier work-- we realize that $M$ need not be a
good estimator of $\Sigma^{-1}$ in order for the de-biasing procedure to work. 
We instead set $M$ as to minimize the error term and the variance of
the Gaussian term.
As a consequence of this choice, our approach applies to general 
covariance structures $\Sigma$. By contrast, earlier approaches applied only
 to sparse $\Sigma$, as in \cite{javanmard2013hypothesis}, or
sparse $\Sigma^{-1}$ as in \cite{GBR-hypothesis}. 
The only assumptions we
make on $\Sigma$ are the standard compatibility conditions
required for high-dimensional consistency
\cite{buhlmann2011statistics}.
A detailed comparison of our results with the ones of
\cite{GBR-hypothesis} can be found in Section
\ref{sec:GBR-comparison}.

Our presentation is organized as follows.
\begin{description}
\item[Section \ref{sec:DebiasedGen}] considers a general debiased
  estimator of the form $\htheta^u = \htheta^n + (1/n)\, M \bX^\sT(Y - \bX
\htheta^n)$. We introduce a figure of merit of the pair $M,\bX$, termed the
generalized coherence parameter $\coh_*(\bX;M)$. We show that, if the
generalized coherence is small, then
the debiasing procedure is effective (for a given deterministic
design), see Theorem \ref{thm:deterministic}.

We then turn to random designs, and show that the generalized
coherence parameter can be made as small as $\sqrt{(\log p)/n}$,
though a convex optimization procedure for computing $M$. This results
in a bound on the bias of $\htheta^u$, cf. Theorem \ref{thm:main_thm}: the largest entry of the bias is of order
$(s_0\log p)/n$. This must be compared with the standard deviation of
$\htheta^u_i$, which is of order $\sigma/\sqrt{n}$.
The conclusion is that, for $s_0 = o(\sqrt{n}/\log p)$, the bias of
$\htheta^u$ is negligible.
\item[Section \ref{sec:Inference}] applies these distributional
  results to deriving confidence intervals and hypothesis testing
  procedures for low-dimensional marginals of $\htheta_0$. The basic
  intuition is that $\htheta^u$ is approximately
  Gaussian with mean $\theta_0$, and known covariance structure. Hence
  standard optimal tests can be applied. 

We prove a general lower bound on the power of our testing procedure,
in Theorem \ref{thm:error-power}.
In the special case of Gaussian random designs with i.i.d. rows, we
can compare this with the upper bound proved in
\cite{javanmard2013hypothesis},
cf. Theorem \ref{thm:GeneralUpperBound}. As a consequence, the
asymptotic efficiency of our approach is constant-optimal. Namely, it
is lower bounded by a constant $1/\eta_{\Sigma,s_0}$ which is bounded
away from $0$, cf. Theorem \ref{thm:optimality}. (For instance
$\eta_{\id,s_0}=1$, and $\eta_{\Sigma,s_0}$ is always upper bounded by
the condition number of $\Sigma$.)
\item[Section \ref{sec:NonGaussian}] uses the a central limit theorem
  for triangular arrays to generalize the above results to
  non-Gaussian noise.
\item[Section \ref{sec:simulation}] illustrates the above results through
  numerical simulations both on synthetic and on real data.
\end{description}
Note that our proofs  require stricter sparsity $s_0$ (or larger sample
size $n$) than required for consistent estimation. We assume $s_0 =
o(\sqrt{n}/\log p)$ instead of $s_0 = o(n/\log p)$
\cite{Dantzig,BickelEtAl,buhlmann2011statistics}.
The same assumption is made in \cite{GBR-hypothesis}, on top of
additional assumptions on the sparsity of $\Sigma^{-1}$.

It is currently an open question  whether successful hypothesis
testing can be performed under the weaker assumption  $s_0 = o(n/\log
p)$. We refer to  \cite{javanmard2013nearly} for preliminary work in
that direction.
The barrier at $s_0 =o(\sqrt{n}/\log p)$ is possibly related to an
analogous assumption that arises in Gaussian graphical models
selection \cite{ren2013asymptotic}.

%%%%%%%%%%%%%%%%%%%%%%
\subsection{Further related work}

The theoretical literature on high-dimensional statistical models is
vast and rapidly growing.
Estimating sparse linear regression models is the most studied
problem in this area, and a source of many fruitful ideas. 
Limiting ourselves to linear regression, earlier work investigated prediction error
\cite{GreenshteinRitov}, model selection properties
\cite{MeinshausenBuhlmann,zhao,Wainwright2009LASSO,candes2009near}, $\ell_2$
consistency \cite{CandesTao,BickelEtAl}. .
Of necessity, we do not provide a complete set of references, and instead
refer the reader to \cite{buhlmann2011statistics} for an in-depth introduction to this area.

The problem of quantifying statistical significance in high-dimensional
parameter estimation 
is, by comparison, far less understood. 
Zhang and Zhang \cite{ZhangZhangSignificance}, and B\"uhlmann
\cite{BuhlmannSignificance} proposed hypothesis testing procedures
under restricted eigenvalue or compatibility conditions
\cite{buhlmann2011statistics}. These papers provide deterministic
guarantees but --in order to achieve a certain target significance
level $\alpha$ and
power $1-\beta$-- they require
$|\theta_{0,i}|\ge  c\,\max\{\sigma s_0 \log p/\, n,
\sigma/\sqrt{n}\}$. The best lower bound
\cite{javanmard2013hypothesis} shows that any such test 
requires instead $|\theta_{0,i}|\ge c(\alpha,\beta)\sigma/\sqrt{n}$. 
(The lower bound of \cite{javanmard2013hypothesis} is reproduced as
Theorem \ref{thm:GeneralUpperBound} here, for the reader's convenience.)

In other words, the guarantees of
\cite{ZhangZhangSignificance,BuhlmannSignificance}
can be suboptimal by a factor as large as  $\sqrt{s_0}$. Equivalently, in
order for the coefficient $\theta_{0,i}$ to be detectable with appreciable probability,
it needs to be larger than the overall $\ell_2$ error.
Here we will propose a test that --for random designs--
achieves  significance
level $\alpha$ and
power $1-\beta$ for $|\theta_{0,i}|\ge c'(\alpha,\beta)\sigma/\sqrt{n}$.

Lockhart et al. \cite{lockhart2013significance} develop a test for the
hypothesis that a newly added coefficient along the LASSO regularization path is
irrelevant.  This however does not allow to test arbitrary coefficients
at a given value of $\lambda$, which is instead the problem addressed
in this paper.
These authors further assume that the current LASSO support contains the actual 
support $\supp(\theta_0)$ and that
the latter has bounded size.

Belloni, Chernozhukov and  
collaborators~\cite{belloni2011inference,Belloni-Logistic} consider
inference in a regression model with high-dimensional data.
In this model the response variable relates to a scalar main regressor
and a $p$-dimensional control vector.
The main regressor is of primary interest and the control vector is treated as nuisance component.
Assuming that the control vector is $s_0$-sparse, the authors propose
a method to construct confidence regions for the parameter of interest
under the sample size requirement $(s_0^2 \log p)/n \to 0$.
The proposed method is shown to attain the semi-parametric efficiency bounds for this class of models.
The key modeling assumption in this paper is that the scalar regressor
of interest is random, and depends linearly on  the $p$-dimensional
control vector, with a sparse coefficient vector (with sparsity again
of order $o(\sqrt{n/\log p})$. This assumption is closely related to
the sparse inverse covariance assumption of \cite{GBR-hypothesis}
(with the difference that only one regressor is tested).

Finally, resampling methods for hypothesis testing were studied in 
\cite{MeinshausenBuhlmannStability,minnier2011perturbation}.
These methods are perturbation-based procedures to 
approximate the distribution
of a general class of penalized parameter estimates for the case $n> p$.
The idea is to consider the minimizer of a stochastically perturbed
version of the regularized objective function, call it $\tilde{\theta}$, and characterize the limiting distribution
of the regularized estimator $\htheta$ in terms of the distribution of $\tilde{\theta}$. In order to estimate the latter,
a large number of random samples of the perturbed objective function are generated,
and for each sample the minimizer is computed. Finally the theoretical distribution
of $\tilde{\theta}$ is approximated by the empirical distribution of these minimizers.

After the present paper was submitted for publication, we became aware
that B\"uhlmann and Dezeure \cite{BuhlmannUnpublished} had independently worked on
similar ideas.

%%%%%%%%%%%%%%%%%%%%%%
\subsection{Preliminaries and notations}
In this section we introduce some basic definitions used
throughout the paper, starting with simple notations.

For a matrix $A$ and set of indices $I,J$, we let $A_{I,J}$ denote the submatrix formed 
by the rows in $I$ and columns in $J$. Also,
$A_{I,\cdot}$ (resp. $A_{\cdot,I}$) denotes the submatrix
containing just the rows (reps. columns) in $I$. 
Likewise, for a vector $v$, $v_I$ is the restriction
of $v$ to indices in $I$. 
We use the shorthand $A^{-1}_{I,J} = (A^{-1})_{I,J}$. In particular, $A^{-1}_{i,i} = (A^{-1})_{i,i}$.
The maximum and the minimum singular values of $A$ are respectively denoted 
by $\sigma_{\max}(A)$ and $\sigma_{\min}(A)$.
We write $\|v\|_p$ for the standard $\ell_p$ norm of a vector $v$, i.e., $\|v\|_p = (\sum_{i} |v_i|^p)^{1/p}$.
and  $\|v\|_0$ for the number of nonzero entries of  $v$. 
For a matrix $A$, $\|A\|_p$ is the $\ell_p$ operator norm, and $|A|_p$ is the elementwise $\ell_p$ norm.
%We use $e_i$ to refer to the $i$-th standard basis element, e.g., 
%$e_1 = (1,0,\dotsc,0)$.
For a vector $v$, $\supp(v)$ represents
the positions of nonzero entries of $v$. 
 Throughout, $\Phi(x) \equiv \int_{-\infty}^x e^{-t^2/2} \de t/\sqrt{2\pi}$ denotes the CDF of the 
standard normal distribution.
Finally, \emph{with high
probability} (w.h.p) means with probability converging to
one as $n \to \infty$.

We  let $\hSigma \equiv \bX^\sT\bX/n$ be the sample covariance matrix.
For  $p > n$, $\hSigma$ is always singular. However, we may require $\hSigma$ to be nonsingular for a restricted set
of directions. 
\begin{definition}
Given a symmetric matrix $\hSigma\in\reals^{p\times p}$ and a set
$S\subseteq [p]$, the corresponding \emph{compatibility constant}
is defined as
\begin{align}
\phi^2(\hSigma,S) \equiv
\min_{\theta\in\reals^p}\Big\{\frac{|S|\,\<\theta,\hSigma\,\theta\>}{\|\theta_S\|_1^2} :\;\;
\theta\in\reals^p, 
\;\; \|\theta_{S^c}\|_1\le 3\|\theta_S\|_1\Big\}\, .
\end{align}
We say that $\hSigma\in\reals^{p\times p}$ satisfies the
\emph{compatibility condition} for the  set $S\subseteq [p]$, with
constant $\phi_0$ if $\phi(\hSigma,S)\ge \phi_0$.
We say that it holds for the design matrix $\bX$, if it holds for
$\hSigma = \bX^{\sT}\bX/n$.
\end{definition}
In the following, we shall drop the argument $\hSigma$ if clear from
the context.
Note that a slightly more general definition is used normally
\cite[Section 6.13]{buhlmann2011statistics},
whereby  the condition $\|\theta_{S^c}\|_1 \le 3 \|\theta_S\|_1$, is
replaced by  $\|\theta_{S^c}\|_1 \le L \|\theta_S\|_1$. The resulting
constant $\phi(\hSigma,S,L)$ depends on $L$. For the sake of simplicity, we restrict
ourselves to the case $L=3$.

\begin{definition}
The \emph{sub-gaussian norm} of a random variable $X$, denoted by $\|X\|_{\psi_2}$, is defined as
\[
\|X\|_{\psi_2} = \sup_{q\ge 1}\, q^{-1/2} (\E |X|^q)^{1/q}\,.
\]
For a random vector $X \in \reals^n$, its sub-gaussian norm is defined as $\|X\|_{\psi_2} = \sup_{x\in S^{n-1}} \|\<X,x\>\|_{\psi_2}$,
where $S^{n-1}$ denotes the unit sphere in $\reals^n$.
\end{definition}

\begin{definition}
The \emph{sub-exponential norm} of a random variable $X$, denoted by $\|X\|_{\psi_1}$, is defined as
\[
\|X\|_{\psi_1} = \sup_{q\ge 1}\, q^{-1} (\E |X|^q)^{1/q}\,.
\]
For a random vector $X \in \reals^n$, its sub-exponential norm is defined as $\|X\|_{\psi_1} = \sup_{x\in S^{n-1}} \|\<X,x\>\|_{\psi_1}$, where $S^{n-1}$ denotes the unit sphere in $\reals^n$.
\end{definition}
%

%=========================================================

\section{Compensating the bias of the LASSO}
\label{sec:DebiasedGen}

In this section we present our characterization of the de-biased
estimator
$\htheta^u$ (subsection \ref{sec:Debiased}). 
This characterization also clarifies in what sense the LASSO estimator
is biased. We discuss this point in subsection \ref{sec:BiasDiscussion}.

\subsection{A de-biased estimator for $\theta_0$}
\label{sec:Debiased}

As emphasized above, our approach is based on a de-biased estimator
defined in Eq.~\eqref{eq:hthetau},
and on its distributional properties. In order to clarify the latter,
it is convenient to begin with a slightly broader setting and 
consider a general debiasing procedure that makes use of a an arbitrary
$M\in\reals^{p\times p}$. Namely, we define
\begin{align}
\htheta^*(Y,\bX;M,\lambda)= \htheta^n(\lambda) + \frac{1}{n}\, M
\bX^\sT(Y - \bX \htheta^n(\lambda)) \, .
\label{eq:GeneralDebiased}
\end{align}
For notational simplicity, we shall omit the arguments
$Y,\bX,M,\lambda$ unless they are required for clarity.
The quality of this debiasing procedure depends of course on the choice of $M$,
as well as on the design $\bX$. We characterize the pair $(\bX,M)$  by
the following figure of merit.
\begin{definition}\label{def:Coherence}
Given the pair $\bX\in\reals^{n\times p}$ and $M\in \reals^{p\times
  p}$, let $\hSigma = \bX^{\sT}\bX/n$ denote the associated sample
covariance. Then, the \emph{generalized coherence parameter of $\bX,M$},
denoted by $\coh_*(\bX;M)$, is
\begin{align}
\coh_*(\bX;M) \equiv \big|M\hSigma-\id\big|_{\infty}\, .
\end{align}
The \emph{minimum (generalized) coherence} of $\bX$ is $\com(\bX) =
\min_{M\in\reals^{p\times p}} \coh_*(\bX;M)$. We denote by $M_{\rm
  min}(\bX)$ any minimizer of $\coh_*(\bX;M)$.
\end{definition}
Note that the minimum coherence can be computed efficiently since
$M\mapsto \coh_*(\bX;M)$ is a convex function (even more, the
optimization problem is a linear program).

The motivation for our terminology can be grasped by considering the
following special case.
\begin{remark}
Assume that the columns of $\bX$ are normalized to have
$\ell_2$ norm equal to $\sqrt{n}$ (i.e. $\|\bX e_i\|_2 = \sqrt{n}$ for
all $i\in [p]$), and $M= \id$. Then $(M\hSigma-\id)_{i,i} = 0$, and the maximum
$|M\hSigma-\id|_{\infty} = \max_{i\neq j}|(\hSigma)_{ij}|$. In other words
$\coh(\bX;\id)$ is the maximum normalized scalar product between
distinct columns of $\bX$:
\begin{align}
\coh_*(\bX;\id) = \frac{1}{n}\max_{i\neq j} \big|\<\bX e_i, \bX e_j\>\big|\, . \label{eq:StdCoherence}
\end{align}
\end{remark}
The quantity (\ref{eq:StdCoherence}) is known as the \emph{coherence
  parameter} of the matrix $\bX/\sqrt{n}$ and was first defined in the context
of approximation theory by Mallat and Zhang \cite{mallat1993matching}, and  by Donoho
and Huo \cite{donoho2001uncertainty}.

Assuming, for the sake of
simplicity, that the columns of $\bX$ are normalized so that 
$\|\bX e_i\|_2 = \sqrt{n}$, a small value of the coherence parameter $\coh_*(\bX;\id)$ means
that the columns of $\bX$ are roughly orthogonal. 
We emphasize however that $\coh_*(\bX;M)$ can be much smaller than its
classical coherence parameter $\coh_*(\bX;\id)$. For instance, $\coh_*(\bX;\id)=0$ if and only if $\bX/\sqrt{n}$
is an orthogonal matrix. On the other hand, $\com(\bX) = 0$ if and only if $\bX$ has rank\footnote{Of
course this example requires $n\ge p$. It is the
simplest example that illustrates the difference between coherence and
generalized coherence, and it is not hard to find related examples
with $n<p$.} $p$.

The following theorem is a slight generalization of a result of
\cite{GBR-hypothesis}. Let us emphasize that it applies to deterministic
design matrices $\bX$.
\begin{thm}\label{thm:deterministic}
Let $\bX\in \reals^{n\times p}$ be any (deterministic)
design matrix, and $\htheta^*= \htheta^*(Y,\bX;M,\lambda)$ be a
general debiased 
estimator as per Eq.~(\ref{eq:GeneralDebiased}).
Then, setting $Z = M\bX^{\sT}W/\sqrt{n}$, we have
\begin{align}
\sqrt{n} (\htheta^* - \theta_0) = Z + \Delta\,, \quad 
Z  \sim \normal(0,\sigma^2 M\hSigma M^\sT)\,, \quad
\Delta = \sqrt{n} (M\hSigma-\id) (\theta_0 - \htheta^n)\,. \label{eq:GeneralRepresentation}
\end{align}
Further, assume that $\bX$ satisfies the compatibility condition for
the set $S=\supp(\theta_0)$, $|S|\le s_0$, with constant $\phi_0$, and has generalized coherence parameter 
$\mu_*=\mu_*(\bX;M)$, and let $K\equiv
\max_{i\in[p]} (\bX^{\sT}\bX/n)_{ii}$. Then, letting $\lambda = \sigma\sqrt{(c^2\log
p)/n}$, we have
\begin{align}
\prob\Big(\|\Delta\|_{\infty} \ge\frac{4 c \mu_*\sigma s_0}{\phi_0^2} 
\sqrt{\log p}  \Big) \le 2p^{-c_0} \,,\;\;\;\; c_0 = \frac{c^2}{32K} -1\, . \label{eq:FirstBoundDelta}
\end{align}
Further, if $M = M_{\rm min}(\bX)$ minimizes the convex cost function
$|M\hSigma-\id|_{\infty}$, then $\mu_*$ can be replaced by $\com(\bX)$
in Eq.~(\ref{eq:FirstBoundDelta}).
\end{thm}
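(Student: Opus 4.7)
The plan is to unfold the definition of $\htheta^*$ algebraically, obtain the Gaussian-plus-error decomposition essentially by inspection, and then bound the error term $\Delta$ via H\"older's inequality combined with a standard LASSO $\ell_1$-consistency estimate.

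First I would substitute $Y = \bX\theta_0 + W$ into the definition \eqref{eq:GeneralDebiased} and rearrange:
\begin{align*}
\htheta^* - \theta_0 = (\htheta^n-\theta_0) + \tfrac{1}{n} M\bX^\sT\bX(\theta_0 - \htheta^n) + \tfrac{1}{n}M\bX^\sT W = (M\hSigma - \id)(\theta_0 - \htheta^n) + \tfrac{1}{n}M\bX^\sT W.
\end{align*}
Multiplying by $\sqrt{n}$ identifies the two terms with $\Delta$ and $Z$. Since $W\sim\normal(0,\sigma^2 \id_{n\times n})$ and $Z=M\bX^\sT W/\sqrt{n}$ is a deterministic linear image of $W$, the law of $Z$ is exactly $\normal(0,\sigma^2 M\hSigma M^\sT)$. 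This gives \eqref{eq:GeneralRepresentation}.

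For the tail bound on $\|\Delta\|_\infty$, H\"older's inequality applied row-by-row gives
\begin{align*}
\|\Delta\|_{\infty} = \sqrt{n}\,\|(M\hSigma - \id)(\theta_0-\htheta^n)\|_{\infty} \le \sqrt{n}\,|M\hSigma-\id|_{\infty}\|\htheta^n-\theta_0\|_{1} = \sqrt{n}\,\mu_*\,\|\htheta^n-\theta_0\|_{1}.
\end{align*}
So it suffices to bound $\|\htheta^n-\theta_0\|_1$. Here I would invoke the standard LASSO $\ell_1$-consistency result under the compatibility condition (e.g. B\"uhlmann--van de Geer, Thm.~6.1): on the event $\event = \{4\|\bX^\sT W/n\|_\infty \le \lambda\}$, one has $\|\htheta^n-\theta_0\|_1 \le 4\lambda s_0/\phi_0^2$. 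Plugging in $\lambda = \sigma\sqrt{(c^2\log p)/n}$ yields $\|\Delta\|_\infty \le 4c\mu_*\sigma s_0\sqrt{\log p}/\phi_0^2$ on $\event$.

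To control $\prob(\event^c)$, each coordinate $(\bX^\sT W/n)_i = \<\bX e_i, W\>/n$ is centered Gaussian with variance $\sigma^2\hSigma_{ii}/n \le \sigma^2 K/n$. The standard Gaussian tail together with a union bound over $i\in[p]$ gives
\begin{align*}
\prob\bigl(\|\bX^\sT W/n\|_\infty > \lambda/4\bigr) \le 2p\,\exp\!\bigl(-n\lambda^2/(32\sigma^2 K)\bigr) = 2\,p^{\,1-c^2/(32K)} = 2p^{-c_0}\,,
\end{align*}
which is exactly the claimed bound. The final assertion is immediate: if $M=M_{\rm min}(\bX)$, then by definition $\mu_*(\bX;M)=\com(\bX)$, so the same inequality holds with $\mu_*$ replaced by $\com(\bX)$.

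I do not anticipate any serious obstacle: the decomposition is a one-line computation, the H\"older bound is routine, and the $\ell_1$-consistency input is a well-established off-the-shelf lemma. The only mildly delicate point is keeping the constants aligned (the factor $4$ in front of $\mu_*$ and the $32K$ in $c_0$), which is dictated by choosing the LASSO good event as $\{4\|\bX^\sT W/n\|_\infty \le \lambda\}$ so as to match both the standard $\ell_1$ bound $4\lambda s_0/\phi_0^2$ and the Gaussian tail exponent.
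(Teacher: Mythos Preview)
Your proposal is correct and follows essentially the same route as the paper: the algebraic decomposition, the H\"older bound $\|\Delta\|_\infty\le\sqrt{n}\,\mu_*\|\htheta^n-\theta_0\|_1$, and the appeal to the B\"uhlmann--van de Geer $\ell_1$-consistency bound are exactly what the paper does. The only cosmetic difference is that the paper cites \cite[Lemma~6.2]{buhlmann2011statistics} for the probability of the good event, whereas you spell out the Gaussian tail plus union bound directly---these are the same computation and yield the same constant $c_0$.
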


The above theorem decomposes the estimation error $(\htheta^*-\theta_0)$ 
into a zero mean Gaussian term $Z/\sqrt{n}$ and a bias term $\Delta/\sqrt{n}$
whose maximum entry is bounded as per Eq.~(\ref{eq:FirstBoundDelta}). 
This estimate on $\|\Delta\|_{\infty}$ depends on the design matrix through two constants: the 
compatibility constant $\phi_0$ and the generalized coherence
parameter $\mu_*(\bX;M)$. The former is a well studied property of the
design matrix
\cite{buhlmann2011statistics,BuhlmannVanDeGeer}, 
and assuming $\phi_0$ of order one is nearly necessary for the LASSO to achieve optimal estimation
rate in high dimension. On the contrary, the definition of
$\mu_*(\bX;M)$ is a new contribution of the present paper.

The next theorem establishes that, for a natural probabilistic model
of the design matrix $\bX$, both $\phi_0$ and $\mu_*(\bX;M)$ can
be bounded with probability converging rapidly to one as
$n,p\to\infty$. Further, the bound  on $\mu_*(\bX,M)$ hold for the
special choice of $M$ that is constructed by Algorithm 1.
\begin{thm}\label{thm:event_thm}
Let $\Sigma\in\reals^{p\times p}$ 
be such that $\sigma_{\min}(\Sigma) \ge C_{\min} > 0$, and
$\sigma_{\max}(\Sigma) \le C_{\max} < \infty$,
and $\max_{i\in [p]}\Sigma_{ii}\le 1$.
Assume $\bX\Sigma^{-1/2}$ to have independent subgaussian  rows, with
zero mean and subgaussian norm $\|\Sigma^{-1/2} X_1\|_{\psi_2} = \kappa$, for some
constant $ \kappa \in(0, \infty)$.
\begin{enumerate}
\item[$(a)$] For $\phi_0,s_0,K\in\reals_{>0}$, let 
$\event_n=\event_n(\phi_0,s_0,K)$ be the event that the compatibility
condition holds for $\hSigma=(\bX^{\sT}\bX/n)$,  
for all sets $S\subseteq [p]$, $|S|\le s_0$  with constant $\phi_0>0$, and
that $\max_{i\in [p]}\, \hSigma_{i,i} \le K$. Explicitly 
\begin{align}
\event_n(\phi_0,s_0,K) \equiv\Big\{\bX\in\reals^{n\times
  p}:\,\;\min_{S:\; |S|\le s_0}\phi(\hSigma,S)\ge \phi_0\, ,
\max_{i\in [p]}\, \hSigma_{i,i} \le K, \;\; \hSigma = (\bX^{\sT}\bX/n)\Big\}\, .
\end{align}
Then there exists $c_*\le 2000$ such that the following happens. If
$n\ge\nu_0\,  s_0\log (p/s_0)$, $\nu_0 \equiv 4c_*(C_{\rm
  max}\kappa^4/C_{\rm min})$,  $\phi_0 = C_{\rm min}^{1/2}/2$, and $K\ge
1+20\kappa^2\sqrt{(\log p)/n}$, then
\begin{align}
\prob\big(\bX\in \cE_n(\phi_0,s_0,K) \big) \ge 1-4\, e^{-c_1n}\,
,\;\;\;\;\;\;\; c_1\equiv  \frac{1}{c_*\kappa^4}\, . \label{eq:BoundEvA}
\end{align}
\item[$(b)$]  For $a>0$,  $\coev_n=\coev_n(a)$ be the event that the problem
  (\ref{eq:optimization}) is feasible for $\coh = a\sqrt{(\log p)
    /n}$, or equivalently
\begin{align}
\coev_n(a) \equiv \Big\{\bX\in\reals^{n\times p}:\; \com(\bX) <
a\sqrt{\frac{\log p}{n}}\Big\}\, .
\end{align}
Then, for $n\ge  a^2C_{\min}\log p/(4e^2C_{\max}\kappa^4) $
\begin{align}
\prob\big(\bX\in\coev_n(a)\big) \ge 1-2\, p^{-c_2}\, ,\;\;\;\;\;\;\;
c_2\equiv  \frac{a^2 C_{\min}}{24e^2\kappa^4 C_{\max}} - 2\,.
\label{eq:BoundEvG}
\end{align}
\end{enumerate}
\end{thm}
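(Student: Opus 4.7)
I would prove parts (a) and (b) separately, since each requires different tools; in both cases the idea is to combine entrywise subgaussian/subexponential concentration with a union bound, but (a) additionally invokes a restricted-eigenvalue estimate for the compatibility constant.

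For part (a), write $\bX = \tbX\,\Sigma^{1/2}$ with $\tbX$ having independent rows of subgaussian norm $\kappa$. Then each $X_{k,i}$ is subgaussian with $\|X_{k,i}\|_{\psi_2}\le \kappa\sqrt{\Sigma_{ii}}\le\kappa$, so $X_{k,i}^2$ is subexponential with norm $O(\kappa^2)$. Bernstein's inequality gives $|\hSigma_{ii}-\Sigma_{ii}|\le 20\kappa^2\sqrt{(\log p)/n}$ with probability $\ge 1-2e^{-cn}$ for each $i$, and a union bound together with $\Sigma_{ii}\le 1$ gives $\max_i\hSigma_{ii}\le K$ on the prescribed event. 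For the compatibility statement I would appeal to a restricted-eigenvalue-type result for subgaussian designs (Rudelson--Zhou type): under $n\ge \nu_0 s_0\log(p/s_0)$ with $\nu_0=4c_*(C_{\max}\kappa^4/C_{\min})$, with probability $\ge 1-2e^{-c_1 n}$ one has $\theta^\sT\hSigma\theta\ge (C_{\min}/4)\|\theta\|_2^2$ for every $\theta$ in the compatibility cone $\{\|\theta_{S^c}\|_1\le 3\|\theta_S\|_1\}$ with $|S|\le s_0$. Applying Cauchy--Schwarz in the form $\|\theta_S\|_1^2\le |S|\|\theta\|_2^2$ then yields $\phi^2(\hSigma,S)\ge C_{\min}/4$, i.e.\ $\phi(\hSigma,S)\ge \sqrt{C_{\min}}/2=\phi_0$, uniformly in $S$. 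Combining the two events gives the bound (\ref{eq:BoundEvA}).

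For part (b), I would exhibit the explicit feasible point $M=\Sigma^{-1}$, which is well defined because $\sigma_{\min}(\Sigma)\ge C_{\min}>0$. With this choice $\hSigma m_i - e_i = (\hSigma-\Sigma)\Sigma^{-1}e_i$, so feasibility at tolerance $\coh=a\sqrt{(\log p)/n}$ amounts to the entrywise bound $|(\hSigma-\Sigma)\Sigma^{-1}|_\infty\le a\sqrt{(\log p)/n}$. The $(j,i)$-entry is the centered empirical mean of the products $(e_j^\sT X_k)(e_i^\sT\Sigma^{-1}X_k)$; each factor is subgaussian with norms at most $\kappa\sqrt{\Sigma_{jj}}$ and $\kappa\sqrt{(\Sigma^{-1})_{ii}}$, so the product is subexponential of norm $\lesssim \kappa^2\sqrt{C_{\max}/C_{\min}}$ (using $\Sigma_{jj}\le C_{\max}$ and $(\Sigma^{-1})_{ii}\le 1/C_{\min}$). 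Bernstein's inequality then yields the entrywise deviation bound at probability at least $1-2\exp(-c\,a^2 C_{\min}(\log p)/(\kappa^4 C_{\max}))$ under the stated lower bound on $n$ (which is precisely what keeps us inside the subgaussian tail of Bernstein), and a union bound over the $p^2$ entries absorbs an extra factor of $p^2$ in the exponent, giving the stated probability $1-2p^{-c_2}$ with $c_2$ as in (\ref{eq:BoundEvG}).

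The main obstacle is the uniform compatibility bound in part (a), since one must control the quadratic form $\theta^\sT\hSigma\theta$ simultaneously over all $\binom{p}{s_0}$ candidate supports and over an infinite-dimensional cone of directions. The cleanest path is to invoke an off-the-shelf RE-type result for subgaussian designs, so that the combinatorial enumeration and the chaining argument are packaged externally; this is also where the precise sample-size factor $\nu_0=4c_*(C_{\max}\kappa^4/C_{\min})$ and the exponential rate $c_1=1/(c_*\kappa^4)$ enter. Everything else is routine subgaussian or subexponential concentration, though the exact constants $c_1$ and $c_2$ in the two probability bounds require careful bookkeeping of Bernstein constants.
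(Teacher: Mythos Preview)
Your proposal is correct and follows essentially the same route as the paper. In part (a) the paper likewise splits $\event_n^c$ into the failure of compatibility and the failure of the diagonal bound, invokes Rudelson--Zhou to transfer the restricted-eigenvalue property from $\Sigma$ to $\hSigma$ (then passes to compatibility via the same Cauchy--Schwarz step $\|\theta_S\|_1^2\le |S|\,\|\theta_S\|_2^2$), and uses a subexponential Bernstein bound on $\hSigma_{ii}$; in part (b) the paper also takes $M=\Sigma^{-1}$ as the witness, writes the entries of $\Sigma^{-1}\hSigma-\id$ as averages of products of two subgaussian linear forms with $\psi_1$-norm $\lesssim \kappa^2\sqrt{C_{\max}/C_{\min}}$, applies Bernstein, and union-bounds over $p^2$ entries to obtain the exponent $c_2$ exactly as you wrote.
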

The proof of  this theorem
is given in Section~\ref{proof:thm_eventA} (for part $(a)$) and Section
\ref{proof:thm_eventB} (part $(b)$).

The proof that event $\event_n$ holds with high probability relies
crucially on a theorem by Rudelson and Zhou \cite[Theorem 6]{rudelson2011reconstruction}. 
Simplifying somewhat, the latter states that, if the restricted
eigenvalue condition of \cite{BickelEtAl} holds for the population
covariance $\Sigma$, then it holds with high probability for the sample
covariance $\hSigma$. (Recall that the restricted eigenvalue condition is
implied by a lower bound on the minimum singular value\footnote{Note, in
particular, at the cost of further complicating the last statement, 
the condition $\sigma_{\rm min}(\Sigma)=\Omega(1)$ can be further weakened.}, and that it
implies the compatibility condition \cite{BuhlmannVanDeGeer}.)

Finally, by putting together Theorem \ref{thm:deterministic} and Theorem
\ref{thm:event_thm},
we obtain the following conclusion.
\begin{thm}\label{thm:main_thm}
Consider the linear model~\eqref{eqn:regression} and let $\htheta^u$ be defined as per
Eq.~\eqref{eq:hthetau} in Algorithm 1, with $\mu =a\sqrt{(\log p)/n}$.
Then,  setting $Z = M\bX^{\sT}W/\sqrt{n}$, we have
\begin{align}
\sqrt{n} (\htheta^u - \theta_0) = Z + \Delta\,, \quad 
Z | \bX \sim \normal(0,\sigma^2 M\hSigma M^\sT)\,, \quad
\Delta = \sqrt{n} (M\hSigma-\id) (\theta_0 - \htheta^n)\,. \label{eq:GeneralRepresentationBis}
\end{align}
Further, under the assumptions of Theorem \ref{thm:event_thm}, 
and for $n\ge \max(\nu_0s_0\log (p/s_0), \nu_1\log p)$, 
$\nu_1 = \max(1600\kappa^4, a/4)$,
and 
$\lambda = \sigma\sqrt{(c^2\log
p)/n}$, we have
\begin{align}
\prob\left\{\|\Delta\|_{\infty} \ge
\Big(\frac{16ac\, \sigma}{C_{\rm min}} \Big)\frac{s_0\log p}{\sqrt{n}} \right\}\le  4\,e^{-c_1n}  + 4\,p^{-\tilde{c_0}\wedge c_2}\,.
\label{eq:TailBoundQuant}
\end{align}
where $\tilde{c_0} = (c^2/48)-1$ and $c_1, c_2$ are given by
Eqs.~(\ref{eq:BoundEvA}) and (\ref{eq:BoundEvG}). 

Finally, the tail bound
(\ref{eq:TailBoundQuant}) holds for any choice of $M$ that is only
function of the design matrix $\bX$, and satisfies the feasibility
condition in Eq.~(\ref{eq:optimization}),
i.e. $|M\hSigma-\id|_{\infty}\le \coh$.
\end{thm}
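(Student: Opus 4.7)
}

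The plan is to derive the decomposition \eqref{eq:GeneralRepresentationBis} by a direct algebraic identity, and then obtain the tail bound \eqref{eq:TailBoundQuant} by applying Theorem~\ref{thm:deterministic} on the intersection of the good events $\event_n$ and $\coev_n$ supplied by Theorem~\ref{thm:event_thm}.

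\textbf{Step 1: decomposition.} Substituting $Y = \bX\theta_0 + W$ into the definition \eqref{eq:hthetau} gives
\begin{align*}
\htheta^u - \theta_0 \;=\; (\htheta^n - \theta_0) + \frac{1}{n}M\bX^\sT\bX(\theta_0 - \htheta^n) + \frac{1}{n}M\bX^\sT W \;=\; (M\hSigma - \id)(\theta_0 - \htheta^n) + \frac{1}{n}M\bX^\sT W.
\end{align*}
Multiplying by $\sqrt{n}$ identifies $\Delta$ and $Z$. Conditional on $\bX$, $Z = M\bX^\sT W/\sqrt{n}$ is a linear function of the Gaussian noise $W \sim \normal(0,\sigma^2 \id_{n\times n})$, hence Gaussian with mean zero and covariance $\sigma^2 M\hSigma M^\sT$. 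This part is essentially the same decomposition as in Theorem~\ref{thm:deterministic}, applied with the algorithmic $M$ from Algorithm~1.

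\textbf{Step 2: control $M$ and $\bX$ via the good events.} Let $\phi_0 = C_{\min}^{1/2}/2$ and pick $K = 1 + 20\kappa^2\sqrt{(\log p)/n}$ as in Theorem~\ref{thm:event_thm}$(a)$. On $\event_n(\phi_0,s_0,K)$ the compatibility condition holds for every support of size at most $s_0$, so in particular for $S=\supp(\theta_0)$; we also have $\max_i\hSigma_{ii}\le K$. On $\coev_n(a)$, the convex program \eqref{eq:optimization} is feasible with $\coh = a\sqrt{(\log p)/n}$, so the $M$ returned by Algorithm~1 is a genuine minimizer and satisfies
\[
|M\hSigma - \id|_\infty \;\le\; a\sqrt{\frac{\log p}{n}}\,,
\]
i.e., $\mu_*(\bX;M) \le a\sqrt{(\log p)/n}$. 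The sample size requirement $n \ge \nu_0 s_0\log(p/s_0)$ activates Theorem~\ref{thm:event_thm}$(a)$, while $n \ge \nu_1\log p$ with $\nu_1 \ge a^2 C_{\min}/(4e^2 C_{\max}\kappa^4)$ (absorbed into the $a/4$ term of $\nu_1$, up to a harmless constant) activates part $(b)$.

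\textbf{Step 3: assemble the tail bound.} On $\event_n \cap \coev_n$, Theorem~\ref{thm:deterministic} yields
\[
\prob\!\left(\|\Delta\|_\infty \ge \frac{4c\,\mu_*\sigma s_0}{\phi_0^2}\sqrt{\log p}\;\Big|\;\bX\right) \;\le\; 2p^{-c_0}, \qquad c_0 = \frac{c^2}{32K}-1.
\]
The condition $n \ge 1600\kappa^4\log p$ (enforced via $\nu_1$) ensures $K \le 3/2$, so $c_0 \ge c^2/48 - 1 = \tilde{c_0}$. Plugging $\mu_* \le a\sqrt{(\log p)/n}$ and $\phi_0^2 = C_{\min}/4$ into the prefactor yields $16ac\sigma s_0\log p/(C_{\min}\sqrt{n})$, matching the bound in \eqref{eq:TailBoundQuant}. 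Finally, a union bound over the complement of $\event_n \cap \coev_n$ (with probabilities $4e^{-c_1 n}$ and $2p^{-c_2}$ from Theorem~\ref{thm:event_thm}) and the conditional bound $2p^{-\tilde{c_0}}$ gives the claimed $4e^{-c_1 n} + 4p^{-\tilde{c_0}\wedge c_2}$.

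\textbf{Step 4: robustness to the choice of $M$.} For the last sentence, note that Steps 1 and 3 never use that $M$ is the minimizer of $|M\hSigma-\id|_\infty$; they only use $|M\hSigma-\id|_\infty \le \coh = a\sqrt{(\log p)/n}$ and measurability with respect to $\bX$ (so that, conditionally on $\bX$, $M$ is deterministic and the Gaussianity of $Z$ is preserved). Therefore the same argument applies to any such $M$.

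\textbf{Main obstacle.} The only delicate point is bookkeeping: one must verify that the sample size conditions encoded in $\nu_0$ and $\nu_1$ simultaneously guarantee (i) feasibility of the convex program at level $\coh = a\sqrt{(\log p)/n}$, (ii) the compatibility constant $\phi_0 = C_{\min}^{1/2}/2$, and (iii) the bound $K \le 3/2$ needed to replace $c_0$ by the clean constant $\tilde{c_0} = c^2/48 - 1$. Once these three ingredients are aligned, the proof reduces to the algebraic identity in Step~1 plus a triple union bound.
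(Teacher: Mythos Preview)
Your proposal is correct and follows essentially the same route as the paper: a union bound splitting off the complements of $\event_n(C_{\min}^{1/2}/2,s_0,3/2)$ and $\coev_n(a)$ (controlled by Theorem~\ref{thm:event_thm}), then applying Theorem~\ref{thm:deterministic} conditionally on $\bX$ in the intersection with $\phi_0=C_{\min}^{1/2}/2$, $K=3/2$, and $\mu_*=a\sqrt{(\log p)/n}$. Your bookkeeping around $K\le 3/2$ and the resulting $c_0\ge\tilde{c_0}$ is exactly what the paper does (it simply remarks that $20\kappa^2\sqrt{(\log p)/n}\le 1/2$ under the stated sample-size assumption), and your Step~4 observation about feasibility versus optimality of $M$ is also the paper's reasoning for the final sentence of the theorem.
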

Assuming $\sigma,C_{\rm min}$ of order one,
the last theorem establishes that, for random designs, the maximum
size of the `bias term' $\Delta_i$ over $i\in[p]$ is:
\begin{align}
\|\Delta\|_{\infty} = O\Big(\frac{s_0\log p}{\sqrt{n}}\Big)
\end{align}
On the other hand, the `noise term' $Z_i$ is roughly of order $\sqrt{[M\hSigma M^\sT]_{ii}}$.
Bounds on the variances $[M\hSigma M^\sT]_{ii}$ will be
given in Section \ref{sec:HypothesisTesting}  showing that, if 
$M$ is computed through Algorithm 1, $[M\hSigma M^\sT]_{ii}$ is of
order one for a broad family of random designs. 
As a consequence $|\Delta_i|$ is much smaller than $|Z_i|$ 
whenever $s_0 = o(\sqrt{n}/\log p)$. 
We summarize these remarks below.
\begin{remark}\label{rem:sparsity_cond}
Theorem~\ref{thm:main_thm} only requires that the support size
satisfies $s_0= O(n/\log p)$.
If we further assume $s_0 = o(\sqrt{n}/\log p)$, then we have
$\|\Delta\|_\infty = o(1)$
with high probability.
Hence, $\htheta^u$ is an asymptotically unbiased estimator for
$\theta_0$. 
\end{remark}
A more formal comparison of the bias of $\htheta^u$, and of the one of
the LASSO estimator $\htheta^n$ can be found in Section \ref{sec:BiasDiscussion} below.
Section \ref{sec:GBR-comparison} compares our approach with the related one in \cite{GBR-hypothesis}.

As it can be seen from the statement of Theorem
\ref{thm:deterministic} and Theorem \ref{thm:event_thm},
the claim of Theorem~\ref{thm:main_thm} does not rely on the specific choice of the objective function
in optimization problem~\eqref{eq:optimization} and only uses the
constraint on $\|\hSigma m-e_i\|_{\infty}$. In particular it holds for any
matrix $M$ that is feasible.  On the other hand, the specific
objective function problem~\eqref{eq:optimization} minimizes the
variance of the noise term $\Var(Z_i)$.
%
%*****************************************************************
%
\subsection{Discussion: The bias of the LASSO}
\label{sec:BiasDiscussion}

Theorems \ref{thm:deterministic} and 
\ref{thm:event_thm} provide a quantitative framework  to discuss  in what
sense  the LASSO estimator $\htheta^n$ is asymptotically biased, while the de-biased
estimator $\htheta^u$ is asymptotically unbiased. 

Given an estimator $\htheta^n$ of the parameter vector $\theta_0$, we define its \emph{bias} to be the
vector 
\begin{align}
\bias(\htheta^n) \equiv \E\{\htheta^n-\theta_0|\bX\}\, .
\end{align}
Note that, if the design is random, $\bias(\htheta^n)$ is a measurable
function of $\bX$. If the design is deterministic, $\bias(\htheta^n)$ is
a deterministic quantity as well, and the conditioning is redundant.

It follows from Eq.~(\ref{eq:GeneralRepresentation}) that
\begin{align}
\bias(\htheta^u) = \frac{1}{\sqrt{n}}\E\{\Delta|\bX\}\, .
\end{align}
Theorem  \ref{thm:main_thm}  with high probability,
 $\|\Delta\|_{\infty}= O(s_0\log p/\sqrt{n})$. The next
corollary establishes that this translates into a bound  on
$\bias(\htheta^u)$ for all $\bX$ in a set that has probability
rapidly converging to one as $n$, $p$ get large.
\begin{coro}\label{coro:UIsNotBiased}
Under the assumptions of Theorem  \ref{thm:main_thm}, 
let $c_1$, $c_2$ be defined as per Eqs.~(\ref{eq:BoundEvA}), (\ref{eq:BoundEvG}). Then we have
\begin{align}
\bX\in \event_n(C_{\min}^{1/2}/2,s_0,3/2)\cap \coev_n(a)
\; \Rightarrow\; \|\bias(\htheta^u)\|_{\infty}\le \frac{160 a}{C_{\min}}\, \frac{\sigma
  s_0\log p}{n}\, ,\label{eq:BoundCoroPerX}\\
\prob\Big(\bX\in \event_n(C_{\min}^{1/2}/2,s_0,3/2)\cap \coev_n(a)\Big)
\ge 1-4e^{-n/c_*}-2\,p^{-c_2}\, .\label{eq:CoroProbabilityBound}
\end{align}
\end{coro}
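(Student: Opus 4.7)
The plan is to take conditional expectations in the decomposition~\eqref{eq:GeneralRepresentationBis} to isolate the bias, bound it in $\ell_\infty$ norm by a product of generalized coherence and LASSO $\ell_1$-error, and then control the expected $\ell_1$-error of the LASSO on the good design event. The second probability statement will then be an immediate union bound from Theorem~\ref{thm:event_thm}.

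First I would observe that in~\eqref{eq:GeneralRepresentationBis} the term $Z = M\bX^\sT W/\sqrt{n}$ is linear in the Gaussian noise $W$ and has $\E\{Z\mid\bX\}=0$, so $\sqrt{n}\,\bias(\htheta^u) = \E\{\Delta\mid\bX\}$. Using $\Delta = \sqrt{n}(M\hSigma - \id)(\theta_0 - \htheta^n)$ together with H\"older's inequality coordinatewise gives
\begin{align*}
\|\bias(\htheta^u)\|_\infty \;\le\; |M\hSigma - \id|_\infty \cdot \E\{\|\htheta^n - \theta_0\|_1 \mid \bX\}\, .
\end{align*}
On $\bX \in \coev_n(a)$, the feasibility constraint in~\eqref{eq:optimization} at $\coh = a\sqrt{(\log p)/n}$ forces $|M\hSigma - \id|_\infty \le a\sqrt{(\log p)/n}$, so it only remains to bound the expected LASSO $\ell_1$-error.

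The bulk of the work, and the main obstacle, is producing a clean $O(\sigma s_0\sqrt{(\log p)/n}/C_{\min})$ bound on $\E\{\|\htheta^n - \theta_0\|_1\mid\bX\}$ for $\bX \in \event_n(C_{\min}^{1/2}/2, s_0, 3/2)$. The standard LASSO basic inequality combined with the compatibility condition at $\phi_0 = C_{\min}^{1/2}/2$ yields the deterministic inequality $\|\htheta^n - \theta_0\|_1 \le 4\lambda s_0/\phi_0^2 = 16\lambda s_0/C_{\min}$ on the noise event $B = \{2\|\bX^\sT W/n\|_\infty \le \lambda\}$. Conditional on $\bX \in \event_n$, each coordinate of $\bX^\sT W/n$ is centered Gaussian with variance at most $(3/2)\sigma^2/n$, so taking $\lambda = c\sigma\sqrt{(\log p)/n}$ with $c$ chosen large enough (this is what eventually delivers the prefactor $160$) makes $\prob(B^c\mid\bX)$ polynomially small in $p$. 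The delicate point is the contribution of $B^c$ to the conditional expectation: I would control it by a crude fallback extracted from the LASSO optimality at $\theta = 0$, namely $\lambda\|\htheta^n\|_1 \le \|Y\|_2^2/(2n)$, followed by the triangle inequality $\|\htheta^n - \theta_0\|_1 \le \|\htheta^n\|_1 + \|\theta_0\|_1$ and $\E\{\|W\|_2^2\mid\bX\} = n\sigma^2$; this bad-event contribution is lower order than $\lambda s_0/C_{\min}$ and can be absorbed into the final constant. Multiplying by the coherence bound from the previous paragraph then produces~\eqref{eq:BoundCoroPerX}. A pure tail bound on $\|\Delta\|_\infty$, such as the one already available in Theorem~\ref{thm:main_thm}, would give only a ``with high probability'' version of the claim; the passage to the conditional \emph{expectation} is what makes this step non-routine.

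Finally,~\eqref{eq:CoroProbabilityBound} follows by a direct union bound: Theorem~\ref{thm:event_thm}(a) applied with $\phi_0 = C_{\min}^{1/2}/2$ and $K = 3/2$ gives $\prob(\event_n^c) \le 4e^{-c_1 n}$, and Theorem~\ref{thm:event_thm}(b) at parameter $a$ gives $\prob(\coev_n(a)^c) \le 2p^{-c_2}$.
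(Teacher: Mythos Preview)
Your route differs from the paper's in one essential way. The paper does not split into a good and bad noise event; it integrates the tail bound of Theorem~\ref{thm:deterministic} directly. On $\event_n(C_{\min}^{1/2}/2,s_0,3/2)\cap\coev_n(a)$ it reads off
\[
\prob\big(\|\Delta\|_\infty \ge Lc \,\big|\, \bX\big)\le 2\,p^{1-c^2/48},\qquad L\equiv \frac{16a\sigma}{C_{\min}}\,\frac{s_0\log p}{\sqrt n},
\]
and then computes
\[
\|\bias(\htheta^u)\|_\infty\le \frac{1}{\sqrt n}\,\E\{\|\Delta\|_\infty\mid\bX\}=\frac{L}{\sqrt n}\int_0^\infty \prob(\|\Delta\|_\infty\ge Lc\mid\bX)\,\de c\le \frac{L}{\sqrt n}\int_0^\infty\min\!\big(1,\,2p^{1-c^2/48}\big)\,\de c\le \frac{10L}{\sqrt n},
\]
which is exactly the $160a/C_{\min}$ bound. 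Notice this calculation never sees $\|\theta_0\|_1$ or $\|\bX\theta_0\|_2$, so the constant is uniform in $\theta_0$. Your remark that a tail bound ``would give only a with-high-probability version'' is therefore not how the paper proceeds: the tail bound is precisely what is integrated.

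Your good/bad-event split is a reasonable alternative in spirit, but the fallback you propose on $B^c$ introduces a gap. Comparing the LASSO objective at $0$ gives $\lambda\|\htheta^n\|_1\le \|Y\|_2^2/(2n)$, and together with the triangle inequality this yields a bad-event term of size roughly $\big(\|\bX\theta_0\|_2^2/(n\lambda)+\|\theta_0\|_1\big)\prob(B^c\mid\bX)$ plus a noise piece. The sparsity hypothesis $\|\theta_0\|_0\le s_0$ places no bound whatsoever on $\|\theta_0\|_1$ or $\|\bX\theta_0\|_2$, so this contribution cannot be ``absorbed into the final constant'' $160$ uniformly over $\theta_0$; for large signals it will dominate the main term. (Comparing at $\theta_0$ instead of $0$ removes $\|\bX\theta_0\|_2^2$ but still leaves $\|\theta_0\|_1$.) The paper's tail-integration device sidesteps this entirely because the tail bound from Theorem~\ref{thm:deterministic} depends on $\theta_0$ only through $s_0$.

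Your treatment of~\eqref{eq:CoroProbabilityBound} via union bound matches the paper's.
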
 
The proof of this corollary can be found in Appendix
\ref{app:UIsNotBiased}.

This result can be contrasted with a converse  result for the LASSO
estimator. Namely, as stated below, there are choices of the vector
$\theta_0$, and of the design covariance $\Sigma$, such that
$\bias(\htheta^n)$ is the sum of two terms. One is of order 
order $\lambda = c\sigma\sqrt{(\log p)/n}$ and the second is of order
$\|\bias(\htheta^u)\|_{\infty}$.
If $s_0$ is significantly smaller than $\sqrt{n/\log p}$ (which is the
main regime studied in the rest of the paper), the first term
dominates and $\|\bias(\htheta^n)\|_{\infty}$ is much larger than
$\|\bias(\htheta^u)\|_{\infty}$.
If on the other hand $s_0$ is significantly larger than $\sqrt{n/\log
  p}$ 
then $\|\bias(\htheta^n)\|_{\infty}$ is of the same order as $\|\bias(\htheta^u)\|_{\infty}$.
This justify referring to $\htheta^u$ as to an \emph{unbiased
  estimator}.

Notice that, since we want to establish a negative result about the
LASSO, it is sufficient to exhibit a specific covariance structure $\Sigma$ satisfying
the assumptions of the previous corollary. Remarkably it is sufficient
to consider standard designs, i.e. $\Sigma = \id_{p\times p}$.
\begin{coro}\label{coro:LASSOIsBiased}
Under the assumptions of Theorem  \ref{thm:main_thm}, further consider
the case $\Sigma = \id$. 
Then, there exists a
numerical constant  $c_{**}>0$, a set of design matrices
$\cB_n\subseteq \reals^{n\times p}$, and coefficient vectors
$\theta_0\in \reals^p$, $\|\theta_0\|_0\le s_0$, such that
\begin{align}
&\bX\in\cB_n\;\Rightarrow\;
\|\bias(\htheta^n)\|_{\infty}\ge  \left|\frac{2}{3}\lambda-
\|\bias(\htheta^u)\|_{\infty}\right|\, ,\label{eq:CoroBiasStd}\\
&\prob(\cB_n) \ge 1-6\, e^{-n/c_*}-2\, p^{-3} \, .\label{eq:CoroPbound2}\\
\end{align}
In particular $\|\bias(\htheta^u)\|_{\infty}\le \lambda/3$ (which
follows from $(s_0^2\log p)/n\le  (c/(3c_{**}))^2$) then  we have
\begin{align}
\|\bias(\htheta^n)\|_{\infty}\ge  \frac{c\sigma}{3}\sqrt{\frac{\log
    p}{n}}\gg \|\bias(\htheta^u)\|_{\infty}\, . \label{eq:CoroBiasStd_final}
\end{align}
On the other hand, if 
$\|\bias(\htheta^u)\|_{\infty}\ge \lambda$, then
\begin{align}
\|\bias(\htheta^n)\|_{\infty}\ge  \frac{1}{3}\|\bias(\htheta^u)\|_{\infty}\, . \label{eq:CoroBiasStd_final2}
\end{align}
\end{coro}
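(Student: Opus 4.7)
The starting point is the KKT optimality condition for the LASSO~(\ref{eq:LASSOEstimator}): there exists $v\in\partial\|\htheta^n\|_1$ (so $\|v\|_\infty\le 1$, and $v_i=\sign(\htheta^n_i)$ whenever $\htheta^n_i\ne 0$) satisfying $(1/n)\bX^\sT(Y-\bX\htheta^n)=\lambda v$. Substituting into (\ref{eq:hthetau}) produces the pathwise identity $\htheta^u-\htheta^n=\lambda Mv$, so taking conditional expectations given $\bX$ yields
$$\bias(\htheta^u)-\bias(\htheta^n)\ =\ \lambda\,\E[Mv\,|\,\bX].$$
The reverse triangle inequality at coordinate $1$ then gives
$$\|\bias(\htheta^n)\|_\infty\ \ge\ |\bias_1(\htheta^n)|\ \ge\ \lambda\,|\E[(Mv)_1\,|\,\bX]|\ -\ \|\bias(\htheta^u)\|_\infty.$$
So the plan is to exhibit $\theta_0$ with $\|\theta_0\|_0\le s_0$ and an event $\cB_n$ (a function of $\bX$ alone) on which both $|\E[(Mv)_1\,|\,\bX]|\ge 2/3$ and $\|\bias(\htheta^u)\|_\infty\le(2/3)\lambda$. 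The latter ensures the absolute value in (\ref{eq:CoroBiasStd}) collapses to $(2/3)\lambda-\|\bias(\htheta^u)\|_\infty$, which is what the displayed inequality then gives.

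\textbf{Step 2 (Construction).} Set $\theta_0=A e_1$ with $A=c_{**}\lambda$ for a large absolute constant $c_{**}$, so $\|\theta_0\|_0=1\le s_0$. By the final remark of Theorem~\ref{thm:main_thm}, its conclusions apply for any feasible $M$, so I take $M=\id$, which is feasible whenever $|\hSigma-\id|_\infty\le\mu$---an event of probability $\ge 1-2p^{-3}$ by subgaussian entrywise concentration of $\hSigma$ under $\Sigma=\id$ (adjusting the constant in $\mu$ as needed). With this choice $(Mv)_1=v_1$, so the task reduces to $\E[v_1\,|\,\bX]\ge 2/3$. For $A\gg\lambda$, a standard LASSO sign-recovery argument on the event $\event_n$ of Theorem~\ref{thm:event_thm}(a) (using the compatibility bound and the $\beta$-min/incoherence condition, both implied under $\Sigma=\id$) gives $\P(\htheta^n_1>0\,|\,\bX)\ge 11/12$ on a sub-event of probability $\ge 1-2e^{-n/c_*}$; since $v_1\in[-1,1]$ and $v_1=1$ when $\htheta^n_1>0$, this yields
$$\E[v_1\,|\,\bX]\ \ge\ 2\,\P(\htheta^n_1>0\,|\,\bX)-1\ \ge\ 10/12\ >\ 2/3.$$

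\textbf{Step 3 (Probability bound, consequences, and obstacle).} Define $\cB_n$ as the intersection of (i)~$\event_n(C_{\min}^{1/2}/2,s_0,3/2)\cap\coev_n(a)$ from Theorem~\ref{thm:event_thm}; (ii)~the entrywise concentration event $\{|\hSigma-\id|_\infty\le\mu\}$; and (iii)~the sign-recovery event $\{\P(\htheta^n_1>0\,|\,\bX)\ge 11/12\}$ of Step 2. A union bound---with $a$ chosen large enough that $c_2\ge 3$ in Theorem~\ref{thm:event_thm}(b)---gives $\P(\cB_n)\ge 1-6e^{-n/c_*}-2p^{-3}$, establishing (\ref{eq:CoroPbound2}). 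On $\cB_n$, Corollary~\ref{coro:UIsNotBiased} further yields $\|\bias(\htheta^u)\|_\infty=O(s_0\log p/n)=O(\log p/n)<(2/3)\lambda$ for $n$ sufficiently large, so Steps 1--2 establish (\ref{eq:CoroBiasStd}). The two stated consequences follow algebraically: $\|\bias(\htheta^u)\|_\infty\le\lambda/3$ gives $(2/3)\lambda-\|\bias(\htheta^u)\|_\infty\ge\lambda/3=(c\sigma/3)\sqrt{(\log p)/n}$, hence (\ref{eq:CoroBiasStd_final}); and $\|\bias(\htheta^u)\|_\infty\ge\lambda$ gives $\|\bias(\htheta^u)\|_\infty-(2/3)\lambda\ge(1/3)\|\bias(\htheta^u)\|_\infty$, hence (\ref{eq:CoroBiasStd_final2}). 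The main technical obstacle is step~(iii) above: producing a conditional bound $\P(\htheta^n_1>0\,|\,\bX)\ge 11/12$ that holds on a high-probability set of designs and with constants compatible with the bound on $\cB_n$ requires a careful combination of a classical sign-recovery result for LASSO under $\Sigma=\id$ with the probability estimates of Theorem~\ref{thm:event_thm}, as opposed to a direct application of any off-the-shelf statement.
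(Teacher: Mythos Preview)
Your approach is essentially the paper's: both use the KKT identity with $M=\id$ to write $\bias(\htheta^n)=\bias(\htheta^*)-\lambda\,\E[v\,|\,\bX]$ (where $\htheta^*$ denotes the de-biased estimator with $M=\id$), bound $\|\bias(\htheta^*)\|_\infty$ via Theorem~\ref{thm:deterministic} on the event where $|\hSigma-\id|_\infty\le\mu$, and reduce to showing $\E[v_i\,|\,\bX]\ge 2/3$ for some support coordinate via a lower bound on $\P(\htheta^n_i>0\,|\,\bX)$.

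Where the paper differs is precisely in resolving the obstacle you flag at the end, and the resolution is cleaner than invoking an external sign-recovery theorem. The KKT identity $\htheta^n_i=\htheta^*_i-\lambda v_i$ (for $M=\id$) together with $|v_i|\le 1$ gives the inclusion $\{\htheta^n_i\le 0\}\subseteq\{\htheta^*_i\le\lambda\}$. But on $\cB_n$ you already have the decomposition $\htheta^*_i=\theta_{0,i}+n^{-1/2}Z_i+n^{-1/2}\Delta_i$ with $Z_i\,|\,\bX\sim\normal(0,\sigma^2\hSigma_{ii})$, $\hSigma_{ii}\le 3/2$, and $n^{-1/2}\|\Delta\|_\infty$ controlled by $b_0=c_{**}\sigma s_0(\log p)/n$. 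So choosing $\theta_{0,i}\ge\lambda+b_0+\sqrt{30\sigma^2/n}$ for each $i$ in the support yields
\[
\P(\htheta^n_i\le 0\,|\,\bX)\ \le\ \P(\htheta^*_i\le\lambda\,|\,\bX)\ \le\ \Phi\!\big((\lambda+b_0-\theta_{0,i})\sqrt{2n/(3\sigma^2)}\big)\ \le\ \Phi(-\sqrt{20})\ <\ 1/6,
\]
and hence $\E[v_i\,|\,\bX]\ge 1-2\cdot(1/6)=2/3$ directly. This is entirely self-contained---it reuses the very Gaussian-plus-bias decomposition you already invoked for $\|\bias(\htheta^*)\|_\infty$---and it produces explicit constants compatible with the probability bound on $\cB_n$, which is exactly the matching issue you were worried about when appealing to an off-the-shelf sign-consistency statement.
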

A formal proof of this statement is deferred to Appendix
\ref{app:LASSOIsBiased}, but  the underlying mathematical mechanism is
quite simple and instructive.
Recall that the KKT conditions for the LASSO estimator
(\ref{eq:LASSOEstimator}) read
\begin{align}
\frac{1}{n}\bX^{\sT}(Y-\bX\htheta^n) = \lambda\, v(\htheta^n)\, ,
\end{align}
with $v(\htheta^n)\in\reals^p$ a vector in the subgradient of the $\ell_1$ norm
at $\htheta^n$. Adding $\htheta^n-\theta_0$ to both sides, and taking
expectation over the noise, we get
\begin{align}
\bias(\htheta^*) = \bias(\htheta^n) + \lambda\E\{v(\htheta^n)|\bX\}\, ,\label{eq:TwoBiasRelation}
\end{align}
Where  $\htheta^*$ a debiased estimator of the general form
Eq.~(\ref{eq:GeneralDebiased}), for $M=\id$.
This suggest that $\bias(\htheta^n)$ can be decomposed in two
contributions as described above, and as shown formally in Appendix \ref{app:LASSOIsBiased},

\subsection{Comparison with earlier results}
\label{sec:GBR-comparison}

In this Section we briefly compare the above debiasing procedure
and in particular Theorems \ref{thm:deterministic},
\ref{thm:event_thm} and \ref{thm:main_thm} to the results of \cite{GBR-hypothesis}.
In the case of linear statistical models considered here, 
the authors of \cite{GBR-hypothesis} construct  a debiased estimator
of the form (\ref{eq:GeneralDebiased}). However, instead of solving
the optimization problem (\ref{eq:optimization}), they follow
\cite{ZhangZhangSignificance} and use the regression coefficients of the $i$-th
column of $\bX$ on the other columns to construct the $i$-th row of
$M$.  These regression coefficients are computed --once again-- using
the LASSO (node-wise LASSO).

It useful to spell out the most important differences between our
contribution and the ones of \cite{GBR-hypothesis}:
\begin{enumerate}
\item The case of fixed non-random designs is covered by \cite[Theorem
  2.1]{GBR-hypothesis}, which should be compared to our Theorem
  \ref{thm:deterministic}.
  While in our case the bias is controlled by the generalized
  coherence parameter, a similar role is played in
  \cite{GBR-hypothesis} by the regularization parameters of the
  nodewise LASSO.
\item The case of random designs is covered by \cite[Theorem
  2.2, Theorem 2.4]{GBR-hypothesis}, which should be compared with our Theorem
  \ref{thm:main_thm}.
In this case, the assumptions underlying our result are significantly less
restrictive. More precisely:
\begin{enumerate}
\item \cite[Theorem
  2.2, Theorem  2.4]{GBR-hypothesis} assume $\bX$ to have i.i.d. rows, while we
  only assume the rows to be independent.
\item  \cite[Theorem
  2.2, Theorem  2.4]{GBR-hypothesis} assume the rows inverse covariance matrix
  $\Sigma^{-1}$ be sparse. More precisely, letting $s_j$ be the number
  of non-zero entries of the $j$-th  row of $\Sigma^{-1}$,
  \cite{GBR-hypothesis} assumes $\max_{j\in [p]}s_j = o(n/\log p)$,
  that is much smaller than $p$. 
We do not make any sparsity assumption for $\Sigma^{-1}$, and $s_j$ can be as large as $p$.
\end{enumerate}
(In fact \cite[Theorem  2.4]{GBR-hypothesis} also consider the
assumption of $\bX$ with
bounded entries, but even stricter sparsity assumptions are made in
that case.)
\end{enumerate}
In addition our Theorem \ref{thm:main_thm} provides the specific
dependence on the maximum and minimum singular value of $\hSigma$.

Let us also note that solving the convex problem
(\ref{eq:optimization}) is not more burdensome than solving the
nodewise LASSO as in \cite{ZhangZhangSignificance,GBR-hypothesis},
This can be confirmed by checking that the dual of the problem (\ref{eq:optimization})
is an $\ell_1$-regularized quadratic optimization problem. It has
therefore the same complexity as the nodewise LASSO (but it is
different from the nodewise LASSO).

%====================================================================
%
\section{Statistical inference}
\label{sec:Inference}

A direct application of Theorem~\ref{thm:main_thm} is to derive confidence intervals
and statistical hypothesis tests for high-dimensional models. 
Throughout, we make the sparsity assumption $s_0 = o(\sqrt{n}/\log p)$
and omit explicit constants that can be readily derived from Theorem
\ref{thm:main_thm}.

\subsection{Preliminary lemmas}

As discussed above, the bias term $\Delta$ is negligible with respect
to the random term $Z$ in the decomposition
(\ref{eq:GeneralRepresentationBis}),
provided the latter has variance of order one.
Our first lemma establishes that this is indeed the case. 
\begin{lemma}\label{lem:missing_bound}
Let $M = (m_1,\dotsc,m_p)^\sT$ be the matrix with rows $m_i^\sT$ obtained by solving convex
program~\eqref{eq:optimization} in Algorithm 1. Then for all $i\in [p]$,
\[[M \hSigma M^\sT]_{i,i} \ge \frac{(1-\coh)^2}{\hSigma_{i,i}}\,.\]
\end{lemma}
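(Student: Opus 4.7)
The plan is to apply Cauchy--Schwarz in the (positive semidefinite) bilinear form induced by $\hSigma$. Writing $\langle u,v\rangle_{\hSigma} \equiv u^\sT \hSigma v$, the Cauchy--Schwarz inequality gives
\begin{equation*}
\bigl(m_i^\sT \hSigma e_i\bigr)^2 \;\le\; \bigl(m_i^\sT \hSigma m_i\bigr)\bigl(e_i^\sT \hSigma e_i\bigr) \;=\; [M\hSigma M^\sT]_{i,i}\cdot \hSigma_{i,i}.
\end{equation*}

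Next, I would use the feasibility constraint of the convex program~\eqref{eq:optimization}. Since $m_i$ is a feasible point, $\|\hSigma m_i - e_i\|_{\infty}\le \coh$, and in particular the $i$-th coordinate satisfies $|(\hSigma m_i)_i - 1|\le \coh$, hence
\begin{equation*}
m_i^\sT \hSigma e_i \;=\; (\hSigma m_i)_i \;\ge\; 1-\coh.
\end{equation*}
Assuming the nontrivial regime $\coh<1$ (otherwise the claimed bound is vacuous since the right-hand side becomes zero or negative), this lower bound is nonnegative, so squaring preserves the direction of the inequality and yields $(m_i^\sT\hSigma e_i)^2 \ge (1-\coh)^2$.

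Combining the two displays gives
\begin{equation*}
(1-\coh)^2 \;\le\; [M\hSigma M^\sT]_{i,i}\cdot \hSigma_{i,i},
\end{equation*}
and dividing by $\hSigma_{i,i}$ (which is nonnegative, and strictly positive in any nondegenerate case where the column of $\bX$ is nonzero) yields the claim.

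I do not anticipate a real obstacle here: the only subtlety is handling the degenerate cases (column of $\bX$ equal to zero, or $\coh\ge 1$), but these can be dispatched with a one-line remark, since in both cases the inequality is either vacuous or the diagonal of $[M\hSigma M^\sT]$ can be directly analyzed. The substantive content is just Cauchy--Schwarz plus the feasibility constraint evaluated at coordinate $i$.
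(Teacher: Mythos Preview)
Your proof is correct and takes a genuinely different route from the paper's own argument. Both proofs use only \emph{feasibility} of $m_i$ (not optimality), and both extract the same coordinate constraint $e_i^\sT\hSigma m_i\ge 1-\coh$. From there, however, the paper proceeds via a Lagrangian-relaxation argument: for any $c\ge 0$ and feasible $\tilde m$, it writes
\[
\<\tilde m,\hSigma\tilde m\>\;\ge\;\<\tilde m,\hSigma\tilde m\>+c(1-\coh)-c\<e_i,\hSigma\tilde m\>\;\ge\;\min_{m}\bigl\{\<m,\hSigma m\>+c(1-\coh)-c\<e_i,\hSigma m\>\bigr\},
\]
solves the unconstrained quadratic minimum at $m=ce_i/2$ to obtain $C_i(\coh)\ge c(1-\coh)-(c^2/4)\hSigma_{ii}$, and then optimizes over $c$. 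Your approach replaces this two-step relaxation-plus-optimization with a single application of Cauchy--Schwarz in the $\hSigma$-semi-inner-product, which is more direct and arguably more transparent. The two arguments are in fact equivalent at the level of the underlying inequality (optimizing a Lagrangian penalty for a linear constraint against a quadratic objective is exactly what produces Cauchy--Schwarz), but your presentation avoids the auxiliary parameter $c$ and the intermediate unconstrained minimization. The only caveats you already noted (degeneracy when $\hSigma_{i,i}=0$ or $\coh\ge 1$) are handled identically in both proofs and are inessential.
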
 
Lemma~\ref{lem:missing_bound} is proved in Appendix~\ref{app:missing_bound}.

Using this fact, we can then characterize the asymptotic distribution
of the residuals $(\htheta^u-\theta_{0,i})$. Theorem
\ref{thm:main_thm} naturally suggests to consider the scaled residual
$\sqrt{n}(\htheta^u_i - \theta_{0,i})/(\sigma [M \hSigma
M^\sT]_{i,i}^{1/2})$. In the next lemma we consider a slightly more
general scaling,  replacing $\sigma$ by a consistent estimator
$\hsigma$.
\begin{lemma}\label{lemma:LastDistribution}
Consider a sequence of design matrices $\bX\in\reals^{n\times p}$,
 with dimensions $n\to\infty$, $p=p(n)\to\infty$ satisfying the following assumptions,
for constants $C_{\rm min}, C_{\rm max},\kappa\in (0,\infty)$
independent of $n$.
For each $n$, $\Sigma\in\reals^{p\times p}$  is such that $\sigma_{\min}(\Sigma) \ge C_{\min} > 0$, and
$\sigma_{\max}(\Sigma) \le C_{\max} < \infty$,
and $\max_{i\in [p]}\Sigma_{ii}\le 1$.
Assume $\bX\Sigma^{-1/2}$ to have independent subgaussian  rows, with
zero mean and subgaussian norm $\|\Sigma^{-1/2} X_1\|_{\psi_2} \le  \kappa$,

Consider the linear model~\eqref{eqn:regression} and let $\htheta^u$ be defined as per
Eq.~\eqref{eq:hthetau} in Algorithm 1, with $\coh =a\sqrt{(\log p)/n}$
and $\lambda = \sigma\sqrt{(c^2\log p)/n}$, with $a,c$  large enough
constants.
Finally, let $\hsigma = \hsigma(y,\bX)$ an estimator of the noise
level satisfying, for any $\eps>0$,
\begin{align}
\lim_{n\to\infty} \sup_{\theta_0\in\reals^p;\, \|\theta_0\|_0 \le s_0 }\prob\Big(\Big|\frac{\hsigma}{\sigma}-1\Big|\ge \eps
\Big)=0\, .\label{eq:ConsistencySigma}
\end{align}
If $s_0=o(\sqrt{n}/\log p)$ ($s_0\ge 1$), then, for all $x\in\reals$, we have
\begin{eqnarray}\label{eq:distribution}
\lim_{n\to\infty}\sup_{\theta_0\in\reals^p;\, \|\theta_0\|_0 \le s_0 }\left|\prob \left\{\frac{\sqrt{n}(\htheta^u_i - \theta_{0,i})}{\hsigma [M \hSigma M^\sT]_{i,i}^{1/2}} 
\le x  \right\} -\Phi(x)\right| =0\, . 
\end{eqnarray}
\end{lemma}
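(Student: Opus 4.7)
The plan is to combine the Gaussian-plus-remainder decomposition from Theorem \ref{thm:main_thm} with the variance lower bound from Lemma \ref{lem:missing_bound} and Slutsky's theorem. Writing out the scaled residual, I would decompose
\begin{align*}
\frac{\sqrt{n}(\htheta^u_i - \theta_{0,i})}{\hsigma\,[M\hSigma M^\sT]_{i,i}^{1/2}}
\,=\, \frac{\sigma}{\hsigma}\cdot\frac{Z_i}{\sigma\,[M\hSigma M^\sT]_{i,i}^{1/2}}
\,+\, \frac{\Delta_i}{\hsigma\,[M\hSigma M^\sT]_{i,i}^{1/2}}\,,
\end{align*}
and argue that the first summand converges in distribution to $\normal(0,1)$ while the second is $o_P(1)$, both uniformly over $\theta_0$ with $\|\theta_0\|_0\le s_0$.

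For the first summand, Theorem \ref{thm:main_thm} states that conditional on $\bX$, $Z\sim\normal(0,\sigma^2 M\hSigma M^\sT)$, so $Z_i/(\sigma[M\hSigma M^\sT]_{i,i}^{1/2})$ is standard normal both conditionally and unconditionally, and its distribution does not depend on $\theta_0$. The ratio $\sigma/\hsigma$ converges to $1$ in probability uniformly over $\theta_0$ by the assumed consistency (\ref{eq:ConsistencySigma}), so Slutsky's theorem gives convergence of this summand to $\normal(0,1)$ in distribution, uniformly in $\theta_0$.

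For the remainder term, I would condition on the event $\event_n(C_{\min}^{1/2}/2,s_0,3/2)\cap\coev_n(a)$ from Theorem \ref{thm:event_thm}, which has probability $1-o(1)$ with constants that do not depend on $\theta_0$. On this event, Lemma \ref{lem:missing_bound} yields $[M\hSigma M^\sT]_{i,i}\ge (1-\mu)^2/\hSigma_{i,i}\ge (1-\mu)^2\cdot(2/3)$, which is bounded below by a positive constant since $\mu=a\sqrt{(\log p)/n}\to 0$. Meanwhile Theorem \ref{thm:main_thm} furnishes $\|\Delta\|_{\infty}=O_P(s_0(\log p)/\sqrt{n})$ with a tail bound that is independent of $\theta_0$. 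Under the sparsity assumption $s_0=o(\sqrt{n}/\log p)$, this quantity is $o_P(1)$; combined with the lower bound on $[M\hSigma M^\sT]_{i,i}$ and the fact that $\hsigma\ge \sigma/2$ with uniformly high probability (by (\ref{eq:ConsistencySigma})), the remainder term is $o_P(1)$ uniformly in $\theta_0$.

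Putting the two pieces together via Slutsky's theorem yields, for each fixed $x\in\reals$, pointwise convergence of the CDFs to $\Phi(x)$, uniformly in $\theta_0$. Since $\Phi$ is continuous and monotone, Polya's theorem upgrades this to the uniform-in-$x$ absolute-difference bound stated in (\ref{eq:distribution}). The main technical obstacle is verifying that uniformity over $\theta_0$ is preserved throughout: the tail bound on $\|\Delta\|_\infty$, the probability of $\event_n\cap\coev_n$, and the hypothesis (\ref{eq:ConsistencySigma}) are all stated (or can be verified) with constants depending only on $\Sigma$, $\sigma$, $\kappa$, $a$, and $c$, and not on $\theta_0$, which is what makes the uniform-in-$\theta_0$ conclusion go through.
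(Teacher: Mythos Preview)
Your proposal is correct and follows essentially the same route as the paper: decompose via Theorem~\ref{thm:main_thm}, observe that $Z_i/(\sigma[M\hSigma M^\sT]_{i,i}^{1/2})$ is exactly $\normal(0,1)$ independently of $\theta_0$, control the remainder using Lemma~\ref{lem:missing_bound} together with the tail bound on $\|\Delta\|_\infty$, and absorb $\hsigma/\sigma$ via the consistency hypothesis. The only cosmetic differences are that the paper spells out the Slutsky step as an explicit $\eps$-argument (which is what makes uniformity over $\theta_0$ transparent), and that your appeal to Polya's theorem is unnecessary since the lemma only claims pointwise-in-$x$ convergence.
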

The proof of this lemma can be found in Section \ref{proof:LastDistribution}.
We also note that the dependence of $a,c$ on $C_{\rm min}, C_{\rm
  max}, \kappa$ can be easily reconstructed from Theorem \ref{thm:event_thm}.

The last lemma requires a consistent estimator of $\sigma$, in the
sense of Eq.~(\ref{eq:ConsistencySigma}). Several proposal have been
made to estimate the noise level in high-dimensional linear
regression. A short list of references includes
\cite{SCAD01,sis08,SBvdG10,mcp10,SZ-scaledLASSO,
BelloniChern,FGH12,reid2013study,dicker2012residual,isis13,bayati2013estimating}.
Consistency results have been proved or can be proved for several of
these estimators.

In order to demonstrate that the consistency criterion
(\ref{eq:ConsistencySigma})
can be achieved, we  use the scaled LASSO~\cite{SZ-scaledLASSO} given by
\begin{align}
\{\htheta^n(\tlambda), \hsigma(\tlambda)\} \equiv \underset{\theta\in\reals^p,\sigma> 0}{\arg\min}\,
\Big\{\frac{1}{2\sigma n}\|Y-\bX\theta\|^2_2+ \frac{\sigma}{2}
+\tlambda\|\theta\|_1\Big\}\, .
\label{eq:SLASSO}
\end{align}
This is a joint convex optimization problem which provides an estimate of the noise level in addition to an estimate of $\theta_0$.

The following lemma uses the analysis of \cite{SZ-scaledLASSO}  to
show that  $\hsigma$ thus defined satisfies  the consistency criterion
(\ref{eq:ConsistencySigma}).
\begin{lemma}\label{lemma:ConsistencySigma}
Under the assumptions of Lemma \ref{lemma:LastDistribution}, let 
$\hsigma=\hsigma(\tlambda)$ be the scaled LASSO estimator of the noise level, see
Eq.~(\ref{eq:SLASSO}), with $\tlambda = 10\sqrt{(2\log p)/n}$. Then
 $\hsigma$ thus satisfies Eq.~(\ref{eq:ConsistencySigma}).
\end{lemma}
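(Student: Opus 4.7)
The plan is to invoke the convergence analysis of the scaled LASSO from Sun and Zhang~\cite{SZ-scaledLASSO}. Their main result yields a deterministic bound of the form $|\hsigma/\sigma - 1| \le C(\phi_0)\, s_0 \log(p)/n$, valid on a high-probability event over the noise $W$, provided that (i) the sample covariance $\hSigma$ satisfies the compatibility condition on subsets of size $s_0$ with constant $\phi_0$, (ii) the column norms of $\bX/\sqrt{n}$ are uniformly bounded, and (iii) the penalty $\tlambda$ dominates the noise score $\max_i |(\bX^\sT W)_i|/n$ by a constant factor. The choice $\tlambda = 10\sqrt{(2\log p)/n}$ is made precisely to secure (iii).

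The first step is to establish (i) and (ii) by conditioning on the event $\event_n(\phi_0, s_0, K)$ defined in Theorem~\ref{thm:event_thm}(a), with $\phi_0 = C_{\min}^{1/2}/2$ and $K = 3/2$. Under the subgaussian assumptions on the rows of $\bX\Sigma^{-1/2}$, and noting that $s_0 = o(\sqrt{n}/\log p)$ implies $n \ge \nu_0\, s_0 \log(p/s_0)$ for $n$ large, Theorem~\ref{thm:event_thm}(a) gives $\prob(\bX \in \event_n) \ge 1 - 4e^{-c_1 n} \to 1$. On this event, the compatibility condition holds for $S = \supp(\theta_0)$ for every admissible $\theta_0$ with constant $\phi_0 = C_{\min}^{1/2}/2$, and $\max_i \hSigma_{i,i} \le 3/2$, which bounds the column norms of $\bX/\sqrt{n}$.

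The second step is to apply the Sun-Zhang bound conditionally on $\event_n$. Since $W \sim \normal(0, \sigma^2 \id_{n\times n})$, a standard Gaussian maximum tail bound combined with $\max_i \hSigma_{i,i} \le 3/2$ yields $\max_i |(\bX^\sT W)_i|/n \le \sigma\sqrt{6\log(p)/n}$ with probability at least $1 - 2/p$, so $\tlambda = 10\sqrt{(2\log p)/n}$ dominates this score with room to spare. The Sun-Zhang analysis then yields, on the intersection of $\event_n$ with this noise event,
\begin{equation*}
\Big|\frac{\hsigma}{\sigma} - 1\Big| \,\le\, \frac{C\, s_0 \log p}{\phi_0^2\, n}\,.
\end{equation*}
The sparsity assumption $s_0 = o(\sqrt{n}/\log p)$ gives $s_0 \log(p)/n = o(1/\sqrt{n}) = o(1)$, so this upper bound tends to zero.

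Finally, I note that the bound is uniform in $\theta_0$ with $\|\theta_0\|_0 \le s_0$: the event $\event_n$ is a property of $\bX$ only, the Gaussian noise tail bound depends on $(\bX, W)$ but not on $\theta_0$, and the constant $C/\phi_0^2$ is deterministic. Taking the supremum over admissible $\theta_0$ and then the limit $n \to \infty$ yields Eq.~(\ref{eq:ConsistencySigma}). The principal obstacle is the bookkeeping between the Sun-Zhang formulation (which is typically stated under unit-norm columns of $\bX/\sqrt{n}$) and our setting (where column norms are only uniformly bounded by a constant on $\event_n$); this is a routine rescaling of constants rather than a substantive difficulty, and the assumptions (i)--(iii) are precisely what Theorem~\ref{thm:event_thm}(a) supplies.
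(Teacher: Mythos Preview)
Your proposal is correct and follows essentially the same route as the paper: condition on the design event $\event_n(\phi_0,s_0,K)$ from Theorem~\ref{thm:event_thm}(a), then invoke the Sun--Zhang analysis~\cite{SZ-scaledLASSO} on that event, using the Gaussian tail bound on $\|\bX^\sT W/n\|_\infty$ to certify that $\tlambda$ dominates the noise score. The only cosmetic differences are that the paper routes the comparison through the oracle noise estimator $\sigma^*$ (bounding $|\hsigma/\sigma^*-1|$ by $4\sqrt{s_0}\,\tlambda/\phi_0$ and then $|\sigma^*/\sigma-1|$ by a chi-squared tail) rather than citing a direct $|\hsigma/\sigma-1|$ bound, and states a rate of order $\sqrt{s_0\log p/n}$ rather than your $s_0\log p/n$; both vanish under $s_0=o(\sqrt{n}/\log p)$, so this does not affect the argument.
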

The proof of this lemma is fairly straightforward and can be found in
Appendix \ref{sec:ConsistencySigma}.

\subsection{Confidence intervals}
\label{sec:ConfidenceInterval}

In view of Lemma \ref{lemma:LastDistribution}, it is quite
straightforward to construct asymptotically valid confidence
intervals. Namely, for $i\in [p]$ and significance level $\alpha\in
(0,1)$, we let
\begin{eqnarray}\label{eq:CI}
\begin{split}
J_i(\alpha) &
\equiv  [\htheta^u_{i} -\delta(\alpha,n), \htheta^u_{i}
+\delta(\alpha,n)]\, ,\\
 \delta(\alpha,n) &\equiv \Phi^{-1}(1-\alpha/2) \,
 \frac{\hsigma}{\sqrt{n}}
 [M   \hSigma M^\sT]^{1/2}_{i,i}\,.
\end{split}
\end{eqnarray}
\begin{thm}\label{coro:Interval}
Consider a sequence of design matrices $\bX\in\reals^{n\times p}$,
 with dimensions $n\to\infty$, $p=p(n)\to\infty$ satisfying the
 assumptions of Lemma \ref{lemma:LastDistribution}.

Consider the linear model~\eqref{eqn:regression} and let $\htheta^u$ be defined as per
Eq.~\eqref{eq:hthetau} in Algorithm 1, with $\coh =a\sqrt{(\log p)/n}$
and $\lambda = \sigma\sqrt{(c^2\log p)/n}$, with $a,c$  large enough constants.
Finally, let $\hsigma = \hsigma(y,\bX)$ a consistent estimator of the noise
level in the sense of Eq.~(\ref{eq:ConsistencySigma}). Then the
confidence interval $J_i(\alpha)$ is asymptotically valid, namely
\begin{align}
\lim_{n\to\infty}\prob\Big(\theta_{0,i}\in J_{i}(\alpha)\Big) =
1-\alpha\, .
 \end{align}
\end{thm}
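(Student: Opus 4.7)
The plan is to reduce the statement to Lemma \ref{lemma:LastDistribution}, which already does essentially all the work by establishing the asymptotic standard-normal distribution (uniformly over the sparse parameter) of the studentized pivot
\[
T_{n,i} \;\equiv\; \frac{\sqrt{n}(\htheta^u_i - \theta_{0,i})}{\hsigma\,[M \hSigma M^\sT]_{i,i}^{1/2}}\,.
\]
First I would rewrite the coverage event by dividing through by $\delta(\alpha,n)$ and unpacking definitions: since $\delta(\alpha,n) = \Phi^{-1}(1-\alpha/2)\,\hsigma\,[M\hSigma M^\sT]_{i,i}^{1/2}/\sqrt n$, the event $\{\theta_{0,i}\in J_i(\alpha)\}$ is literally the event $\{|T_{n,i}|\le \Phi^{-1}(1-\alpha/2)\}$. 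This turns the coverage probability into a two-sided tail probability of $T_{n,i}$.

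Second, I would verify that the hypotheses needed by Lemma \ref{lemma:LastDistribution} are exactly those imposed in Theorem \ref{coro:Interval} (subgaussian rows, spectral bounds on $\Sigma$, sparsity $s_0 = o(\sqrt n/\log p)$, the prescribed choices of $\coh$ and $\lambda$, and a consistent noise estimator in the sense of (\ref{eq:ConsistencySigma})). Applying that lemma with $x = \Phi^{-1}(1-\alpha/2)$ and $x = -\Phi^{-1}(1-\alpha/2)$ and subtracting gives
\[
\prob\!\left(|T_{n,i}|\le \Phi^{-1}(1-\alpha/2)\right) \;\longrightarrow\; \Phi(\Phi^{-1}(1-\alpha/2)) - \Phi(-\Phi^{-1}(1-\alpha/2)) \;=\; 1-\alpha,
\]
which is exactly the claimed coverage. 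If one wants a concrete $\hsigma$, Lemma \ref{lemma:ConsistencySigma} supplies the scaled LASSO estimator as a valid choice.

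I do not anticipate any genuine obstacle here: the theorem is a direct corollary of the distributional lemma. All the technical content has been absorbed earlier in the paper, namely (i) the decomposition $\sqrt n(\htheta^u-\theta_0) = Z + \Delta$ from Theorem \ref{thm:main_thm} with $Z\mid\bX\sim\normal(0,\sigma^2 M\hSigma M^\sT)$; (ii) the negligibility $\|\Delta\|_\infty = o(1)$ under $s_0 = o(\sqrt n/\log p)$, also from Theorem \ref{thm:main_thm}; (iii) the variance lower bound $[M\hSigma M^\sT]_{i,i}\ge (1-\coh)^2/\hSigma_{i,i}$ from Lemma \ref{lem:missing_bound}, which keeps the standardization well-defined and ensures that the bias is dominated by the stochastic part; and (iv) a Slutsky-type substitution of $\sigma$ by $\hsigma$ using (\ref{eq:ConsistencySigma}). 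The only thing left for the present theorem is the one-line translation between $\{\theta_{0,i}\in J_i(\alpha)\}$ and a symmetric tail event for $T_{n,i}$, followed by the continuous-mapping evaluation of the limit.
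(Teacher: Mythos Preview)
Your proposal is correct and matches the paper's own proof essentially line for line: the paper also rewrites $\{\theta_{0,i}\in J_i(\alpha)\}$ as a two-sided event for the pivot $\sqrt{n}(\htheta^u_i-\theta_{0,i})/(\hsigma[M\hSigma M^\sT]_{i,i}^{1/2})$, applies Lemma~\ref{lemma:LastDistribution} at $x=\pm\Phi^{-1}(1-\alpha/2)$, and subtracts to obtain $1-\alpha$. Your additional commentary on ingredients (i)--(iv) is accurate but not needed for the proof itself, since those are already packaged inside Lemma~\ref{lemma:LastDistribution}.
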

\begin{proof}
The proof is an immediate consequence of Lemma
\ref{lemma:LastDistribution} since
\begin{align}
\lim_{n\to\infty}\prob\Big(\theta_{0,i}\in J_{i}(\alpha)\Big) =&
\lim_{n\to\infty}\prob \left\{\frac{\sqrt{n}(\htheta^u_i - \theta_{0,i})}{\hsigma [M \hSigma M^\sT]_{i,i}^{1/2}} 
\le \Phi^{-1}(1-\alpha/2) \right\} \\
&-
\lim_{n\to\infty}\prob \left\{\frac{\sqrt{n}(\htheta^u_i - \theta_{0,i})}{\hsigma [M \hSigma M^\sT]_{i,i}^{1/2}} 
\le -\Phi^{-1}(1-\alpha/2) \right\}\\
 = &1-\alpha
\, .
 \end{align}
\end{proof}

\subsection{Hypothesis testing}
\label{sec:HypothesisTesting}

An important advantage of sparse linear regression models is that they
provide parsimonious explanations of the data in terms of a small
number of covariates. The easiest way to select the `active'
covariates is to choose the indexes $i$ for which $\htheta_i^n\neq
0$. This approach however does not provide a measure of statistical
significance for the finding that the coefficient is non-zero.

More precisely, we are interested in testing an
individual null hypothesis $H_{0,i} : \theta_{0,i} = 0$ versus the alternative $H_{A,i}: \theta_{0,i} \neq 0$, and
assigning $p$-values for these tests.
We construct a $p$-value $P_i$ for the test $H_{0,i}$ as follows:
\begin{eqnarray}
P_i = 2\bigg(1 - \Phi\bigg(\frac{\sqrt{n}\, |\htheta^u_i|}{\hsigma [M \hSigma M^\sT]_{i,i}^{1/2}} \bigg)\bigg)\,.  \label{eq:p-value}
\end{eqnarray}
The decision rule is then based on the $p$-value $P_i$:
\begin{eqnarray}
\begin{split}\label{eq:decision-rule}
\hT_{i,\bX}(y) = \begin{cases}
1 & \text{if } P_i \le \alpha \quad \quad \text{ (reject $H_{0,i}$)}\,,\\
0 & \text{otherwise} \quad\quad \text{(accept $H_{0,i}$)}\,,
\end{cases}
\end{split}
\end{eqnarray} 
where $\alpha$ is the fixed target Type I error probability.
We measure the quality of the test $\hT_{i,\bX}(y)$ in terms of its
significance level $\alpha_i$ 
 and statistical power $1-\beta_i$. Here $\alpha_i$ is the probability
 of type I error (i.e. of a false positive at $i$)
 and $\beta_i$ is the probability of type II error (i.e. of a false
 negative at $i$).

Note that it is important to consider the tradeoff between
statistical significance and power. Indeed any significance level
$\alpha$ can be achieved  by randomly rejecting $H_{0,i}$ with
probability $\alpha$. This test achieves power
$1-\beta=\alpha$. Further note that,  without further assumption, no nontrivial power can be achieved. 
 In fact, choosing $\theta_{0,i} \neq 0$ arbitrarily close to zero, $H_{0,i}$ becomes indistinguishable from its alternative.
 We will therefore assume that, whenever $\theta_{0,i} \neq 0$, we have $|\theta_{0,i}| > \lb$ as well. 
 We take a minimax perspective and require the test to 
behave uniformly well over $s_0$-sparse vectors. Formally, given 
a family of tests $T_{i,\bX}:\reals^n\to\{0,1\}$, indexed by $i\in
[p]$, $\bX\in\reals^{n\times p}$, we define, for $\lb>0$ a lower bound on the
non-zero entries:
\begin{align}
\alpha_{i,n}(T) &\equiv \sup \Big\{\prob_{\theta_0}(T_{i,\bX}(y) = 1): \,\,  \theta_0 \in \reals^p,\,\, \|\theta_0\|_0\le s_0(n),\,\, \theta_{0,i} = 0 \Big\}\,. \label{eq:AlphaDef}\\
\beta_{i,n}(T; \lb) &\equiv \sup \Big\{\prob_{\theta_0}(T_{i,\bX}(y) = 0): \,\, \theta_0 \in \reals^p,\,\, \|\theta_0\|_0\le s_0(n),\,\, |\theta_{0,i}| \ge \lb \Big\}\,. \label{eq:BetaDef}
\end{align}
 Here, we made dependence on $n$ explicit. Also, $\prob_\theta(\,
 \cdot\, )$ denotes the induced probability for random design $\bX$ and noise
 realization $w$, given the fixed parameter vector $\theta$.
 Our next theorem establishes bounds on $\alpha_{i,n}(\hT)$ and
 $\beta_{i,n}(\hT;\lb)$ for our decision rule  (\ref{eq:decision-rule}).
\begin{thm}\label{thm:error-power}
Consider a sequence of design matrices $\bX\in\reals^{n\times p}$,
 with dimensions $n\to\infty$, $p=p(n)\to\infty$ satisfying the
 assumptions of Lemma \ref{lemma:LastDistribution}.

Consider the linear model~\eqref{eqn:regression} and let $\htheta^u$ be defined as per
Eq.~\eqref{eq:hthetau} in Algorithm 1, with $\coh =a\sqrt{(\log p)/n}$
and $\lambda = \sigma\sqrt{(c^2\log p)/n}$, with $a,c$  large enough constants.
Finally, let $\hsigma = \hsigma(y,\bX)$ a consistent estimator of the noise
level in the sense of Eq.~(\ref{eq:ConsistencySigma}),
and $\hT$ be the test defined in Eq.~(\ref{eq:decision-rule}).

Then the following holds true for any fixed sequence of
integers $i=i(n)$:
\begin{align}
\lim_{n \to \infty} \alpha_{i,n}(\hT) &\le \alpha\,.\label{eq:typeI}\\
\lim\inf_{n \to \infty} \frac{1-\beta_{i,n}(\hT;\lb)}{1-\beta_{i,n}^{*}(\lb)}& \ge 1\,,\;\;\;\;\;\;\; 
1-\beta_{i,n}^{*}(\lb) \equiv G\bigg(\alpha,\frac{\sqrt{n}\,\lb}{\sigma [\Sigma^{-1}_{i,i}]^{1/2}}\bigg)\,, \label{eq:power}
\end{align}
where, for $\alpha \in [0,1]$ and $u \in \reals_+$, the function $G(\alpha,u)$ is defined as follows:
\[
G(\alpha,u) = 2 - \Phi(\Phi^{-1}(1-\frac{\alpha}{2}) + u) - \Phi(\Phi^{-1}(1-\frac{\alpha}{2}) - u)\,. 
\]
\end{thm}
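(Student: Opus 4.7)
Both claims follow from Lemma~\ref{lemma:LastDistribution}, which gives uniform Gaussian convergence of the pivot $S_n \equiv \sqrt{n}(\htheta^u_i - \theta_{0,i})/(\hsigma [M\hSigma M^\sT]_{i,i}^{1/2})$ over $s_0$-sparse $\theta_0$. Continuity of $\Phi$ upgrades pointwise-in-$x$ uniformity over $\theta_0$ to uniformity in $(x,\theta_0)$ by a standard Polya-type argument. Let $c_\alpha \equiv \Phi^{-1}(1-\alpha/2)$ and write the test statistic as
\[
T_n \equiv \frac{\sqrt{n}\,\htheta^u_i}{\hsigma\,[M\hSigma M^\sT]_{i,i}^{1/2}} = S_n + \varepsilon_n\,\sign(\theta_{0,i}), \qquad \varepsilon_n \equiv \frac{\sqrt{n}\,|\theta_{0,i}|}{\hsigma\,[M\hSigma M^\sT]_{i,i}^{1/2}}\ge 0,
\]
so that $\hT_{i,\bX}=1$ iff $|T_n|\ge c_\alpha$. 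For the Type~I bound \eqref{eq:typeI}, restrict to $\theta_{0,i}=0$, which forces $T_n=S_n$; the uniform Gaussian convergence then yields $\sup_{\theta_0:\theta_{0,i}=0,\,\|\theta_0\|_0\le s_0}\prob_{\theta_0}(|S_n|\ge c_\alpha) \to 2(1-\Phi(c_\alpha))=\alpha$.

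For the power bound \eqref{eq:power}, take $\theta_{0,i}>0$ (the opposite sign is symmetric) and decompose
\[
\prob_{\theta_0}(|T_n|\ge c_\alpha) = \prob_{\theta_0}(S_n\ge c_\alpha - \varepsilon_n) + \prob_{\theta_0}(S_n\le -c_\alpha - \varepsilon_n).
\]
On the high-probability event on which $\hsigma/\sigma = 1+o_p(1)$ (Lemma~\ref{lemma:ConsistencySigma}) and $[M\hSigma M^\sT]_{i,i}\le \Sigma^{-1}_{i,i}(1+o_p(1))$ (see below), one has $\varepsilon_n \ge \varepsilon_n^{\star}(1-o_p(1))$ with $\varepsilon_n^{\star}\equiv \sqrt{n}|\theta_{0,i}|/(\sigma[\Sigma^{-1}_{i,i}]^{1/2})$. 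Combining the uniform Gaussian convergence of $S_n$ with a Slutsky argument applied to the almost-deterministic shift, the right-hand side becomes $G(\alpha, \varepsilon_n^{\star}) + o(1)$, via the algebraic identity $1-\Phi(c_\alpha - v) + \Phi(-c_\alpha - v) = 2 - \Phi(c_\alpha - v) - \Phi(c_\alpha + v) = G(\alpha, v)$ for $v\ge 0$. Monotonicity of $G(\alpha,\cdot)$ on $[0,\infty)$ together with $|\theta_{0,i}|\ge \lb$ then give $\prob_{\theta_0}(|T_n|\ge c_\alpha) \ge G(\alpha, \sqrt{n}\lb/(\sigma[\Sigma^{-1}_{i,i}]^{1/2})) - o(1)$, from which \eqref{eq:power} follows by continuity of $G$.

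The main obstacle is the inequality $[M\hSigma M^\sT]_{i,i}\le \Sigma^{-1}_{i,i}(1+o_p(1))$. I would establish it by an optimality comparison between the minimizer $m_i$ of \eqref{eq:optimization} and the explicit candidate $m^{\star}\equiv \Sigma^{-1}e_i$. Feasibility of $m^{\star}$ is the statement $\|(\hSigma-\Sigma)\Sigma^{-1}e_i\|_\infty \le \coh = a\sqrt{(\log p)/n}$: writing the rows of $\bX$ as $X_k=\Sigma^{1/2}Y_k$ with $\|Y_k\|_{\psi_2}\le \kappa$, each coordinate is a centered average of the sub-exponential products $X_{k,j}\<X_k,\Sigma^{-1}e_i\>$, whose sub-exponential norm is bounded by a constant times $\kappa^2/\sqrt{C_{\min}}$, so Bernstein's inequality plus a union bound over $j\in[p]$ controls the infinity norm at the required rate with probability $1-o(1)$ for $a$ large enough. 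Optimality then yields $m_i^\sT\hSigma m_i \le (m^{\star})^\sT\hSigma m^{\star} = e_i^\sT\Sigma^{-1}\hSigma\Sigma^{-1}e_i$, and a second sub-exponential concentration bound on $\tfrac{1}{n}\sum_k \<X_k,\Sigma^{-1}e_i\>^2$ shows that this quantity concentrates around its expectation $\Sigma^{-1}_{i,i}$.

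This last step is the crux: because the theorem imposes no sparsity on $\Sigma^{-1}$, the vector $\Sigma^{-1}e_i$ can be fully dense, so feasibility rests on entrywise concentration of $(\hSigma-\Sigma)v$ against a fixed but generic dense direction $v$. The subgaussianity of the rows of $\bX\Sigma^{-1/2}$ together with $\sigma_{\min}(\Sigma)\ge C_{\min}$ are exactly the ingredients that make this concentration go through uniformly over $i\in [p]$, which is the structural difference from \cite{GBR-hypothesis} and is what produces the $\Sigma^{-1}_{i,i}$ factor in the limiting power expression.
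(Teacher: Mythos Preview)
Your proposal is correct and follows essentially the same route as the paper: both derive the Type~I bound directly from Lemma~\ref{lemma:LastDistribution}, and for the power bound both establish $m_i^\sT\hSigma m_i \le \Sigma^{-1}_{i,i}+o(1)$ by showing $\Sigma^{-1}e_i$ is feasible for \eqref{eq:optimization} (the paper packages this as Lemma~\ref{lem:main_lem}) and then invoking optimality plus sub-exponential concentration of $e_i^\sT\Sigma^{-1}\hSigma\Sigma^{-1}e_i$ around $\Sigma^{-1}_{i,i}$. The only cosmetic difference is that the paper upgrades the variance bound to an almost-sure statement via Borel--Cantelli before plugging it into the power computation, whereas you work directly with an $o_p(1)$ bound and a Slutsky-type argument; both achieve the same end.
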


Theorem~\ref{thm:error-power} is proved in Appendix~\ref{proof:error-power}.
It is easy to see that, for any $\alpha >0$, $u \mapsto G(\alpha,u)$ is continuous and monotone increasing.
Moreover, $G(\alpha,0) = \alpha$ which is the trivial power obtained by randomly rejecting $H_{0,i}$ with probability
$\alpha$. As $\lb$ deviates from zero, we obtain nontrivial power. 
Notice that in order to achieve a specific power  $\beta>\alpha$, our scheme requires $\lb \ge c_\beta (\sigma /\sqrt{n})$, 
for some constant $c_\beta$ that depends on $\beta$.  This is because 
$\Sigma^{-1}_{i,i} \le \sigma_{\max}(\Sigma^{-1}) \le (\sigma_{\min}(\Sigma))^{-1}=  O(1)$.

\subsubsection{Near optimality of the hypothesis testing procedure}

The authors of~\cite{javanmard2013hypothesis} prove an upper bound for the minimax power of tests with
a given significance level $\alpha$, under random designs.
For the readers' convenience, we recall here this result.
(The following is a restatement of \cite[Theorem
2.3]{javanmard2013hypothesis}, together with a standard estimate on
the tail of chi-squared random variables.)
\begin{thm}[\cite{javanmard2013hypothesis}]\label{thm:GeneralUpperBound}
Assume $\bX\in\reals^{n\times p}$ to be a random design matrix with i.i.d. Gaussian rows
with zero mean and covariance $\Sigma$.
For $i\in [p]$, let $T_{i,\bX}:\reals^n\to\reals^n$  be a hypothesis
testing procedure for testing $H_{0,i}:\, \theta_{0,i}=0$, and denote
by $\alpha_i(T)$ and $\beta_{i,n}(T;\lb)$ its fraction of type I and
type II errors, cf. Eqs.~(\ref{eq:AlphaDef}) and (\ref{eq:BetaDef}).
Finally, for $S\subseteq [p]\setminus \{i\}$, define 
${\Sigma}_{i|S}
\equiv{\Sigma}_{ii}-{\Sigma}_{i,S}{\Sigma}_{S,S}^{-1}{\Sigma}_{S,i}\in\reals$.  

For any $\ell\in\reals$ and $|S| < s_0<n$, if 
$\alpha_{i,n}(T)\le \alpha$, then
\begin{align}
1 - \beta_{i,n}(T;\lb) & \le 
G\Big(\alpha,\frac{\lb}{\sigma_{\rm eff}(\xi)}\Big) 
+e^{-\xi^2/8}\,,\label{eq:UpperBoundPower}\\
\sigma_{\rm eff}(\xi) & \equiv
\frac{\sigma}{{\Sigma}^{1/2}_{i|S}(\sqrt{n-s_0+1}+\xi)}\, ,
\end{align}
for any  $\xi\in [0,(3/2)\sqrt{n-s_0+1}]$.
\end{thm}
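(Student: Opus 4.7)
The plan is to derive this minimax upper bound via a two-point / invariance reduction: conditional on the random design, integrate out the nuisance coefficients on $S$ so that the testing problem collapses to a one-dimensional Gaussian shift in a known (random) direction, apply Neyman--Pearson conditionally, and finally control the random length of that direction by Gaussian concentration.

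\textbf{Step 1 (reduction to a Gaussian shift).} Fix $S\subseteq[p]\setminus\{i\}$ with $|S|=s_0-1$ (so that $n-|S|=n-s_0+1$, matching the statement) and restrict to parameters with $\supp(\theta_0)\subseteq S\cup\{i\}$; these still satisfy $\|\theta_0\|_0\le s_0$, so $T$ has Type-I error at most $\alpha$ over the null of this sub-family, and its power against any specific alternative in the sub-family upper-bounds $1-\beta_{i,n}(T;\lb)$. Within the sub-family, $\theta_{0,S}\in\reals^{|S|}$ is a free nuisance parameter, and the testing problem $\theta_{0,i}=0$ vs.\ $\theta_{0,i}=\pm\lb$ is invariant under $\theta_{0,S}\mapsto\theta_{0,S}+v$, which acts on the data by $y\mapsto y+\bX_S v$. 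I would make this reduction rigorous either by Hunt--Stein (the group $\reals^{|S|}$ is amenable) or, more concretely, by placing a Gaussian prior $\theta_{0,S}\sim\normal(0,\tau^2\id)$ under both hypotheses and letting $\tau\to\infty$: a Woodbury identity shows the integrated likelihoods become a Gaussian shift in the maximal invariant $Z=P_S^{\perp}y$, where $P_S^{\perp}$ projects onto $\text{range}(\bX_S)^{\perp}$. Writing $\tilde X_i\equiv P_S^{\perp}\bX e_i$, conditional on $\bX$ one obtains $Z\sim\normal(0,\sigma^2 P_S^{\perp})$ under $H_0$ and $Z\sim\normal(\pm\lb\tilde X_i,\sigma^2 P_S^{\perp})$ under the alternative.

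\textbf{Step 2 (Neyman--Pearson and averaging).} This is a two-sided one-dimensional Gaussian location test in the known direction $\tilde X_i/\|\tilde X_i\|$ with effect size $u(\bX)=\lb\|\tilde X_i\|/\sigma$, so the Neyman--Pearson envelope yields $\prob(T=1\mid\text{alt},\bX)\le G(\alpha(\bX),u(\bX))$, where $\alpha(\bX)=\prob(T=1\mid H_0,\bX)$. Integrating over $\bX$ with the unconditional constraint $\E_\bX[\alpha(\bX)]\le\alpha$, and using concavity of the Neyman--Pearson power envelope $\alpha\mapsto G(\alpha,u)$ together with monotonicity of $G(\alpha,\cdot)$, Jensen's inequality gives, for any high-probability event $\cA$ on which $u(\bX)$ is uniformly bounded,
\begin{align*}
1-\beta_{i,n}(T;\lb)\le G\bigl(\alpha,\sup_{\bX\in\cA}u(\bX)\bigr)+\prob(\bX\notin\cA).
\end{align*}

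\textbf{Step 3 (concentration) and main obstacle.} Since $\bX$ has i.i.d.\ $\normal(0,\Sigma)$ rows, the conditional law of $\bX e_i$ given $\bX_S$ is Gaussian with residual covariance $\Sigma_{i|S}\id_n$, so $\|\tilde X_i\|^2/\Sigma_{i|S}\sim\chi^2_{n-s_0+1}$. A standard Gaussian-Lipschitz concentration bound on the norm then yields $\|\tilde X_i\|\le\sqrt{\Sigma_{i|S}}(\sqrt{n-s_0+1}+\xi)$ with probability at least $1-e^{-\xi^2/8}$; on this event $u(\bX)\le\lb/\sigma_{\rm eff}(\xi)$, and substituting into Step~2 produces the claimed bound. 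The main technical obstacle is Step~1 --- rigorously justifying the invariance/marginalization reduction in a non-Bayesian formulation, since one must exclude that some non-invariant test could beat the invariant-test bound uniformly over $\theta_{0,S}$. The weak-prior $\tau\to\infty$ limit sidesteps Hunt--Stein entirely and reduces the step to an explicit Woodbury calculation, which seems like the cleanest route.
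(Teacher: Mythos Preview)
The paper does not prove this theorem; it is quoted from \cite{javanmard2013hypothesis} (with only a chi-squared tail estimate appended), and the text supplies only the intuition that any test is dominated by an oracle knowing $\supp(\theta_0)\setminus\{i\}$, after which ``computing the minimax power of such oracle reduces to a classical hypothesis testing problem.'' Your three-step plan is precisely a fleshed-out version of that sketch, and the ingredients --- sub-family restriction, invariance/Neyman--Pearson, and $\chi^2_{n-s_0+1}$ concentration for $\|\tilde X_i\|^2/\Sigma_{i|S}$ --- are the right ones.

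One point worth tightening: in Step~2 as written, the bound $\prob(T=1\mid\text{alt},\bX)\le G(\alpha(\bX),u(\bX))$ tacitly assumes $T$ is already a function of $Z=P_S^\perp y$; otherwise the conditional level $\alpha(\bX)$ still depends on $\theta_{0,S}$ and is not well-defined. You correctly flag this as the main obstacle, and your proposed fix (finite-$\tau$ Gaussian prior, then $\tau\to\infty$) works, but it is cleanest to apply Neyman--Pearson to the \emph{unconditional} Bayes problem rather than conditionally on $\bX$: the likelihood ratio depends on $(\bX,y)$ only through $q(\bX)=\langle\bX e_i,\Sigma_y^{-1}\bX e_i\rangle$ and $s=\langle\bX e_i,\Sigma_y^{-1}y\rangle$, with $s/\sqrt{q}\sim\normal(0,1)$ independent of $q$ under $H_0$, so the NP power equals $\E_q[G(\alpha(q),\lb\sqrt{q})]$ for some allocation with $\E_q[\alpha(q)]=\alpha$, and your concavity/Jensen argument applies directly to this object. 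The limit $\tau\to\infty$ then enters only through the elementary pointwise convergence $q_\tau(\bX)\downarrow\sigma^{-2}\|\tilde X_i\|^2$. Your concentration step is fine; Gaussian Lipschitz in fact gives $e^{-\xi^2/2}$, stronger than the stated $e^{-\xi^2/8}$.
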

The intuition behind this bound is straightforward: 
the power of any test for $H_{0,i}:\, \theta_{0,i}=0$ is upper bounded
by the power of an oracle test that is given access to the support of
$\theta_{0}$, with the eventual exclusion of $i$. Namely, the oracle
has
access to $\supp(\theta_0)\setminus\{i\}$ and outputs a test for
$H_{0,i}$. Computing the minimax power of such oracle reduces to a
classical hypothesis testing problem. 

Let us emphasize that the last theorem applies to \emph{Gaussian} random
designs. Since this theorem establishes a negative result (an upper
bound on power) it makes sense to consider this somewhat more
specialized setting.

Using this upper bound, we can restate Theorem \ref{thm:error-power}
as follows.
\begin{coro}\label{thm:optimality}
Consider a Gaussian random design model that satisfies the conditions of
Theorem~\ref{thm:error-power},
and let $\hT$ be the testing procedure defined in
Eq.~(\ref{eq:decision-rule}),
with $\htheta^u$ as in Algorithm 1.
Further, let
\begin{align}
\eta_{\Sigma,s_0}\equiv \min_{i \in [p];S} \Big\{\Sigma_{i|S}\, \Sigma_{ii}^{-1}: \,\, S \subseteq[p]\backslash\{i\},\, |S|< s_0 \Big\}\,.
\end{align}

Under the sparsity assumption $s_0 = o(\sqrt{n}/\log p)$, the
following holds true.
If $\{T_{i,\bX}\}$ is any sequence of tests with $\lim\sup_{n \to
  \infty} \alpha_{i,n}(T) \le \alpha$,
then 
\begin{align}
\lim\inf_{n \to \infty} \frac{1-\beta_{i,n}(\hT;\lb)}{1-\beta_{i,n/
  \eta_{\Sigma,s_0}}(T;\lb)}& \ge 1\,. 
\end{align}
In other words, the asymptotic efficiency of the test $\hT$ is at least $1/\eta_{\Sigma,s_0}$.
\end{coro}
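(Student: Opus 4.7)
\medskip
\noindent\emph{Proof sketch.}
The plan is to sandwich the ratio by combining the two distributional results already established in the paper: Theorem~\ref{thm:error-power}, which provides an asymptotic lower bound on the power of $\hT$ at sample size $n$, and Theorem~\ref{thm:GeneralUpperBound}, which provides an asymptotic upper bound on the power of any test $T$ at any sample size, in particular at $n' := n/\eta_{\Sigma,s_0}$.

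First I would apply Theorem~\ref{thm:error-power} to the numerator: this gives directly
\[
\liminf_{n\to\infty}\ \frac{1-\beta_{i,n}(\hT;\lb)}{G\bigl(\alpha,\,\sqrt{n}\,\lb/(\sigma[\Sigma^{-1}_{i,i}]^{1/2})\bigr)}\ \ge\ 1.
\]
For the denominator, I would apply Theorem~\ref{thm:GeneralUpperBound} at sample size $n'$ with a careful choice of the conditioning set $S^\star$ and the free parameter $\xi$. The natural choice is to pick $S^\star\subseteq[p]\setminus\{i\}$ with $|S^\star|<s_0$ achieving the minimum defining $\eta_{\Sigma,s_0}$ (so that $\Sigma_{i|S^\star}\Sigma_{ii}^{-1}$ equals, or is bounded by, $\eta_{\Sigma,s_0}$ for the index $i$ of interest), and to set $\xi_n=\sqrt{\log n}$, which ensures simultaneously $e^{-\xi_n^2/8}=n^{-1/8}\to 0$ and $\xi_n/\sqrt{n'-s_0+1}\to 0$ thanks to $s_0=o(\sqrt{n}/\log p)$ and $n'\ge n$.

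Plugging this into Theorem~\ref{thm:GeneralUpperBound} and simplifying $\sqrt{n'-s_0+1}+\xi_n=\sqrt{n/\eta_{\Sigma,s_0}}\,(1+o(1))$ while substituting $\Sigma_{i|S^\star}=\eta_{\Sigma,s_0}\Sigma_{ii}$, the bound collapses to
\[
1-\beta_{i,n'}(T;\lb)\ \le\ G\!\left(\alpha,\ \frac{\lb\sqrt{n\,\Sigma_{ii}}}{\sigma}\right)(1+o(1))+o(1),
\]
so that the factor of $\eta_{\Sigma,s_0}$ cancels between $\sqrt{n'}$ and $\sqrt{\Sigma_{i|S^\star}}$. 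It is this cancellation (the "$\sqrt{n'}\,\sqrt{\Sigma_{i|S^\star}}=\sqrt{n\Sigma_{ii}}$'' identity) that is the source of the efficiency factor $1/\eta_{\Sigma,s_0}$ in the statement.

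The main obstacle is then the comparison step: showing that the $G$-argument in the lower bound, $\sqrt{n}\lb/(\sigma[\Sigma^{-1}_{i,i}]^{1/2})=\lb\sqrt{n\Sigma_{i|[p]\setminus\{i\}}}/\sigma$, matches the $G$-argument $\lb\sqrt{n\Sigma_{ii}}/\sigma$ in the upper bound in the appropriate asymptotic regime, so that the ratio's $\liminf$ is at least $1$. I would handle this by exploiting the continuity and monotonicity of $G(\alpha,\cdot)$, together with the fact that under the assumptions of Theorem~\ref{thm:error-power} we have $\Sigma_{ii},\,\Sigma_{i|[p]\setminus\{i\}}$ of the same order (bounded above and below by constants depending on $C_{\min},C_{\max}$), and that $\liminf_{n} G(\alpha,a_n)/G(\alpha,b_n)\ge 1$ whenever $a_n,b_n$ are of the same order and the tests in question achieve non-trivial power. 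Reading off the resulting inequality yields the claimed bound $\liminf_n (1-\beta_{i,n}(\hT;\lb))/(1-\beta_{i,n/\eta_{\Sigma,s_0}}(T;\lb))\ge 1$, i.e.\ an asymptotic efficiency for $\hT$ of at least $1/\eta_{\Sigma,s_0}$.
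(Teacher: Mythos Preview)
Your overall strategy—combining the power lower bound of Theorem~\ref{thm:error-power} for $\hT$ at sample size $n$ with the upper bound of Theorem~\ref{thm:GeneralUpperBound} for $T$ at sample size $n'=n/\eta_{\Sigma,s_0}$—is exactly the argument the paper has in mind, and the choice $\xi_n=\sqrt{\log n}$ is fine. The problem is your reading of the symbol $\Sigma_{ii}^{-1}$ in the definition of $\eta_{\Sigma,s_0}$, which then manufactures the ``main obstacle'' you try to patch with a false claim.

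By the paper's stated convention, $\Sigma_{ii}^{-1}=(\Sigma^{-1})_{ii}$, not $1/\Sigma_{ii}$. (You actually use the correct reading yourself when you write $1/\Sigma^{-1}_{i,i}=\Sigma_{i|[p]\setminus\{i\}}$ for the lower bound, but the opposite reading when unpacking $\eta_{\Sigma,s_0}$.) With the correct reading, $\eta_{\Sigma,s_0}\ge 1$ always, since $\Sigma_{i|S}\ge \Sigma_{i|[p]\setminus\{i\}}=1/(\Sigma^{-1})_{ii}$; hence $n'\le n$, contrary to your assertion ``$n'\ge n$''. More importantly, the cancellation you were looking for becomes \emph{exact}: for $S$ with $\Sigma_{i|S}(\Sigma^{-1})_{ii}\le\eta_{\Sigma,s_0}$,
\[
\frac{\lb\,\Sigma_{i|S}^{1/2}\sqrt{n'}}{\sigma}
=\frac{\lb\,\Sigma_{i|S}^{1/2}\sqrt{n}}{\sigma\sqrt{\eta_{\Sigma,s_0}}}
\le \frac{\lb\sqrt{n}}{\sigma\,[(\Sigma^{-1})_{ii}]^{1/2}}
=\frac{\sqrt{n}\,\lb}{\sigma\,[\Sigma^{-1}_{i,i}]^{1/2}},
\]
which is precisely the argument of $G$ in the lower bound $1-\beta^*_{i,n}(\lb)$ from Theorem~\ref{thm:error-power}. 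Monotonicity of $u\mapsto G(\alpha,u)$ then yields the desired inequality directly; there is no residual mismatch between $\Sigma_{ii}$ and $1/(\Sigma^{-1})_{ii}$ to repair.

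Your attempted repair is in any case incorrect: the claim that $\liminf_n G(\alpha,a_n)/G(\alpha,b_n)\ge 1$ whenever $a_n,b_n$ are ``of the same order'' is false. Take $\lb=c/\sqrt{n}$ so that $a_n\equiv a<b\equiv b_n$ are fixed positive constants; then $G(\alpha,a)/G(\alpha,b)$ is a fixed number strictly less than $1$, since $G(\alpha,\cdot)$ is strictly increasing. So the comparison step as you wrote it does not close the argument.
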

Hence, our test $\hT$ has nearly  optimal power in the
following sense.
 It has power at least as large as the power of any
oter test $T$, provided the latter is applied to a sample size
increased  by  a factor $\eta_{\Sigma,s_0}$.

Further, under the assumptions of Theorem~\ref{thm:main_thm},
the factor $\eta_{\Sigma,s_0}$ is a bounded constant. Indeed
\begin{align}
\eta_{\Sigma,s_0} \le \Sigma^{-1}_{i,i} \Sigma_{i,i}
\le \frac{\sigma_{\max}(\Sigma)}{\sigma_{\min}(\Sigma)}\le\frac{C_{\rm
  max}}{C_{\rm min}}\,,
\end{align}
since $ \Sigma^{-1}_{ii} \le (\sigma_{\min}(\Sigma))^{-1}$, and 
$\Sigma_{i|S} \le \Sigma_{i,i} \le \sigma_{\max}(\Sigma)$ due to
$\Sigma_{S,S} \succ 0$. 

Note that $n$, $\gamma$ and $\sigma$ appears in our upper bound
(\ref{eq:UpperBoundPower})  in the
combination $\gamma\sqrt{n}/\sigma$, which is the natural measure of
the signal-to-noise ratio (where, for simplicity, we neglected $s_0=
o(\sqrt{n}/\log p)$ with respect to $n$). 
Hence, the above result can be restated as follows. The test $\hT$ has power at least as large as the power of any
oter test $T$, provided the latter is applied at a noise level
augmented by  a factor $\sqrt{\eta_{\Sigma,s_0}}$.

\subsection{Generalization to simultaneous confidence intervals}

In many situations, it is necessary to perform statistical inference 
on more than one of the parameters simultaneously. For instance,
we might be interested in performing inference about $\theta_{0,R}
\equiv (\theta_{0,i})_{i\in R}$ for some set $R\subseteq [p]$. 

The simplest generalization of our method is to the case in which
$|R|$ stays finite as $n,p\to\infty$. In this case we have the
following generalization of Lemma \ref{lemma:LastDistribution}.
(The proof is the same as for Lemma
\ref{lemma:LastDistribution},
and hence we omit it.)
\begin{lemma}\label{lemma:LowDim}
Under the assumptions of Lemma \ref{lemma:LastDistribution}, define
\begin{align}
Q^{(n)} \equiv \frac{\hsigma^2}{n} \, [M \hSigma M^\sT]\, .
\end{align}
Let $R = R(n)$ be a sequence of sets $R(n)\subseteq[p]$, with
$|R(n)|=k$ fixed as $n, p\to\infty$,  and further assume
$s_0=o(\sqrt{n}/\log p)$, with $s_0\ge 1$.
Then, for all $x=(x_1,\dots,x_k)\in\reals^k$, we have
\begin{eqnarray}\label{eq:distributionLowD}
\lim_{n\to\infty}\sup_{\theta_0\in\reals^p;\, \|\theta_0\|_0 \le s_0
}\left|\prob \left\{ (Q^{(n)}_{R,R})^{-1/2}(\htheta^u_R - \theta_{0,R})
\le x  \right\} -\Phi_k(x)\right| =0\, ,
\end{eqnarray}
where $(a_1,\dots,a_k)\le (b_1,\dots,b_k)$ indicates that $a_1\le
b_1$,\dots $a_k\le b_k$, and $\Phi_k(x) = \Phi(x_1)\cdots \Phi(x_k)$.
\end{lemma}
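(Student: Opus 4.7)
The plan is to adapt the scalar argument of Lemma \ref{lemma:LastDistribution} to the multivariate setting using a conditioning argument plus Slutsky's theorem. By Theorem \ref{thm:main_thm}, we have the representation $\sqrt{n}(\htheta^u - \theta_0) = Z + \Delta$, where $Z \mid \bX \sim \mathcal{N}(0, \sigma^2 M\hSigma M^\sT)$ and $\|\Delta\|_\infty = O_p(s_0 \log p/\sqrt{n})$ uniformly over $s_0$-sparse $\theta_0$. Restricting to $R$ yields $\sqrt{n}(\htheta^u_R - \theta_{0,R}) = Z_R + \Delta_R$ with $Z_R \mid \bX \sim \mathcal{N}(0, \sigma^2 [M\hSigma M^\sT]_{R,R})$ and $\|\Delta_R\|_2 \le \sqrt{k}\,\|\Delta\|_\infty$. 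Plugging in the definition of $Q^{(n)}$, I would then write
$$
(Q^{(n)}_{R,R})^{-1/2}(\htheta^u_R - \theta_{0,R}) \;=\; \frac{\sigma}{\hsigma}\, U_n \;+\; E_n, \qquad U_n \equiv \frac{1}{\sigma}[M\hSigma M^\sT]_{R,R}^{-1/2} Z_R,\ \ E_n \equiv \frac{1}{\hsigma}[M\hSigma M^\sT]_{R,R}^{-1/2}\Delta_R.
$$

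The Gaussian term is handled conditionally. Given $\bX$, $U_n \sim \mathcal{N}(0, I_k)$ exactly (since $[M\hSigma M^\sT]_{R,R}$ is, with probability tending to one, invertible for fixed $k$); because this conditional law is non-random, $U_n \sim \mathcal{N}(0, I_k)$ unconditionally as well. By hypothesis~\eqref{eq:ConsistencySigma}, $\sigma/\hsigma \to 1$ in probability uniformly over the sparse parameter class, so Slutsky's theorem gives $(\sigma/\hsigma)\,U_n \Rightarrow \mathcal{N}(0, I_k)$ uniformly in $\theta_0$.

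The crux is showing that $E_n \to 0$ in probability, which reduces to bounding
$$
\|E_n\|_2 \;\le\; \frac{1}{\hsigma}\,\lambda_{\min}\!\big([M\hSigma M^\sT]_{R,R}\big)^{-1/2}\,\|\Delta_R\|_2 \;=\; O_p\!\Big(\lambda_{\min}\!\big([M\hSigma M^\sT]_{R,R}\big)^{-1/2}\cdot \tfrac{s_0 \log p}{\sqrt{n}}\Big).
$$
Since $s_0 \log p/\sqrt{n}\to 0$ by assumption, it suffices to show that $\lambda_{\min}([M\hSigma M^\sT]_{R,R})$ is bounded away from zero with high probability. I expect this to be the main technical step: in the scalar case ($k=1$), Lemma \ref{lem:missing_bound} provides exactly the needed diagonal lower bound $[M\hSigma M^\sT]_{ii}\ge (1-\coh)^2/\hSigma_{ii} \ge (1-\coh)^2/K$, but here we need a spectral bound on a submatrix. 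Because $k$ is fixed, operator-norm and entrywise convergence are equivalent, and the natural route is to argue that $[M\hSigma M^\sT]_{R,R}$ is close (entrywise, with high probability) to $[\Sigma^{-1}]_{R,R}$: the population minimizer of the objective in~\eqref{eq:optimization} is $\Sigma^{-1}e_i$, and concentration of $\hSigma$ around $\Sigma$ combined with feasibility/optimality of $M$ pins down the entries of $[M\hSigma M^\sT]_{R,R}$ near those of $[\Sigma^{-1}]_{R,R}$. Cauchy interlacing then gives $\lambda_{\min}([\Sigma^{-1}]_{R,R})\ge 1/\sigma_{\max}(\Sigma)\ge 1/C_{\max}>0$, so $\lambda_{\min}([M\hSigma M^\sT]_{R,R})=\Omega(1)$ w.h.p., as needed.

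Finally, combining $(\sigma/\hsigma)\,U_n \Rightarrow \mathcal{N}(0,I_k)$ with $E_n = o_p(1)$ via Slutsky's theorem yields joint convergence to $\mathcal{N}(0, I_k)$, and since the limit CDF $\Phi_k$ is continuous, convergence holds at every point $x\in\reals^k$. Uniformity in $\theta_0$ over the sparse class is inherited from the uniform tail bound on $\|\Delta\|_\infty$ in Theorem~\ref{thm:main_thm} together with the uniform consistency of $\hsigma$, proving~\eqref{eq:distributionLowD}.
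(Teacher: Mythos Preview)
Your decomposition and Slutsky-based argument are exactly what the paper intends: the authors omit the proof, stating that it is ``the same as for Lemma~\ref{lemma:LastDistribution}.'' You have correctly isolated the only genuinely new ingredient beyond the scalar case, namely a lower bound on $\lambda_{\min}\bigl([M\hSigma M^\sT]_{R,R}\bigr)$ needed to control $E_n$.

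However, your proposed route for this step---showing entrywise convergence of $[M\hSigma M^\sT]_{R,R}$ to $[\Sigma^{-1}]_{R,R}$---is both harder than necessary and not supported by the tools in the paper. The optimization problem~\eqref{eq:optimization} controls $\hSigma m_i$ (through the constraint) and $m_i^\sT\hSigma m_i$ (through the objective), but neither pins down the cross-terms $m_i^\sT\hSigma m_j$ nor forces $m_i$ close to $\Sigma^{-1}e_i$ in any useful norm when $p>n$; even the diagonals are only sandwiched between $(1-\coh)^2/\hSigma_{ii}$ and $\Sigma^{-1}_{ii}+o_p(1)$, not shown to converge. The clean argument is the natural multivariate extension of Lemma~\ref{lem:missing_bound}: for a unit vector $v\in\reals^k$, set $w=\sum_{i\in R}v_i m_i$ and let $v'\in\reals^p$ be the zero-padded embedding of $v$. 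The feasibility constraints give $(v')^\sT\hSigma w=\sum_{i,j\in R}v_iv_j\, e_i^\sT\hSigma m_j\ge \|v\|_2^2-\|v\|_1^2\,\coh\ge 1-k\coh$, and Cauchy--Schwarz in the $\hSigma$-semi-inner-product yields
\[
v^\sT [M\hSigma M^\sT]_{R,R}\, v \;=\; w^\sT\hSigma w \;\ge\; \frac{\bigl((v')^\sT\hSigma w\bigr)^2}{(v')^\sT\hSigma\, v'} \;\ge\; \frac{(1-k\coh)^2}{\lambda_{\max}(\hSigma_{R,R})}\,.
\]
Since $k$ is fixed, $\coh\to 0$, and $\lambda_{\max}(\hSigma_{R,R})\le \sum_{i\in R}\hSigma_{ii}\le kK$ on the event $\event_n$, this delivers the required $\Omega(1)$ spectral lower bound. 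With this substitution, the rest of your argument goes through unchanged.
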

This lemma allows to construct confidence regions for low-dimensional
projections of $\theta_0$, much in the same way as we used Lemma
\ref{lemma:LastDistribution} to compute confidence intervals for
one-dimensional projections in Section \ref{sec:ConfidenceInterval}.

Explicitly, let $\cC_{k,\alpha}\subseteq\reals^k$ be any Borel set such
that $\int_{\cC_{k,\alpha}} \phi_k(x) \,\de x\ge 1-\alpha$ , where
$$\phi_k(x) = \frac{1}{(2\pi)^{k/2}}\,\exp\Big(-\frac{\|x\|^2}{2}\Big)\,,$$
is the $k$-dimensional Gaussian density. 
Then, for $R\subseteq [p]$, we define
$J_R(\alpha)\subseteq \reals^k$ as follows
\begin{align}
J_R(\alpha)\equiv \htheta_R^u+(Q^{(n)}_{R,R})^{1/2}\cC_{k,\alpha}\, .
\end{align}
Then Lemma \ref{lemma:LowDim}  implies (under the
assumptions stated there) that $J_R(\alpha)$ is a valid confidence region
\begin{align}
\lim_{n\to\infty}\prob\big(\theta_{0,R}\in J_{R}(\alpha)\big) =
1-\alpha\, .
\end{align}

A more challenging regime is the one of large-scale inference, that
corresponds to $|R(n)|\to\infty$ with $n$. Even in the seemingly simple case in
which a  correct $p$-value is given for each individual coordinate, the
problem of aggregating them has attracted considerable amount of work,
see e.g. \cite{efron2010large} for an overview. 

Here we limit ourself to designing a testing procedure for the family
of hypotheses $\{H_{0,i}: \, \theta_{0,i} = 0\}_{i\in [p]}$ that
controls the familywise error rate (FWER). Namely we want to define
$T_{i,\bX}:\reals^n\to \{0,1\}$, for each $i\in [p]$,
$\bX\in\reals^{n\times p}$ such that
\begin{align}
\FWER(T,n) \equiv \sup_{\theta_0\in\reals^p,\|\theta_0\|_0\le
  s_0}\prob\Big\{\exists i\in [p]:\;\; \theta_{0,i} = 0,
T_{i,\bX}(y)= 1\Big\}\, ,
\end{align}

In order to achieve familywise error control, we adopt a standard
trick based on Bonferroni inequality. Given $p$-values defined as per
Eq.~(\ref{eq:p-value}), we let
\begin{eqnarray}
\begin{split}\label{eq:decision-rule-FWER}
\hTf_{i,\bX}(y) = \begin{cases}
1 & \text{if } P_i \le \alpha/p \quad \quad \text{ (reject $H_{0,i}$)}\,,\\
0 & \text{otherwise} \quad\quad \text{(accept $H_{0,i}$)}\,.
\end{cases}
\end{split}
\end{eqnarray} 
Then we have the following error control guarantee.
\begin{thm}\label{thm:FWER}
Consider a sequence of design matrices $\bX\in\reals^{n\times p}$,
 with dimensions $n\to\infty$, $p=p(n)\to\infty$ satisfying the
 assumptions of Lemma \ref{lemma:LastDistribution}.

Consider the linear model~\eqref{eqn:regression} and let $\htheta^u$ be defined as per
Eq.~\eqref{eq:hthetau} in Algorithm 1, with $\coh =a\sqrt{(\log p)/n}$
and $\lambda = \sigma\sqrt{(c^2\log p)/n}$, with $a,c$  large enough constants.
Finally, let $\hsigma = \hsigma(y,\bX)$ be a consistent estimator of the noise
level in the sense of Eq.~(\ref{eq:ConsistencySigma}),
and $\hT$ be the test defined in Eq.~(\ref{eq:decision-rule-FWER}).
Then:
\begin{align}
\underset{n\to\infty}{\lim\sup}\,\, \FWER(\hTf,n) \le \alpha\, .
\end{align}
\end{thm}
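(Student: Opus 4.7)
The plan is a straightforward Bonferroni union bound applied coordinatewise, combined with the Gaussian approximation guaranteed by Theorem \ref{thm:main_thm}, except one has to be careful because the relevant Gaussian tail is taken at the level $\alpha/(2p)$, so one needs the bias and the noise-level error to be small on the scale $1/\Phi^{-1}(1-\alpha/(2p)) = \Theta(1/\sqrt{\log p})$, not merely $o(1)$. Write $S_0 = \supp(\theta_0)$, let $u_p \equiv \Phi^{-1}(1-\alpha/(2p))$, and note $u_p = (1+o(1))\sqrt{2\log p}$. First, by a union bound,
\begin{align}
\FWER(\hTf,n) \;\le\; \sum_{i \in S_0^c} \prob_{\theta_0}\!\big(P_i \le \alpha/p\big)
\;\le\; p\cdot \max_{i\in S_0^c}\prob_{\theta_0}(P_i \le \alpha/p),
\end{align}
so the task reduces to showing $\max_{i\in S_0^c}\prob_{\theta_0}(P_i \le \alpha/p)\le (1+o(1))\,\alpha/p$ uniformly in $\theta_0$ with $\|\theta_0\|_0\le s_0$.

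Next, I would invoke Theorem \ref{thm:main_thm} to write, for any $i\in S_0^c$, $\sqrt n\,\htheta^u_i = Z_i + \Delta_i$ with $Z_i\,|\,\bX \sim \normal(0,\sigma^2[M\hSigma M^\sT]_{i,i})$. Using Lemma \ref{lem:missing_bound} together with the Theorem \ref{thm:event_thm} bound $\max_i\hSigma_{i,i}\le 3/2$ yields $[M\hSigma M^\sT]_{i,i}\ge c_0>0$ on a high-probability event. Writing $\tau_i \equiv \sigma [M\hSigma M^\sT]_{i,i}^{1/2}$ and $\zeta \equiv \hsigma/\sigma$, the rejection event $P_i\le \alpha/p$ is equivalent to
\begin{align}
\Big|\tfrac{Z_i}{\tau_i} + \tfrac{\Delta_i}{\tau_i}\Big|\;\ge\; \zeta\, u_p.
\end{align}
Since $Z_i/\tau_i\sim \normal(0,1)$ conditional on $\bX$, I would intersect with the high-probability event
\begin{align}
\event_n\equiv\Big\{\|\Delta\|_\infty/\min_j\tau_j\le \eps_n,\ |\zeta-1|\le \eps_n\Big\}
\end{align}
for $\eps_n=o(1/\sqrt{\log p})$; Theorem \ref{thm:main_thm} with $s_0=o(\sqrt n/\log p)$ gives $\|\Delta\|_\infty = O(s_0\log p/\sqrt n)$ w.h.p.\ and Lemma \ref{lemma:ConsistencySigma} gives consistency of $\hsigma$, so $\prob(\event_n^c)\to 0$ (and this bound is uniform in $\theta_0$ subject to the sparsity constraint, by the uniformity built into those statements).

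On $\event_n$ the rejection event is contained in $\{|Z_i/\tau_i|\ge u_p(1-\eps_n)-\eps_n\}$, so conditioning on $\bX$ and using the Gaussian distribution of $Z_i/\tau_i$ gives
\begin{align}
\prob\!\big(P_i\le \alpha/p,\,\event_n\big)\;\le\;2\bigl(1-\Phi(u_p - \delta_n)\bigr),\qquad \delta_n = u_p\eps_n+\eps_n = o(1).
\end{align}
The Mills ratio estimate $1-\Phi(u-\delta) \le \tfrac{u}{u-\delta}\,e^{u\delta-\delta^2/2}\,(1-\Phi(u))$ combined with $u_p\delta_n \to 0$ (which is where $\eps_n=o(1/\sqrt{\log p})$ is used) gives $2(1-\Phi(u_p-\delta_n)) = (1+o(1))\,\alpha/p$. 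The main obstacle is precisely guaranteeing $u_p\eps_n\to 0$; under the sparsity assumption in force this requires $s_0(\log p)^{3/2}=o(\sqrt n)$, which I would note explicitly as being implied by (or a slight strengthening of) $s_0=o(\sqrt n/\log p)$ when combined with the Mills-ratio factor. Putting the pieces together,
\begin{align}
\FWER(\hTf,n) \;\le\; p\cdot (1+o(1))\,\tfrac{\alpha}{p} + \prob(\event_n^c) \;=\; (1+o(1))\alpha + o(1),
\end{align}
so $\limsup_{n\to\infty} \FWER(\hTf,n)\le \alpha$, as claimed.
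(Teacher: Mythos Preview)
Your overall architecture---Bonferroni plus the decomposition $\sqrt n\,\htheta^u_i=Z_i+\Delta_i$ with $Z_i/\tau_i\sim\normal(0,1)$, then peel off a good event for $\|\Delta\|_\infty$ and $\hsigma/\sigma$---is exactly the route the paper takes in Appendix~\ref{app:FWER}. You are also right to flag that the threshold sits at $u_p=\Phi^{-1}(1-\alpha/(2p))\sim\sqrt{2\log p}$, so an $o(1)$ shift is not enough; the paper's own write-up uses a \emph{fixed} $\eps$ and then sends $\eps\to 0$ after $n\to\infty$, but for fixed $\eps>0$ one has $\sum_i\prob(|\tZ_i|\ge(1-\eps)u_p-\eps)=2p\bigl(1-\Phi((1-\eps)u_p-\eps)\bigr)\asymp p^{2\eps-\eps^2}\to\infty$, so the subtlety you spotted is genuine.

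That said, your resolution does not close the gap under the stated hypotheses. First, with $\delta_n=u_p\eps_n+\eps_n$ you need $u_p\delta_n\to 0$, i.e.\ $\eps_n u_p^2\to 0$, which forces $\eps_n=o(1/\log p)$, not $o(1/\sqrt{\log p})$; only the additive bias piece $|\Delta_i|/\tau_i$ (which enters $\delta_n$ without the extra $u_p$) gets away with $o(1/\sqrt{\log p})$. Second, even the bias requirement $\|\Delta\|_\infty=o(1/\sqrt{\log p})$ translates to $s_0(\log p)^{3/2}=o(\sqrt n)$, which is \emph{strictly stronger} than the assumed $s_0=o(\sqrt n/\log p)$ (take $s_0\sim\sqrt n/(\log p)^{5/4}$ for a counterexample), so you cannot ``note it as implied.'' Third, and most seriously, the multiplicative error $|\hsigma/\sigma-1|$ must be $o_P(1/\log p)$, whereas the hypothesis~\eqref{eq:ConsistencySigma} is only $o_P(1)$ with no rate; citing Lemma~\ref{lemma:ConsistencySigma} does not help, since that lemma establishes exactly~\eqref{eq:ConsistencySigma} and nothing sharper. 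In short, the Bonferroni route needs a mild strengthening of both the sparsity and the $\hsigma$-consistency assumptions; as the theorem is stated, neither your argument nor the paper's fixed-$\eps$ argument delivers the conclusion.
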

The proof of this theorem is similar to the one of Lemma
\ref{lemma:LastDistribution} and Theorem \ref{thm:error-power}, and is
deferred to Appendix \ref{app:FWER}.
%
%====================================================================
%
\section{Non-Gaussian noise}
\label{sec:NonGaussian}

As can be seen from the proof of Theorem~\ref{thm:main_thm}, $Z = M \bX^\sT W/\sqrt{n}$,
and since the noise is Gaussian, i.e., $W \sim \normal(0,\sigma^2
\id)$, we have $Z|\bX \sim \normal(0,\sigma^2 M\hSigma M^\sT)$.
We claim that the distribution of the coordinates of $Z$ is
asymptotically Gaussian, even if $W$ is non-Gaussian, provided the
definition of $M$ is modified slightly. As a consequence, the
definition of confidence intervals and $p$-values in Corollary
\ref{coro:Interval} and~\eqref{eq:p-value} remain valid in this
broader setting.

In case of non-Gaussian noise, we write 
\begin{align*}
\frac{\sqrt{n}(\htheta^u_i - \theta_{0,i})}{\sigma [M\hSigma M^\sT]^{1/2}_{i,i}} 
& = \frac{1}{\sqrt{n}} \frac{m_i^\sT \bX^\sT W}{\sigma [m_i^\sT \hSigma m_i]^{1/2}} + o(1)\\
& = \frac{1}{\sqrt{n}} \sum_{j=1}^n \frac{m_i^\sT X_j W_j}{\sigma [m_i^\sT \hSigma m_i]^{1/2}} +o(1)\,.
\end{align*}
Conditional on $\bX$, the summands $\xi_j = m_i^\sT X_j W_j /(\sigma
[m_i^\sT \hSigma m_i]^{1/2})$ 
are independent and zero mean.
Further, $\sum_{j=1}^n \E(\xi_j^2| \bX) = 1$.
Therefore, if Lindenberg condition holds, namely for every $\eps > 0$,
almost surely
\begin{eqnarray}
\lim_{n\to \infty} \frac{1}{n} \sum_{j=1}^n \E(\xi_j^2 \ind_{\{|\xi_j| > \eps \sqrt{n}\}}|\bX) =0\,,\label{eq:lindenberg}
\end{eqnarray}
then $\sum_{j=1}^n \xi_j /\sqrt{n}|\bX \dist \normal(0,1)$, from which we can build the valid $p$-values as in~\eqref{eq:p-value}.

In order to ensure that the Lindeberg condition holds, we modify the optimization problem~\eqref{eq:optimization_mod} as follows:
\begin{eqnarray}
\begin{split}\label{eq:optimization_mod}
&\text{minimize } \quad \, m^\sT \hSigma m\\
&\text{subject to} \quad \|\hSigma m - e_i \|_{\infty} \le \coh\\
& \hspace{2.1cm} \|\bX m\|_\infty \le n^{\beta} \quad \text{for arbitrary fixed }0< \beta< 1/2
\end{split}
\end{eqnarray}
Next theorem shows the validity of the proposed $p$-values in the non-Gaussian noise setting.
\begin{thm}\label{thm:nongauss}
Suppose that the noise variables $W_i$ are independent with $\E(W_i) =
0$,  $\E(W_i^2) = \sigma^2$, and $\E(|W_i|^{2+a})\le C\, \sigma^{2+a}$
for some $a>(1/2-\beta)^{-1}$.

Let $M = (m_1,\dotsc,m_p)^\sT$ be the matrix with rows $m_i^\sT$ obtained by solving optimization 
problem~\eqref{eq:optimization_mod}. Then under the assumptions of
Theorem~\ref{thm:main_thm}, and for
sparsity level $s_0 = o(\sqrt{n}/\log p)$,  an asymptotic two-sided
confidence interval for $\theta_{0,i}$ with significance $\alpha$
is given by $I_i = [\htheta^u_{i} -\delta(\alpha,n), \htheta^u_{i}
+\delta(\alpha,n)]$ where
\begin{align}
 \delta(\alpha,n) = \Phi^{-1}(1-\alpha/2)  \hsigma\, n^{-1/2} \sqrt{[M \hSigma M^\sT]_{i,i}}\,.
 \end{align}
Further, an asymptotically valid $p$-value $P_i$ for testing null hypothesis $H_{0,i}$ is constructed as:
\[
P_i = 2\bigg(1- \Phi\bigg(\frac{\sqrt{n}|\htheta^u_i|}{[M\hSigma M^\sT]_{i,i}^{1/2}}\bigg) \bigg)\,.
\]
\end{thm}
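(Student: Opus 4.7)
The overall plan is to reduce the theorem to two independent assertions: (i) a triangular-array central limit theorem for the noise-dependent term $Z_i = m_i^\sT \bX^\sT W/\sqrt{n}$, conditional on $\bX$, and (ii) negligibility of the bias term $\Delta_i$ of the representation in Theorem~\ref{thm:main_thm} when Gaussianity of $W$ is dropped. Given (i)--(ii), the scaled quantity $\sqrt{n}(\htheta^u_i-\theta_{0,i})/(\hsigma[M\hSigma M^\sT]_{i,i}^{1/2})$ converges to $\normal(0,1)$, and the validity of $I_i$ and $P_i$ then follows verbatim from the argument in Theorem~\ref{coro:Interval} and Eq.~\eqref{eq:p-value}; the assumed consistency of a noise estimator $\hsigma$ (obtained e.g.\ by adapting Lemma~\ref{lemma:ConsistencySigma} to the $(2+a)$-moment case) enters through Slutsky.

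For step (ii), I would revisit the decomposition $\sqrt{n}(\htheta^u-\theta_0)=Z+\Delta$ with $\Delta=\sqrt{n}(M\hSigma-\id)(\theta_0-\htheta^n)$. The bound $\|\Delta\|_\infty\le \sqrt{n}\,\coh_*(\bX;M)\,\|\theta_0-\htheta^n\|_1$ used in the proof of Theorem~\ref{thm:deterministic} does not use Gaussianity. The LASSO $\ell_1$-error bound $\|\htheta^n-\theta_0\|_1=O(s_0\sqrt{\log p/n})$ only requires the concentration $\|\bX^\sT W\|_\infty\le C\sigma\sqrt{n\log p}$ w.h.p., which under the $(2+a)$-moment assumption follows from a truncation plus Bernstein argument (or standard references). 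Combined with $\coh_*(\bX;M)\le \coh=a\sqrt{\log p/n}$ from the feasibility constraint in~\eqref{eq:optimization_mod}, this gives $\|\Delta\|_\infty=O(s_0\log p/\sqrt{n})=o(1)$ whenever $s_0=o(\sqrt{n}/\log p)$. Since by Lemma~\ref{lem:missing_bound} the denominator $[M\hSigma M^\sT]_{i,i}^{1/2}$ is bounded below by a positive constant w.h.p., the bias contribution to the normalized residual is $o_P(1)$.

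For step (i), work conditional on $\bX$ and write $Z_i/(\sigma[m_i^\sT\hSigma m_i]^{1/2})=n^{-1/2}\sum_{j=1}^n\xi_j$, with $\xi_j=m_i^\sT X_jW_j/(\sigma[m_i^\sT\hSigma m_i]^{1/2})$ conditionally independent, mean zero, and satisfying $\sum_j\E[\xi_j^2\mid\bX]=n$. The Lindeberg--Feller CLT gives the Gaussian limit once the Lindeberg condition~\eqref{eq:lindenberg} is checked. By Markov's inequality at exponent $2+a$,
\[
\E[\xi_j^2\ind_{\{|\xi_j|>\eps\sqrt{n}\}}\mid\bX]\le (\eps\sqrt{n})^{-a}\,\E[|\xi_j|^{2+a}\mid\bX],
\]
and the new constraint $\|\bX m_i\|_\infty\le n^\beta$ gives $|m_i^\sT X_j|\le n^\beta$, while Lemma~\ref{lem:missing_bound} gives $v_i^2\equiv m_i^\sT\hSigma m_i\ge c>0$ w.h.p.; together with $\E|W_j|^{2+a}\le C\sigma^{2+a}$ this yields
\[
\E[|\xi_j|^{2+a}\mid\bX]\le C\, n^{\beta(2+a)}/v_i^{2+a}.
\]
Summing over $j$ and dividing by $n$ produces a bound of order $n^{\beta(2+a)-a/2}$, which tends to $0$ exactly when $a(1/2-\beta)>2\beta$, in particular under the stated assumption $a>(1/2-\beta)^{-1}$.

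The main obstacle is feasibility of the modified convex program~\eqref{eq:optimization_mod}: adding the constraint $\|\bX m\|_\infty\le n^\beta$ might a priori make it infeasible and therefore spoil the applicability of Theorem~\ref{thm:event_thm}. I would handle this by showing that, with high probability, the unconstrained optimizer of~\eqref{eq:optimization} already satisfies the extra constraint. Indeed, for each fixed $i$ the optimizer $m_i$ satisfies $\|m_i\|_2^2\le \sigma_{\min}(\hSigma_{S,S})^{-1}\cdot O(1)$ on the relevant restricted cone, and under the subgaussian row assumption the entries $(\bX m_i)_j=\langle X_j,m_i\rangle$ are subgaussian with $O(1)$ variance; a union bound gives $\|\bX m_i\|_\infty=O(\sqrt{\log n})\ll n^\beta$ with high probability for any fixed $\beta>0$. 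Hence feasibility of~\eqref{eq:optimization_mod} holds on the same high-probability event as in Theorem~\ref{thm:event_thm}, and the bounds on $\mu_*$ and on $[M\hSigma M^\sT]_{i,i}$ carry over. Combining steps (i) and (ii) with Slutsky's theorem and the consistency of $\hsigma$ completes the proof of both the confidence-interval and $p$-value statements.
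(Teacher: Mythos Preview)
Your overall architecture matches the paper's: reduce to (i) a conditional CLT for $Z_i$ and (ii) negligibility of $\Delta$, then conclude via Slutsky. Your Lindeberg verification in step~(i) is actually different from the paper's and is arguably cleaner: you use a direct Lyapunov-type bound via Markov at exponent $2+a$, obtaining a Lindeberg sum of order $n^{\beta(2+a)-a/2}$, which vanishes under $a(1/2-\beta)>2\beta$, a strictly weaker requirement than the stated $a>(1/2-\beta)^{-1}$. The paper instead bounds $|\xi_j|\le c_n^{-1}n^\beta\|W\|_\infty/\sigma$, replaces the indicator $\{|\xi_j|>\eps\sqrt{n}\}$ by $\{\|W\|_\infty>\eps c_\infty n^{1/2-\beta}\}$, and then controls $n\,\E(\tW_1^2\ind_{\{|\tW_1|>\eps c_\infty n^{1/2-\beta}\}})$; this route is the one that produces exactly the threshold $a>(1/2-\beta)^{-1}$.

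There is, however, a genuine gap in your feasibility argument for the modified program~\eqref{eq:optimization_mod}. You claim that the unconstrained optimizer $m_i$ of~\eqref{eq:optimization} satisfies $\|\bX m_i\|_\infty=O(\sqrt{\log n})$ by applying a subgaussian tail bound to $(\bX m_i)_j=\langle X_j,m_i\rangle$ and union bounding over $j$. But $m_i=m_i(\bX)$ depends on the entire design, in particular on $X_j$ itself, so $\langle X_j,m_i\rangle$ is \emph{not} a fixed-direction subgaussian linear form and the tail bound does not apply. (Nor is the preliminary assertion $\|m_i\|_2^2\le \sigma_{\min}(\hSigma_{S,S})^{-1}\cdot O(1)$ established anywhere; $m_i$ need not be supported on $S$.) A deterministic bound via $\|\bX m_i\|_2^2=n\,m_i^\sT\hSigma m_i=O(n)$ only gives $\|\bX m_i\|_\infty=O(\sqrt{n})$, which is useless for $\beta<1/2$. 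The paper sidesteps this entirely by exhibiting a \emph{deterministic} feasible point: it shows that $m=\Sigma^{-1}_{\cdot,i}$ satisfies both $\|\hSigma m-e_i\|_\infty\le\mu$ (this is Lemma~\ref{lem:main_lem}) and $\|\bX m\|_\infty\le n^\beta$ with high probability, the latter because $\langle X_j,\Sigma^{-1}_{\cdot,i}\rangle$ is genuinely subgaussian with norm $O(1)$ and a union bound over $j\in[n]$ applies cleanly. Once feasibility is secured this way, the variance bounds you need (Lemma~\ref{lem:missing_bound} below, and $m_i^\sT\hSigma m_i\le (\Sigma^{-1}_{i,\cdot})\hSigma\Sigma^{-1}_{\cdot,i}=O(1)$ above) follow immediately, and the rest of your argument goes through.
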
 
Theorem~\ref{thm:nongauss} is proved in Section~\ref{proof:nongauss}.

%============================================================
\section{Numerical experiments}\label{sec:simulation}
\subsection{Synthetic data}\label{sec:synthetic}
We consider linear model~\eqref{eq:NoisyModel}, where the rows of design matrix $\bX$ are fixed i.i.d. realizations
from $\normal(0,\Sigma)$, where $\Sigma \in \reals^{p\times p}$ is a circulant symmetric matrix with 
entries $\Sigma_{jk}$ given as follows for $j\le k$:
\begin{eqnarray}
\Sigma_{jk} = \begin{cases}
1& \text{if } k=j\,,\\
0.1 & \text{if } k\in \{j+1,\dotsc,j+5\}\\
& \text{or } k\in \{j+p-5, \dotsc, j+p-1\}\,,\\
0 & \text{for all other } j\le k\,.
\end{cases}
\end{eqnarray}
Regarding the regression coefficient, we consider a uniformly random support $S \subseteq [p]$,
with $|S| = s_0$ and let $\theta_{0,i} = b$ for $i \in S$ and $\theta_{0,i} = 0$ otherwise. 
The measurement errors are $W_i\sim \normal(0,1)$, for $i\in [n]$.
We consider several configurations of $(n,p,s_0, b)$ and for each configuration report
our results based on $20$ independent realizations of the model with fixed design and
fixed regression coefficients. In other words, we repeat experiments over $20$ independent
realization of the measurement errors. 

We use the regularization parameter $\lambda = 4\hsigma\sqrt{(2\log p)/n}$, where $\hsigma$
is given by the scaled LASSO as per equation~\eqref{eq:SLASSO} with $\widetilde{\lambda} =  10\sqrt{(2\log p)/n}$.
Furthermore, parameter $\mu$ (cf. Eq.~\eqref{eq:optimization}) is set to
\begin{eqnarray*}
\mu = 2 \sqrt{\frac{\log p}{n}}\,.
\end{eqnarray*}
This choice of $\mu$ is guided by Theorem~\ref{thm:event_thm} $(b)$.
%In order to construct the confidence intervals as per Eq.~\eqref{eq:CI}, we need to 
%estimate the standard deviation of measurement error $\sigma$. To this end, let
%%
%\begin{eqnarray*}
%v_i \equiv \frac{\sqrt{n} \htheta^u_i }{[M\hSigma M^\sT]_{ii}^{1/2}}\,.
%\end{eqnarray*}
%%
%By Remark~\ref{rem:sparsity_cond}, assuming $s_0 = o(\sqrt{n}/\log p)$, the entries $v_i$, $i\notin S$ 
%are approximately $\normal(0,\sigma^2)$. 
%We then use median absolute
%deviation (MAD) estimator to estimate $\sigma$ from the median of vector $v$.
%Given that $|S|= s_0 = o(p)$, we can estimate $\sigma$ applying MAD estimator to
%all entries of $v$, because the entries in $S$ do not affect the median asymptotically. 
%This suggests the following procedure to estimate $\sigma$:
%
%Let $|v| = (|v_1|, \dotsc, |v_n|)$,
%and denote by $|v|_{(i)}$ the $i$-th largest entry in magnitude. We then set
% %
%\begin{align*}
%\hsigma  = \frac{|v|_{(p/2)}}{\Phi^{-1}(3/4)}\,.
%\end{align*}
%%   
%It is worth noting that a similar approach was proposed in~\cite{javanmard2013nearly} and~\cite{DMM09} for a related problem.
% 

Throughout, we set the significance level $\alpha = 0.05$. 

\smallskip
\noindent{{\bf Confidence intervals.}}
For each configuration, we consider $20$ independent realizations of measurement noise and for each
parameter $\theta_{0,i}$, we compute the average length of the corresponding confidence interval, denoted by
$\avglength (J_i(\alpha))$ where $J_i(\alpha)$ is given by equation~\eqref{eq:CI} and the 
average is taken over the realizations. We then define
\begin{eqnarray}
\ell \equiv p^{-1} \sum_{i\in [p]} \avglength(J_i(\alpha))\,.
\end{eqnarray}
We also consider the average length of intervals for the active and inactive parameters, as follows:
\begin{align}
\ell_S \equiv s_0^{-1} \sum_{i\in S} \avglength(J_i(\alpha))\,, \quad 
\ell_{S^c} \equiv (p-s_0)^{-1} \sum_{i\in S^c} \avglength(J_i(\alpha))\,.
\end{align}

Similarly, we consider average coverage for individual parameters.
We define the following three metrics:
\begin{align}
\cov &\equiv p^{-1} \sum_{i\in [p]} \hprob[\theta_{0,i} \in J_i(\alpha)]\,,\\
\cov_S &\equiv s_0^{-1} \sum_{i\in S} \hprob[\theta_{0,i} \in J_i(\alpha)]\,,\\
\cov_{S^c} &\equiv (p-s_0)^{-1} \sum_{i\in S^c} \hprob[0 \in J_i(\alpha)]\,,
\end{align}
where $\hprob$ denotes the empirical probability computed based on the $20$ realizations for each configuration.
The results are reported in Table~\ref{tbl:confidence}.
In Fig.~\ref{fig:CI}, we plot the constructed $95\%$-confidence intervals for one realization of configuration $(n,p,s_0,b) = (1000,600,10,1)$.
For sake of clarity, we plot the confidence intervals for only 100 of the 1000 parameters.

%========================================================
% Results for synthetic iid design model (alpha = 0.05)
%========================================================
\begin{figure}[]
\centering
\includegraphics*[width = 3.5in]{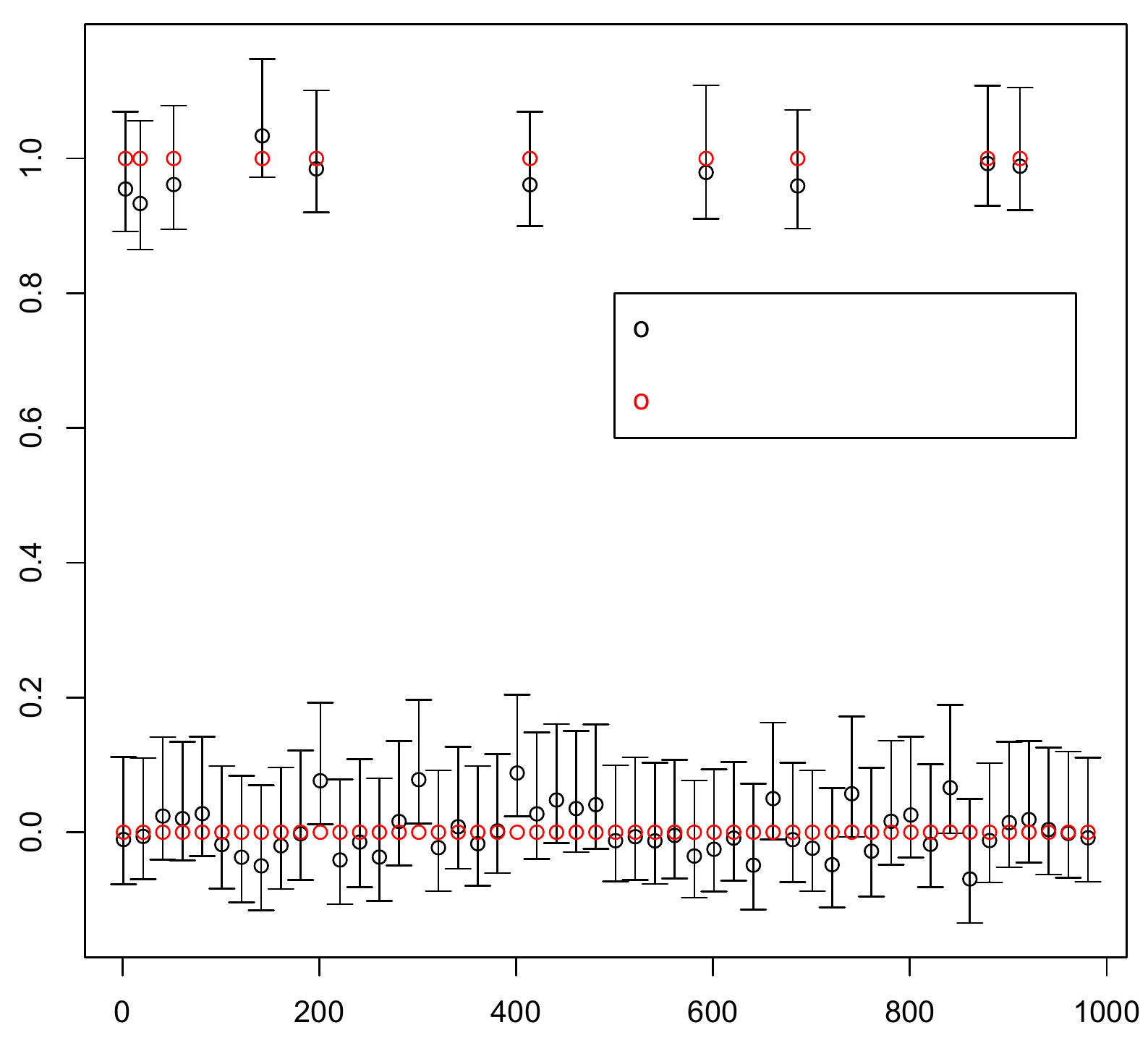}
\put(-105,155){{\small coordinates of $\theta_{0}$}}
\put(-105,138){{\small coordinates of $\htheta^u$}}
\caption{ $95\%$ confidence intervals for one realization of configuration $(n,p,s_0,b) = (1000,600,10,1)$.
For clarity, we plot the confidence intervals for only 100 of the 1000 parameters. The true parameters
$\theta_{0,i}$ are in red and the coordinates of the debiased estimator $\htheta^u$ are in black.}\label{fig:CI}
\end{figure}

\begin{table*}[]
\begin{center}
{\small
\begin{tabular}{|l|c|c|c|c|c|c|c|c|}\hline
\diaghead{\theadfont Configuration Measure}%
{Configuration}{Measure}&
\thead{$\ell$}&\thead{$\ell_S$} & \thead{$\ell_{S^c}$} & \thead{$\cov$} & \thead{$\cov_S$} & \thead{$\cov_{S^c}$}\\
\hline
{$(1000, 600, 10, 0.5)$} &  0.1870 & 0.1834 & 0.1870 & 0.9766 &0.9600 &0.9767 \\
{$(1000, 600, 10, 0.25)$} &  0.1757 & 0.1780 & 0.1757 & 0.9810 &0.9000 &0.9818 \\
{$(1000, 600, 10, 0.1)$} &  0.1809 & 0.1823 & 0.1809 & 0.9760 & 1 &0.9757 \\
{$(1000, 600, 30, 0.5)$} &  0.2107 & 0.2108 & 0.2107 & 0.9780 &0.9866 &0.9777 \\
{$(1000, 600, 30, 0.25)$} &   0.1956 & 0.1961 & 0.1956 & 0.9660 &0.9660 &0.9659 \\
{$(1000, 600, 30, 0.1)$} &   0.2023 & 0.2043 & 0.2023 & 0.9720 &0.9333 &0.9732 \\
{$(2000, 1500, 50, 0.5)$} &  0.1383 & 0.1391 & 0.1383 & 0.9754 & 0.9800 &0.9752 \\
{$(2000, 1500, 50, 0.25)$} &  0.1356 & 0.1363 & 0.1355 & 0.9720 & 0.9600 & 0.9723 \\
{$(2000, 1500, 50, 0.1)$} & 0.1361  &  0.1361&  0.1361&  0.9805& 1 & 0.9800 \\
{$(2000, 1500, 25, 0.5)$} &  0.1233 & 0.1233 & 0.1233 & 0.9731 & 0.9680 & 0.9731 \\
{$(2000, 1500, 25, 0.25)$} &  0.1208 & 0.1208 & 0.1208 & 0.9735 & 1 & 0.9731 \\
{$(2000, 1500, 25, 0.1)$} & 0.1242  & 0.1237 & 0.1242 &  0.9670& 0.9200 & 0.9676 \\
\hline
\end{tabular}
}
\end{center}
\caption{Simulation results for the synthetic data described in Section~\ref{sec:synthetic}. The results corresponds
to $95\%$ confidence intervals. }\label{tbl:confidence}
\end{table*}

\smallskip
\noindent{\bf False positive rates and statistical powers.}
Table~\ref{tbl:FPTP} summarizes the false positive rates and the statistical powers achieved
by our proposed method, the multisample-splitting method~\cite{multispliting}, and the ridge-type projection estimator~\cite{BuhlmannSignificance} for several configurations. The results are obtained by taking average
over $20$ independent realizations of measurement errors for each configuration. 
As we see the multisample-splitting achieves false positive rate 0 on all of the configurations considered here, making no type I error. However, the true positive rate is always smaller than that of our proposed method.
By contrast, our method achieves false positive rate close to the pre-assigned significance level $\alpha = 0.05$ and 
obtains much higher true positive rate. Similar to the multisample-splitting, the ridge-type projection estimator is conservative and achieves false positive rate smaller
than $\alpha$. This, however, comes at the cost of a smaller true positive rate than our method.
It is worth noting that an ideal testing procedure should allow to control the level of statistical significance $\alpha$, and obtain the 
maximum true positive rate at that level. 

Here, we used the {\sf R}-package {\sf hdi} to test multisample-splitting and the ridge-type projection estimator.  

Let $Z= (z_i)_{i=1}^p$ denote the vector with $z_i \equiv \sqrt{n}(\htheta^u_i - \theta_{0,i})/\hsigma \sqrt{[M\hSigma M^\sT]_{i,i}}$.
Fig.~\ref{fig:Z_qqnorm} shows the sample quantiles of $Z$ versus the quantiles of the standard normal distribution
for one realization of the configuration $(n,p,s_0,b) = (1000,600,10,1)$. 
The scattered points are close to the line with unit slope and zero intercept. This confirms the result of Theorem~\ref{lemma:LastDistribution}
regarding the gaussianity of the entries $z_i$.

For the same problem, in Fig.~\ref{fig:p-value} we plot the empirical CDF of the computed $p$-values restricted to the variables outside the support. Clearly, the $p$-values for these entries are uniformly distributed as expected.

\begin{table*}[h]
\begin{center}
{\small
\begin{tabular}{|c|c|c|c|c|c|c| }
\hline
\multicolumn{1}{ |c| }{} & \multicolumn{2}{ c| }{Our method} & \multicolumn{2}{ c| }{Multisample-splitting}
& \multicolumn{2}{ c| }{Ridge-type projection estimator}\\
\hline
Configuration& FP & TP & FP & TP & FP & TP\\ \hline
{$(1000, 600, 10, 0.5)$} &0.0452 & 1 & 0& 1&0.0284 & 0.8531\\
{$(1000, 600, 10, 0.25)$}&0.0393 & 1& 0& 0.4& 0.02691& 0.7506 \\
{$(1000, 600, 10, 0.1)$} &0.0383 & 0.8&0 &0 & 0.2638 & 0.6523 \\
{$(1000, 600, 30, 0.5)$} &0.0433 & 1& 0&1 &0.0263 & 0.8700\\
{$(1000, 600, 30, 0.25)$}&0.0525 &1 & 0&0.4 &0.2844 & 0.8403  \\
{$(1000, 600, 30, 0.1)$} & 0.0402& 0.7330&0 &0 &0.2238 &0.6180  \\
{$(2000, 1500, 50, 0.5)$} &0.0421 & 1&0 & 1&0.0301 &0.9013 \\
{$(2000, 1500, 50, 0.25)$} & 0.0415& 1& 0& 1& 0.0292&0.8835  \\
{$(2000, 1500, 50, 0.1)$} & 0.0384& 0.9400&0 &0 & 0.02655& 0.7603  \\
{$(2000, 1500, 25, 0.5)$} &0.0509 & 1&0 & 1& 0.0361& 0.9101 \\
{$(2000, 1500, 25, 0.25)$} &0.0481 &1 &0 &1 & 0.3470&0.8904  \\
{$(2000, 1500, 25, 0.1)$} & 0.0551& 1& 0& 0.16&0.0401 & 0.8203 \\
\hline
\end{tabular}
}
\end{center}
\caption{Simulation results for the synthetic data described in Section~\ref{sec:synthetic}. The false positive rates (FP) and 
the true positive rates (TP) are computed at significance level $\alpha = 0.05$.}\label{tbl:FPTP}
\end{table*}

\begin{figure}[]
\centering
\includegraphics*[width = 2.8in]{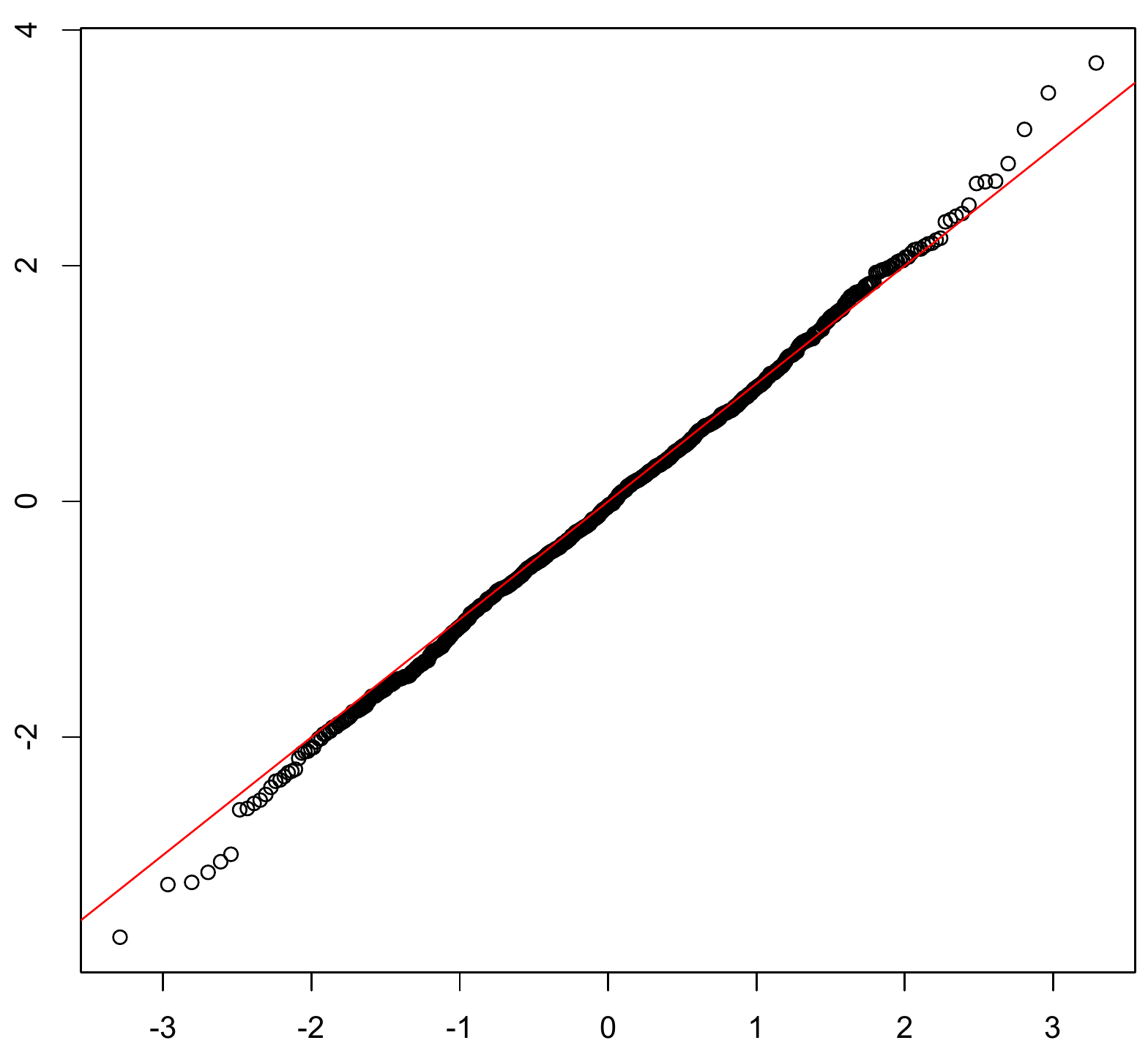}
\put(-187,-15){{\small Quantiles of  standard normal distribution}}
\put(-220,50){\rotatebox{90}{\small Sample quantiles of $Z$}}
\caption{Q-Q plot of $Z$ for one realization of configuration $(n,p,s_0,b) = (1000,600,10,1)$.}\label{fig:Z_qqnorm}
\end{figure}

\begin{figure}[]
\centering
\includegraphics*[width = 2.8in]{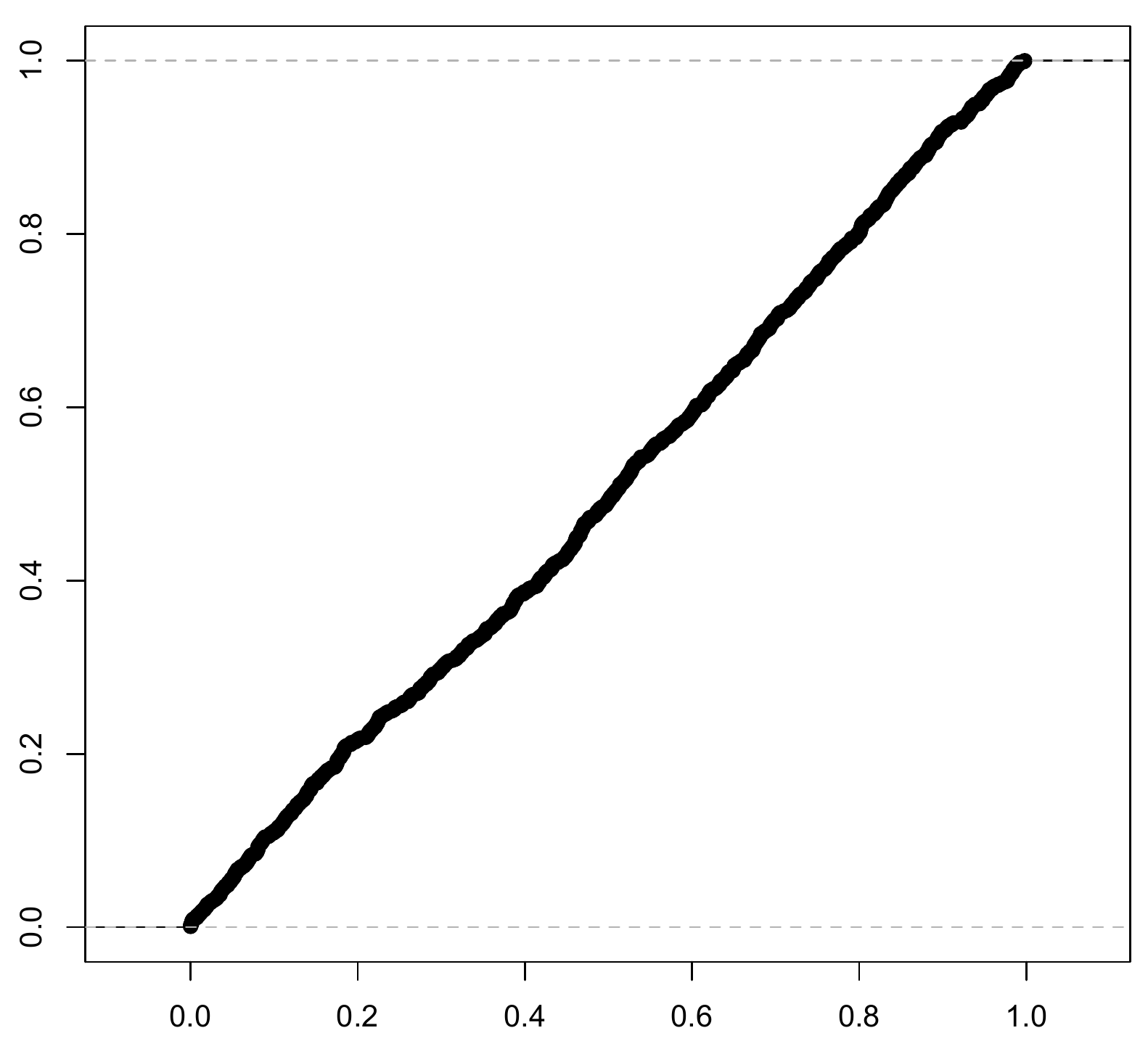}
\caption{Empirical CDF of the computed $p$-values (restricted to entries outside the support)
for one realization of configuration $(n,p,s_0,b) = (1000,600,10,1)$. Clearly the plot confirms that the $p$-values
 are distributed according to uniform distribution.}\label{fig:p-value}
\end{figure}   
%==================================================================
\subsection{Real data}
As a real data example, we consider a high-throughput genomic data set concerning riboflavin (vitamin $B_2$) production rate.
This data set is made publicly available by~\cite{BuhlmannBio} and contains $n=71$ samples and $p = 4,088$ covariates corresponding to $p = 4,088$ genes.
For each sample, there is a real-valued response variable indicating the logarithm of the riboflavin production rate along with
the logarithm of the expression level of the $p = 4,088$ genes as the covariates. 
%There are $n=71$ samples in the data set that were hybridized repeatedly during a fed-batch fermentation process 
%and different strains grown under different fermentation conditions were analyzed.

Following~\cite{BuhlmannBio}, we model the riboflavin production rate as a linear model with $p = 4,088$ covariates
and $n = 71$ samples, as in Eq.~\eqref{eqn:regression}. We use the ${\sf R}$ package ${\sf glmnet}$~\cite{glmnet} to fit the LASSO estimator.
Similar to the previous section, we use the regularization parameter $\lambda = 4\hsigma\sqrt{(2\log p)/n}$, where $\hsigma$
is given by the scaled LASSO as per equation~\eqref{eq:SLASSO} with $\widetilde{\lambda} =  10\sqrt{(2\log p)/n}$.
This leads to the choice $\lambda = 0.036$. The resulting model contains 30 genes (plus an intercept term) corresponding to the nonzero parameters of the lasso 
estimator.

We use Eq.~\eqref{eq:p-value} to construct $p$-values for different genes. Adjusting FWER to $5\%$ significance level, we find two 
significant genes, namely genes {\sf YXLD-at} and {\sf YXLE-at}. 
%Interestingly, these genes were also discovered to be 
%important for negative regulation of Bacillus subtilis extracytoplasmic sigma factor in microbiology society~\cite{Cao}.   
By contrast, the multisample-splitting method proposed in~\cite{multispliting} finds only the gene {\sf YXLD-at} at the FWER-adjusted $5\%$ significance level. Also the Ridge-type projection estimator, proposed in~\cite{BuhlmannSignificance}, returns no significance gene. (See~\cite{BuhlmannBio} for further discussion on these methods.) This indicates that these methods are more conservative and produce typically larger $p$-values.  

In Fig.~\ref{fig:p-Rib} we plot the empirical CDF of the computed $p$-values 
for riboflavin example. Clearly the plot confirms that the $p$-values
 are distributed according to uniform distribution.
 
\begin{figure}[h]
\centering
\includegraphics*[width = 2.8in]{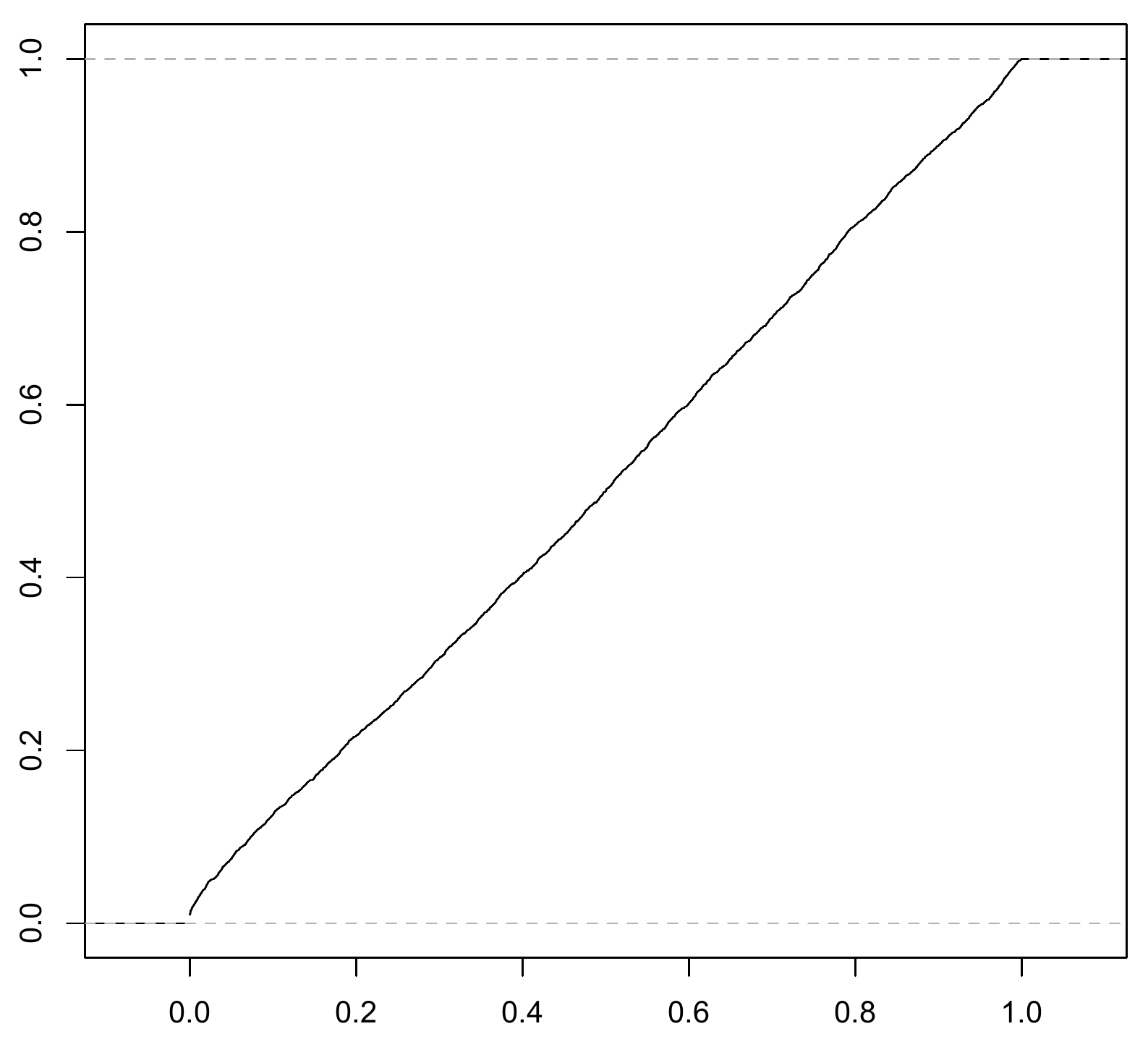}
\put(-187,-15){{\small Quantiles of  standard normal distribution}}
\put(-220,50){\rotatebox{90}{\small Sample quantiles of $Z$}}
\caption{Empirical CDF of the computed $p$-values 
for riboflavin example. Clearly the plot confirms that the $p$-values
 are distributed according to uniform distribution.}\label{fig:p-Rib}
\end{figure}   
\section{Proofs}

\subsection{Proof of Theorem~\ref{thm:deterministic}}
\label{sec:ProofDeterministic}

Substituting $Y=\bX\theta_0+W$ in the definition
(\ref{eq:GeneralDebiased}),
we get 
\begin{align}
\htheta^* &=
\htheta^n+\frac{1}{n}M\bX^{\sT}\bX(\theta_0-\htheta^n)+\frac{1}{n}M\bX^{\sT}
W\\
&= \theta_0+\frac{1}{\sqrt{n}}\, Z +\frac{1}{\sqrt{n}}\, \Delta\, ,
\end{align}
with $Z,\Delta$ defined as per the theorem statement. Further $Z$ is
Gaussian with the stated covariance because it is a linear function of
the Gaussian vector $W\sim \normal(0,\sigma^2\,\id_{p\times p})$. 

We are left with the task of proving the bound
(\ref{eq:FirstBoundDelta}) on $\Delta$. Note that by
definition (\ref{def:Coherence}), we have
\begin{align}
\|\Delta\|_{\infty}\le \sqrt{n}\,
|M\hSigma-\id|_{\infty}\,\|\htheta^n-\theta_0\|_1 = \sqrt{n}\, \mu_*
\|\htheta^n-\theta_0\|_1 \, . \label{eq:LinftyL1}
\end{align}
By \cite[Theorem 6.1, Lemma 6.2]{buhlmann2011statistics}, we have, for any $\lambda\ge 4\sigma\sqrt{2K\log(pe^{t^2/2})/n}$
\begin{align}
\prob\Big(\|\htheta^n-\theta_0\|_1\ge
\frac{4\lambda s_0}{\phi_0^2}\Big)\le 2\, e^{-t^2/2}\, .
\end{align}
(More precisely, we consider the trivial generalization of
\cite[Lemma 6.2]{buhlmann2011statistics} to the case
$(\bX^T\bX/n)_{ii}\le K$, instead of $(\bX^T\bX/n)_{ii}=1$ for all
$i\in [p]$.)

Substituting Eq.~(\ref{eq:LinftyL1}) in the last bound, we get 
\begin{align}
\prob\Big(\|\Delta\|_\infty\ge
\frac{4\lambda \mu_* s_0\sqrt{n}}{\phi_0^2}\Big)\le 2\, e^{-t^2/2}\, .
\end{align}
Finally, the claim follows by selecting $t$ so that $e^{t^2/2} = p^{c_0}$.

\subsection{Proof of Theorem~\ref{thm:event_thm}.$(a)$}
\label{proof:thm_eventA}

Note that the event $\cE_n$ requires two conditions. Hence, its complement
\begin{align}
\event_n(\phi_0,s_0,K)^c &= \cB_{1,n}(\phi_0,s_0)\cup\cB_{2,n}(K)\, ,\\
\cB_{1,n}(\phi_0,s_0) & \equiv\Big\{\bX\in\reals^{n\times
  p}:\,\;\min_{S:\; |S|\le s_0}\phi(\hSigma,S)<\phi_0 , \;\; \hSigma = (\bX^{\sT}\bX/n)\Big\}\, ,\\
\cB_{2,n}(K) & \equiv\Big\{\bX\in\reals^{n\times
  p}:\max_{i\in [p]}\, \hSigma_{i,i} \le K, \;\; \hSigma = (\bX^{\sT}\bX/n)\Big\}\, .
\end{align}
We will bound separately the probability of  $\cB_{1,n}$ and the
probability of $\cB_{2,n}$. The claim of
Theorem~\ref{thm:event_thm}.$(a)$ follows by union bound.
\subsubsection{Controlling $\cB_{1,n}(\phi_0,s_0)$}

It is also useful to recall the notion of
restricted eigenvalue, introduced by Bickel, Ritov and
Tsybakov \cite{BickelEtAl}.
\begin{definition}
Given a symmetric matrix $Q\in\reals^{p\times p}$ an integer
$s_0\ge 1$, and $L>0$, the restricted eigenvalue of $Q$ is
defined as
\begin{align}
\re^2(Q,s_0,L) \equiv\min_{S\subseteq [p], |S|\le s_0}\min_{\theta\in\reals^p}\Big\{\frac{\<\theta,Q\,\theta\>}{\|\theta_S\|_2^2} :\;\;
\theta\in\reals^p, 
\;\; \|\theta_{S^c}\|_1\le L\|\theta_S\|_1\Big\}\, .
\end{align} 
\end{definition}

Rudelson and Zhou \cite{rudelson2011reconstruction} prove that, if
the population covariance satisfies the restricted eigenvalue
condition, then the sample covariance satisfies it as well, with high
probability.
More precisely \cite[Theorem 6]{rudelson2011reconstruction}, the following happens for some $c_*\le 2000$, $m\equiv c_*
s_0C_{\rm max}^2 /\re^2(\Sigma,s_0,9)$, and
every $n\ge 4c_*m\kappa^4\log(60ep/(m\kappa))$ we have
\begin{align}
\prob\Big(\re(\hSigma,s_0,3)\ge \frac{1}{2}\re(\Sigma,s_0,9)\Big)\ge 1-2e^{-n/(4c_*\kappa^4)}
\end{align}
Note  that $\re(\Sigma,s_0,9)\ge \sigma_{\rm
  min}(\Sigma)^{1/2}\ge C_{\rm min}^{1/2}$ and, by Cauchy-Schwartz
$\min_{S:|S|\le s_0}\phi(\hSigma,S)\ge \re(\hSigma,s_0,3)$. 
With the definitions in the statement (cf. Eq.~(\ref{eq:BoundEvA})), we therefore have
\begin{align}
\prob\Big(\min_{S:|S|\le s_0}\phi(\hSigma,S)\ge \frac{1}{2}C_{\rm
  min}^{1/2}\Big)\ge 1-2e^{-c_1n}\, .
\end{align}
Equivalently, $\prob(\cB_{1,n}(\phi_0,s_0))\le 2\, e^{-c_1n}$.

\subsubsection{Controlling $\cB_{2,n}(K)$}

By definition
\begin{align}
\hSigma_{ii}-1 = \frac{1}{n}\sum_{\ell=1}^n(\<X_\ell,e_i\>^2-1)
=\frac{1}{n}\sum_{\ell=1}^n u_{\ell} ,\, .
\end{align}
Note that $u_{\ell}$ are independent centered random
variables. Further (recalling that, for any random variables $U, V$,
$\|U+V\|_{\psi_1}\le \|U\|_{\psi_1}+\|V\|_{\psi_1}$, and
$\|U^2\|_{\psi_1}\le 2\|U\|_{\psi_2}^2$) they are subexponential with subexponential norm
\begin{align}
\|u_{\ell}\|_{\psi_1}&\le 2\|\<X_\ell,e_i\>^2\|_{\psi_1}\le
4\|\<X_\ell,e_i\>\|^2_{\psi_1}\\
&\le 4\|\<\Sigma^{-1/2}X_{\ell},\Sigma^{1/2}e_i\>\|_{\psi_1}^2 \\
&\le
4\kappa^2 \|\Sigma^{1/2}e_i\|_2^2 = 4\kappa^2\Sigma_{ii} = 4\kappa^2\, .
\end{align}
 By Bernstein-type inequality for centered subexponential
random variables \cite{Vershynin-CS}, we get 
\begin{align}
\prob\Big\{\frac{1}{n} \Big|\sum_{\ell=1}^n u_{\ell} \Big| \ge \eps
\Big\} \le 
2 \exp \Big[ -\frac{n}{6} \min\Big((\frac{\eps}{4e\kappa^2})^2, \frac{\eps}{4e\kappa^2}\Big) \Big]\,.
\end{align}
Hence, for all $\eps$ such that
$\eps/(e\kappa^2)\in[\sqrt{(48\log p)/n}, 4]$,
\begin{align}
\prob\Big(\max_{i\in[p]}\hSigma_{ii}\ge 1+\eps\Big)&\le 2 p\, 
\exp\Big(-\frac{n\eps^2}{24e^2\kappa^4} \Big)
\le 2e^{-c_1n}\, ,
\end{align}
which implies $\prob(\bX\in\cB_{2,n}(K))\le 2\, e^{-c_1n}$ for all
$K-1\ge 20\kappa^2\sqrt{(\log p)/n}\ge \sqrt{(48e^2\kappa^4\log p)/n}$.

\subsection{Proof of Theorem~\ref{thm:event_thm}.$(b)$}
\label{proof:thm_eventB}

Obviously, we have
\begin{align}
\com(\bX)\le \big|\Sigma^{-1}\hSigma-\id\big|\, ,
\end{align}
and hence the statement follows immediately from the following estimate.
\begin{lemma}\label{lem:main_lem}
Consider a random design matrix $\bX \in \reals^{p\times p}$, with i.i.d. rows  having mean zero and population covariance $\Sigma$.
Assume that
\begin{itemize}
\item[$(i)$] We have $\sigma_{\min} (\Sigma) \ge C_{\min} >0$, and $\sigma_{\max}(\Sigma) \le C_{\max} <\infty$.
\item[$(ii)$] The rows of $X\Sigma^{-1/2}$ are sub-gaussian with $\kappa = \|\Sigma^{-1/2} X_1\|_{\psi_2}$.
\end{itemize}
Let $\hSigma = (\bX^\sT \bX)/n$ be the empirical covariance. Then, for any constant
$C > 0$, the following holds true.
\begin{eqnarray}
\prob \bigg\{\Big|\Sigma^{-1} \hSigma - \id \Big|_\infty \ge a \sqrt{\frac{\log p}{n}} \bigg\} \le 2p^{-c_2}\,, 
\end{eqnarray}
with $c_2 = (a^2 C_{\min})/(24 e^2 \kappa^4 C_{\max}) - 2$.
\end{lemma}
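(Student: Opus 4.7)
\textbf{Proof plan for Lemma~\ref{lem:main_lem}.}
The plan is to bound $|\Sigma^{-1}\hSigma - \id|_\infty$ entrywise and then take a union bound over the $p^2$ entries. First I would write
\begin{align*}
\big(\Sigma^{-1}\hSigma - \id\big)_{ij} \;=\; \frac{1}{n}\sum_{\ell=1}^{n} u_\ell^{(i,j)}, \qquad u_\ell^{(i,j)} \equiv (\Sigma^{-1}X_\ell)_i (X_\ell)_j - \delta_{ij},
\end{align*}
so each entry is an empirical mean of i.i.d.\ centered random variables (centered because $\E[X_\ell X_\ell^\sT] = \Sigma$). Applying Bernstein's inequality for sub-exponential variables then gives a Gaussian-type tail for each entry in the regime $t = a\sqrt{(\log p)/n}$, and union bounding over $(i,j) \in [p]\times[p]$ produces the extra factor $p^2$ visible in the exponent $c_2 = a^2C_{\min}/(24e^2\kappa^4 C_{\max}) - 2$.

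The main technical step is controlling $\|u_\ell^{(i,j)}\|_{\psi_1}$ with the right dependence on $C_{\min}, C_{\max}, \kappa$. I would factor $X_\ell = \Sigma^{1/2}\tilde X_\ell$ with $\tilde X_\ell = \Sigma^{-1/2} X_\ell$ having $\|\tilde X_\ell\|_{\psi_2} \le \kappa$. Then
\begin{align*}
(\Sigma^{-1}X_\ell)_i = (\Sigma^{-1/2}e_i)^\sT \tilde X_\ell, \qquad (X_\ell)_j = (\Sigma^{1/2}e_j)^\sT \tilde X_\ell,
\end{align*}
so using the definition of the sub-gaussian norm of a vector together with $\|\Sigma^{-1/2}e_i\|_2^2 \le 1/C_{\min}$ and $\|\Sigma^{1/2}e_j\|_2^2 \le C_{\max}$, one gets
\begin{align*}
\|(\Sigma^{-1}X_\ell)_i\|_{\psi_2} \le \kappa/\sqrt{C_{\min}}, \qquad \|(X_\ell)_j\|_{\psi_2} \le \kappa\sqrt{C_{\max}}.
\end{align*}
Since $\|UV\|_{\psi_1}\le 2\|U\|_{\psi_2}\|V\|_{\psi_2}$ (as used in Section~\ref{proof:thm_eventA}), it follows that $\|u_\ell^{(i,j)}\|_{\psi_1} \le c_0 \kappa^2 \sqrt{C_{\max}/C_{\min}}$ for a small absolute constant $c_0$; the constant $24$ in $c_2$ is what I would need to track carefully.

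With this bound in hand, I apply the Bernstein-type tail inequality of the form used in Section~\ref{proof:thm_eventA},
\begin{align*}
\prob\Big\{\Big|\frac{1}{n}\sum_{\ell=1}^n u_\ell^{(i,j)}\Big| \ge t\Big\} \le 2\exp\!\Big(-\tfrac{n}{6}\min\big((t/(e\|u\|_{\psi_1}))^2,\; t/(e\|u\|_{\psi_1})\big)\Big).
\end{align*}
Substituting $t = a\sqrt{(\log p)/n}$ and using the hypothesis $n \ge a^2 C_{\min}\log p/(4e^2 C_{\max}\kappa^4)$ places us in the quadratic branch of the minimum, yielding the per-entry bound $2 p^{-a^2 C_{\min}/(24 e^2 \kappa^4 C_{\max})}$. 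A union bound over all $p^2$ pairs $(i,j)$ then gives $2\, p^{2-a^2C_{\min}/(24 e^2 \kappa^4 C_{\max})} = 2\, p^{-c_2}$, as claimed.

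The main obstacle is bookkeeping the multiplicative constants so that $c_2$ comes out with exactly the stated prefactor $24 e^2$; conceptually the argument is just ``$\ell_\infty$-entrywise Bernstein for a centered product of sub-gaussian linear forms, plus a union bound,'' and no delicate structural property of $\Sigma$ is needed beyond the stated eigenvalue and sub-gaussian hypotheses.
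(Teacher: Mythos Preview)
Your proposal is correct and is essentially identical to the paper's own proof: the paper writes the $(i,j)$ entry as an average of centered terms $v_\ell^{(ij)} = \langle \Sigma^{-1/2}_{i,\cdot},\tilde X_\ell\rangle\langle \Sigma^{1/2}_{j,\cdot},\tilde X_\ell\rangle - \delta_{ij}$ (your $u_\ell^{(i,j)}$ in different notation), bounds the $\psi_1$-norm via $\|UV\|_{\psi_1}\le 2\|U\|_{\psi_2}\|V\|_{\psi_2}$ to get $\kappa' = 2\kappa^2\sqrt{C_{\max}/C_{\min}}$, applies the same Bernstein inequality with the same quadratic-branch condition, and union bounds over the $p^2$ pairs to produce the $-2$ in $c_2$. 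Your identification of the sample-size condition $n \ge [a/(e\kappa')]^2\log p$ as what puts you in the Gaussian tail regime matches the paper exactly.
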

\begin{proof}[Proof of Lemma \ref{lem:main_lem}]
The proof is based on Bernstein-type inequality for sub-exponential random variables~\cite{Vershynin-CS}.
Let $\tilde{X}_\ell = \Sigma^{-1/2} X_\ell$, for $\ell \in [n]$, and write
\[
Z\equiv \Sigma^{-1} \hSigma - \id = \frac{1}{n} \sum_{\ell=1}^n \Big\{\Sigma^{-1} X_\ell X_\ell^\sT - \id \Big\}
= \frac{1}{n} \sum_{\ell=1}^n \Big\{\Sigma^{-1/2} \tilde{X}_\ell \tilde{X}_\ell^\sT \Sigma^{1/2} - \id \Big\}\,.
\] 
Fix $i,j \in [p]$, and for $\ell \in [n]$, let 
$v^{(ij)}_\ell = \<\Sigma^{-1/2}_{i,\cdot}, \tilde{X}_\ell \> \<\Sigma^{1/2}_{j,\cdot}, \tilde{X}_\ell \> - \delta_{i,j}$, where $\delta_{i,j} = {\bf 1}_{\{i=j\}}$.
Notice that $\E(v^{(ij)}_\ell) = 0$, and the $v^{(ij)}_\ell$ are independent for $\ell \in [n]$. Also, $Z_{i,j} = (1/n) \sum_{\ell=1}^n v^{(ij)}_\ell$.
By~\cite[Remark 5.18]{Vershynin-CS}, we have 
\[
\|v^{(ij)}_\ell\|_{\psi_1} \le 2 \|\<\Sigma^{-1/2}_{i,\cdot}, \tilde{X}_\ell \> \<\Sigma^{1/2}_{j,\cdot}, \tilde{X}_\ell \> \|_{\psi_1}.
\]
Moreover, for any two random variables $X$ and $Y$, we have
\begin{align*}
\|XY\|_{\psi_1} &= \sup _{p \ge 1} \,  p^{-1} \E(|XY|^p)^{1/p} \\
& \le \sup _{p \ge 1} \, p^{-1} \E(|X|^{2p})^{1/2p}\, \E(|Y|^{2p})^{1/2p} \\
&\le 2\, \Big(\sup _{q \ge 2} \, q^{-1/2} \E(|X|^{q})^{1/q}\Big) \, \Big(\sup _{q \ge 2} \, q^{-1/2} \E(|Y|^{q})^{1/q} \Big)\\
& \le 2 \|X\|_{\psi_2} \, \|Y\|_{\psi_2} \,.
\end{align*}
Hence, by assumption $(ii)$, we obtain
\begin{align*}
\|v^{(ij)}_\ell\|_{\psi_1} &\le 2 \|\<\Sigma^{-1/2}_{i,\cdot}, \tilde{X}_\ell \>  \|_{\psi_2} \|\<\Sigma^{1/2}_{j,\cdot}, \tilde{X}_\ell\>\|_{\psi_2}\\
&\le 2 \|\Sigma^{-1/2}_{i,\cdot}\|_2  \|\Sigma^{1/2}_{j,\cdot}\|_2 \kappa^2 \le 2 \sqrt{C_{\max}/ C_{\min}}\, \kappa^2\,.
\end{align*}
Let $\kappa'  = 2 \sqrt{C_{\max} /C_{\min}} \kappa^2$. Applying Bernstein-type inequality for centered sub-exponential random variables~\cite{Vershynin-CS}, we get
\[
\prob\Big\{\frac{1}{n} \Big|\sum_{\ell=1}^n v^{(ij)}_\ell \Big| \ge \eps \Big\} \le 2 \exp \Big[ -\frac{n}{6} \min\Big((\frac{\eps}{e\kappa'})^2, \frac{\eps}{e\kappa'}\Big) \Big]\,.
\]
Choosing $\eps = a\sqrt{(\log p)/n}$, and 
assuming $n \ge [a/(e\kappa')]^2 \log p$, we arrive at
\[
\prob\bigg \{\frac{1}{n} \Big|\sum_{\ell=1}^n v^{(ij)}_\ell \Big| \ge a\sqrt{\frac{\log p}{n}}  \bigg\} 
\le 2 p^{-a^2/(6e^2\kappa'^2)}\,.
\]
The result follows by union bounding over all possible pairs $i, j \in [p]$.
\end{proof}

\subsection{Proof of Theorem \ref{thm:main_thm}}

Let
\begin{align}
\Delta_0 \equiv \Big(\frac{16ac\, \sigma}{C_{\rm min}}
\Big)\frac{s_0\log p}{\sqrt{n}} 
\end{align}
be a shorthand for the bound on $\|\Delta\|_{\infty}$ appearing in
Eq.~(\ref{eq:TailBoundQuant}). Then we have
\begin{align}
\prob\Big(\|\Delta\|_{\infty}\ge \Delta_0\Big)  \le & 
\prob\Big(\big\{\|\Delta\|_{\infty}\ge
\Delta_0\big\}\cap\event_n(C_{\min}^{1/2}/2,s_0,3/2)\cap \coev_n(a)\Big)\nonumber\\
&+\prob\big(\event_n^2(C_{\min}^{1/2}/2,s_0,3/2)\big)
+\prob\big(\coev_n^c(a)\big)\\
\le &
\prob\Big(\big\{\|\Delta\|_{\infty}\ge
\Delta_0\big\}\cap\event_n(C_{\min}^{1/2}/2,s_0,3/2)\cap \coev_n(a)\Big)
+4\, e^{-c_1n}+ 2\, p^{-c_2}\, ,
\end{align}
where, in the firsr equation ${\cal A}^c$ denotes the complement of
event ${\cal A}$ and  the second inequality follows from Theorem
\ref{thm:event_thm}. Notice, in particular, that the bound
(\ref{eq:BoundEvA}) can be applied for $K=3/2$ since, under the
present assumptions  $20\kappa^2\sqrt{(\log p)/n}\le 1/2$.

Finally 
\begin{align}
\prob\Big(\big\{\|\Delta\|_{\infty}\ge
\Delta_0\big\}\cap\event_n(C_{\min}^{1/2}/2,s_0,3/2)\cap \coev_n(a)\Big)&\nonumber\\
\le\sup_{\bX \in \event_n(C_{\min}^{1/2}/2,s_0,3/2)\cap \coev_n(a)}&
\prob\Big(\|\Delta\|_{\infty}\ge
\Delta_0\Big|\bX\Big)\le 2\, p^{-\tilde{c_0}}\, ,
\end{align}
Here the last inequality follows from Theorem \ref{thm:deterministic}
applied per given $\bX \in \event_n(C_{\min}^{1/2}/2,s_0,3/2)\cap
\coev_n(a)$
and hence using the bound (\ref{eq:FirstBoundDelta}) with $\phi_0 =
C_{\min}^{1/2}/2$, $K=3/2$, $\mu_* = a\sqrt{(\log p)/n}$.
%
%%%%%%%%%%%%%%%%%%%%%%%%%%%%%%%%%%%%%%%%%

\subsection{Proof of Lemma~\ref{lemma:LastDistribution}}
\label{proof:LastDistribution}

We will prove that, under the stated assumptions 
\begin{eqnarray}
\lim\sup_{n\to\infty}\sup_{ \|\theta_0\|_0 \le s_0 }\prob \left\{\frac{\sqrt{n}(\htheta^u_i - \theta_{0,i})}{\hsigma [M \hSigma M^\sT]_{i,i}^{1/2}} 
\le x  \right\} \le \Phi(x)  \, .\label{eq:LastDistrUB}
\end{eqnarray}
A matching lower bound follows by a completely analogous argument.

Notice that by Eq.~(\ref{eq:GeneralRepresentationBis}), we have
\begin{align}
\frac{\sqrt{n} (\htheta^u_i - \theta_{0,i})}{\sigma [M\hSigma M^\sT]_{ii}^{1/2}}
= \frac{e_i^\sT M\bX^\sT W}{\sigma [M\hSigma M^\sT]_{ii}^{1/2}} + \frac{\Delta_i}{\sigma [M\hSigma M^\sT]_{ii}^{1/2}}\,. 
\end{align}
Let $V = \bX M^\sT e_i/(\sigma [M\hSigma M^\sT]_{ii}^{1/2})$ and $\tZ \equiv V^\sT W$. We claim that $\tilde{Z}\sim \normal(0,1)$.
To see this, note that $\|V\|_2 = 1$, and $V$ and $W$ are independent.
Hence,
\begin{align}
\prob(\tZ \le x) = \E\{\prob(V^\sT W\le x |V )\} = \E\{\Phi(x) | V\} = \Phi(x)\,,
\end{align}
which proves our claim. 
In order to prove Eq.~(\ref{eq:LastDistrUB}), fix $\eps>0$ and write
\begin{align}
\prob \left(\frac{\sqrt{n}(\htheta^u_i - \theta_{0,i})}{\hsigma [M \hSigma M^\sT]_{i,i}^{1/2}} 
\le x  \right) &= \prob\left(\frac{\sigma}{\hsigma} \tZ +
\frac{\Delta_i}{\hsigma[M \hSigma M^\sT]_{i,i}^{1/2}}\ge x\right)\\
&\le  \prob\Big(\frac{\sigma}{\hsigma} \tZ \le
x+\eps\Big)+\prob\left(\frac{|\Delta_i|}{\hsigma[M \hSigma
  M^\sT]_{i,i}^{1/2}}\ge \eps\right)\\
&\le  \prob\Big(\tZ \le
x+2\eps+\eps|x|\Big)+\prob\left(\frac{|\Delta_i|}{\hsigma[M \hSigma
  M^\sT]_{i,i}^{1/2}}\ge \eps\right)
+\prob\Big(\Big|\frac{\hsigma}{\sigma}-1\Big|\ge \eps\Big)\, .
\end{align}
By taking the limit and using the assumption
(\ref{eq:ConsistencySigma}), we obtain
\begin{align}
\lim\sup_{n\to\infty}\sup_{ \|\theta_0\|_0 \le s_0 }\prob &\left(\frac{\sqrt{n}(\htheta^u_i - \theta_{0,i})}{\hsigma [M \hSigma M^\sT]_{i,i}^{1/2}} 
\le x  \right) \le \\
&\Phi(x+2\eps+\eps|x|) +\lim\sup_{n\to\infty}\sup_{\|\theta_0\|_0 \le s_0 }\prob\left(\frac{|\Delta_i|}{\hsigma[M \hSigma
  M^\sT]_{i,i}^{1/2}}\ge \eps\right)\, .\nonumber
\end{align}
Since $\eps>0$ is arbitrary, it  is therefore sufficient to show that
the limit on the right hand side vanishes for any $\eps>0$.
 
Note that $[M \hSigma
  M^\sT]_{i,i}\ge 1/(4\hSigma_{ii})$ for all $n$ large enough,
  by Lemma \ref{lem:missing_bound}, and since $\coh = a\sqrt{(\log
    p)/n}\to 0$ as $n,p\to\infty$. We have therefore
\begin{align}
\prob\left(\frac{|\Delta_i|}{\hsigma[M \hSigma
  M^\sT]_{i,i}^{1/2}}\ge \eps\right)&\le
\prob\Big(\frac{2}{\hsigma}\hSigma_{ii}^{1/2}\, |\Delta_i|\ge \eps\Big)\\
&\le \prob\Big(\frac{8}{\sigma}\, |\Delta_i|\ge
\eps\Big)+\prob\Big(\frac{\hsigma}{\sigma}\ge
2\Big)+\prob(\hSigma_{ii}\ge \sqrt{2})\, .
\end{align}
Note that   $\prob\big((\hsigma/\sigma)\ge
2\big)\to 0$ by assumption  
(\ref{eq:ConsistencySigma}), and $\prob(\hSigma_{ii}\ge \sqrt{2})\to
0$ by Theorem \ref{thm:event_thm}.$(b)$. Hence
\begin{align}
\lim\sup_{n\to\infty}\sup_{ \|\theta_0\|_0 \le s_0 }\prob\left(\frac{|\Delta_i|}{\hsigma[M \hSigma
  M^\sT]_{i,i}^{1/2}}\ge \eps\right)&\le\lim\sup_{n\to\infty}\sup_{ \|\theta_0\|_0 \le s_0 }\prob\Big(\|\Delta\|_{\infty}\ge
\frac{\eps\sigma}{8}\Big)\\
&\le \lim\sup_{n\to\infty}\big(4\, e^{-c_1n}+4\,
p^{-(\tilde{c}_0\wedge c_2)}\big) = 0\, ,
\end{align}
where the last inequality follows from Eq.~(\ref{eq:TailBoundQuant})
since $s_0 = o(\sqrt{n}/\log p)$ and hence $(16acs_0\log p)/(C_{\rm
  min}\sqrt{n})\le \eps/8$ for all $n$ large enough.

This completes the proof of Eq.~(\ref{eq:LastDistrUB}). The matching
lower bound follows by the same argument.
%
%%%%%%%%%%%%%%%%%%%%%%%%%%%%%%%%%%%%%%%%%

\subsection{Proof of Theorem~\ref{thm:error-power}}
\label{proof:error-power}

We begin with proving Eq.~\eqref{eq:typeI}.
Defining $Z_i \equiv \sqrt{n}(\htheta^u_i - \theta_{0,i})/ (\hsigma [M\hSigma M^\sT]_{i,i}^{1/2})$, we have 
\begin{align*}
\lim_{n\to \infty}  \alpha_{i,n}(\hT) &= \lim_{n\to \infty} \sup_{\theta_0}
\Big\{\prob(P_i \le \alpha): i\in [p],\, \|\theta_0\|_0 \le s_0,\, \theta_{0,i} = 0 \Big\}\\
&= \lim_{n\to \infty} \sup_{\theta_0} \Big\{\prob\Big(\Phi^{-1}(1-\frac{\alpha}{2}) \le \frac{\sqrt{n} |\htheta^u_i|}{\hsigma [M\hSigma M^\sT]_{i,i}^{1/2}} \Big):
\,i\in [p],\, \|\theta_0\|_0 \le s_0,\, \theta_{0,i} = 0 \Big\}\\
& = \lim_{n\to \infty} \sup_{\theta_0} \Big\{ \prob\Big(\Phi^{-1}(1-\frac{\alpha}{2}) \le |Z_i| \Big):
\,i\in [p],\, \|\theta_0\|_0 \le s_0 \Big\} \le \alpha\,,
\end{align*}
where the last inequality follows from Lemma \ref{lemma:LastDistribution}.

We next prove Eq.~\eqref{eq:power}. 
Recall that $\Sigma^{-1}_{\cdot,i}$ is a feasible solution
of~\eqref{eq:optimization}, for $1\le i\le p$ 
with probability at least $1-2p^{-c_2}$, as per Lemma~\ref{lem:main_lem}).
On this event, letting $m_i$ be the solution of the optimization
problem~\eqref{eq:optimization}, we have
\begin{align}
m_i^\sT \hSigma m_i &\le\Sigma^{-1}_{i,\cdot} \hSigma \Sigma^{-1}_{\cdot,i} \nonumber \\
& = (\Sigma^{-1}_{i,\cdot} \hSigma \Sigma^{-1}_{\cdot,i} -
\Sigma_{ii}^{-1})+ \Sigma^{-1}_{i,i} \nonumber \\
& = \frac{1}{n} \sum_{j=1}^N(V_j^2-
\Sigma_{ii}^{-1})+ \Sigma^{-1}_{i,i} \,,
\end{align}
where $V_j = \Sigma^{-1}_{i,\cdot}X_j$ are i.i.d. with $\E(V_j^2) =
\Sigma^{-1}_{ii}$ and sub-gaussian norm 
$\|V_j\|_{\psi_2} \le \|\Sigma_{i,\cdot}^{-1/2}\|_2\|
\Sigma^{-1/2} X_j\|_{\psi_2}\le \kappa
\sqrt{\Sigma_{i,i}^{-1}}$. Letting $U_j=V_j^2-
\Sigma_{ii}^{-1}$, we have that $U_j$ is zero mean and 
sub-exponential with $\|U_j\|_{\psi_1}\le 2 \|V_j^2\|_{\psi_1} \le 2 \|V_j\|_{\psi_2}^2\le
2\kappa^2\Sigma_{ii}^{-1}\le 2\kappa^2\sigma_{\rm min}(\Sigma)^{-1}\le
2\kappa^2C_{\rm min}^{-1}\equiv \kappa'$. 
Hence, by applying Bernstein inequality (as, for instance, in the
proof of Lemma \ref{lem:main_lem}), we have, 
for $\eps\le e\kappa'$,
\begin{align}
\prob\Big(m_i^\sT \hSigma m_i \ge \Sigma^{-1}_{i,i}+\eps\Big)\le
2\, e^{-(n/6)(\eps/e\kappa')^2}+2\, p^{-c_2}\, .
\end{align}
Therefore, by Borel-Cantelli (since we can make $c_2\ge 2$ by a
suitable choice of $a$), we have, almost surely
\begin{align}
\lim\sup_{n\to\infty} [m_i^\sT \hSigma m_i -\Sigma^{-1}_{i,i}]\le 0\, .
 \label{eq:sigmaiB}
\end{align}

This bound leads to a lower bound for the power.
First of all, a straightforward manipulation yields
as follows, letting $z_* \equiv \Phi^{-1}(1-\alpha/2)$:
\begin{align*}
&\lim\inf_{n\to \infty} \frac{1-\beta_{i,n}(\hT;\lb)}{1-\beta_{i,n}^{*}(\lb)} \\
&= \lim\inf_{n\to \infty} \frac{1}{1-\beta_i^{*}(\lb;n)} \inf_{\theta_0} 
\Big\{\prob(P_i \le \alpha) : \, \|\theta_0\|_0 \le s_0,\, |\theta_{0,i}| \ge \lb \Big\}\\
&= \lim\inf_{n\to \infty} \frac{1}{1-\beta_{i,n}^{*}(\lb)}  \inf_{\theta_0} \Big\{\prob\Big(z_* \le \frac{\sqrt{n} |\htheta^u_i|}{\hsigma [M \hSigma M^\sT]_{i,i}^{1/2}} \Big): \|\theta_0\|_0 \le s_0,\,|\theta_{0,i}| \ge \lb \Big\}\\
&=  \lim\inf_{n\to \infty} \frac{1}{1-\beta_{i,n}^{*}(\lb)} \inf_{\theta_0} \Big\{ \prob\Big(z_*\le \Big|Z_i + \frac{\sqrt{n} \theta_{0,i}}
{\hsigma  [M \hSigma M^\sT]_{i,i}^{1/2}}\Big| \Big):
\|\theta_0\|_0 \le s_0,\,|\theta_{0,i}| \ge \lb \Big\}\\
&\stackrel{(a)}{\ge} \lim\inf_{n\to \infty} \frac{1}{1-\beta_{i,n}^{*}(\lb)} \inf_{\theta_0} \Big\{ \prob\Big(z_*\le \Big|Z_i +
 \frac{\sqrt{n} \lb}{\sigma 
[\Sigma_{i,i}^{-1}]^{1/2}}\Big|\Big): \|\theta_0\|_0 \le s_0 \Big\}\\
&=  \lim\inf_{n\to \infty} \frac{1}{1-\beta_{i,n}^{*}(\lb)} \Big\{1- \Phi\Big(z_*- \frac{\sqrt{n} \lb}{\sigma [\Sigma_{i,i}^{-1}]^{1/2}}\Big) 
+ \Phi\Big(-z_*- \frac{\sqrt{n} \lb}{\sigma [\Sigma_{i,i}^{-1}]^{1/2}}\Big) \Big\}\\
&= \lim\inf_{n\to \infty} \frac{1}{1-\beta_{i,n}^{*}(\lb)} G\Big(\alpha, \frac{\sqrt{n} \lb}{\sigma  [\Sigma_{i,i}^{-1}]^{1/2}} \Big) = 1\,.
\end{align*}
Here $(a)$ follows from Eq.~\eqref{eq:sigmaiB} and the fact $|\theta_{0,i}| \ge \lb$. 

%===================================================
\subsection{Proof of Theorem~\ref{thm:nongauss}}
\label{proof:nongauss}
Under the assumptions of Theorem~\ref{thm:main_thm} and assuming $s_0 = o(\sqrt{n}/\log p)$, we have
\[
\sqrt{n}(\htheta^u - \theta_{0})  = \frac{1}{\sqrt{n}} M \bX^\sT W + \Delta\,
\]
with $\|\Delta\|_\infty = o(1)$. Using Lemma~\ref{lem:missing_bound}, we have
\[
\frac{\sqrt{n}(\htheta^u_i - \theta_{0,i}) }{\sigma [M\hSigma M^\sT]_{i,i}^{1/2}} = Z_i + o(1)\,,
\quad \text{with } \, Z_i \equiv \frac{1}{\sqrt{n}}\frac{m_i^\sT \bX^\sT W}{\sigma [m_i^\sT \hSigma m_i]^{1/2}} \,.
\]

The following lemma characterizes the limiting distribution of
$Z_i|\bX$ which implies the validity of the proposed $p$-value 
$P_i$ and confidence intervals.

\begin{lemma}\label{lem:nongauss}
Suppose that the noise variables $W_i$ are independent with $\E(W_i) =
0$, and $\E(W_i^2) = \sigma^2$,
 and $E(|W_i|^{2+a})\le C\, \sigma^{2+a}$
for some $a>(1/2-\beta)^{-1}$.
Let $M = (m_1,\dotsc,m_p)^\sT$ be the matrix with rows $m_i^\sT$ obtained by solving optimization 
problem~\eqref{eq:optimization_mod}. For $i \in [p]$, define
\[
Z_i = \frac{1}{\sqrt{n}}\frac{m_i^\sT \bX^\sT W}{\sigma [m_i^\sT \hSigma m_i]^{1/2}} \,.
\]
Under the assumptions of Theorem~\ref{thm:main_thm}, for any sequence $i = i(n) \in [p]$, and 
any $x\in \reals$, we have 
\[
\lim_{n\to \infty} \prob(Z_i \le x|\bX) = \Phi(x)\,. 
\]
\end{lemma}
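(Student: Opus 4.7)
The plan is to verify the Lindeberg condition for the triangular array $\{\xi_j\}_{j=1}^n$ with $\xi_j \equiv m_i^\sT X_j W_j / (\sigma [m_i^\sT \hSigma m_i]^{1/2})$, so that the conditional Lindeberg--Feller CLT yields $Z_i = n^{-1/2}\sum_j \xi_j \to \normal(0,1)$ in distribution given $\bX$. Conditional on $\bX$, the $\xi_j$ are independent with zero mean (since $W_j \perp X_j$ and $\E W_j = 0$), and a direct computation gives $\sum_{j=1}^n \E(\xi_j^2 \mid \bX) = n \cdot (m_i^\sT \hSigma m_i)/(m_i^\sT \hSigma m_i) = n$, which is precisely the normalization required.

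The key step is to verify that, almost surely over $\bX$,
\[
\frac{1}{n}\sum_{j=1}^n \E\bigl(\xi_j^2\, \ind_{\{|\xi_j|>\eps\sqrt{n}\}} \mid \bX\bigr) \longrightarrow 0
\]
for every $\eps>0$. I would use the standard moment truncation bound $\E(\xi_j^2 \ind_{|\xi_j|>\eps\sqrt{n}} \mid \bX) \le (\eps\sqrt{n})^{-a} \E(|\xi_j|^{2+a} \mid \bX)$, which by the moment assumption $\E(|W_j|^{2+a}) \le C\sigma^{2+a}$ gives
\[
\E(|\xi_j|^{2+a} \mid \bX) \le \frac{C\,|m_i^\sT X_j|^{2+a}}{(m_i^\sT \hSigma m_i)^{(2+a)/2}}.
\]
The crucial use of the modified feasibility constraint $\|\bX m_i\|_\infty \le n^\beta$ in~\eqref{eq:optimization_mod} is to write $\sum_j |m_i^\sT X_j|^{2+a} \le \|\bX m_i\|_\infty^{a}\, \sum_j (m_i^\sT X_j)^2 = n^{\beta a}\cdot n\, m_i^\sT \hSigma m_i$. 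Combining, the Lindeberg sum is bounded by
\[
\frac{C}{\eps^a \,n^{a(1/2-\beta)}\,(m_i^\sT \hSigma m_i)^{a/2}}.
\]

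Since $a(1/2-\beta) > 1$ by the assumption $a>(1/2-\beta)^{-1}$, the numerator deterministically vanishes as $n\to\infty$ provided $m_i^\sT \hSigma m_i$ stays bounded away from zero. This lower bound is exactly the content of Lemma~\ref{lem:missing_bound}, which gives $m_i^\sT \hSigma m_i = [M\hSigma M^\sT]_{ii} \ge (1-\coh)^2/\hSigma_{ii}$, together with Theorem~\ref{thm:event_thm}$(a)$ ensuring $\hSigma_{ii}\le 3/2$ w.h.p., and $\coh = a\sqrt{(\log p)/n}\to 0$. Therefore $(m_i^\sT \hSigma m_i)^{-a/2}$ is $O(1)$ with probability going to one, and the Lindeberg condition holds almost surely over the sequence of designs.

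The main obstacle is the interplay between the two sides of the estimate: the numerator $|m_i^\sT X_j|^{2+a}$ could in principle blow up, and without the extra constraint $\|\bX m_i\|_\infty \le n^\beta$ introduced in~\eqref{eq:optimization_mod} one would have no pointwise control of $|m_i^\sT X_j|$. The proof therefore hinges on this modification: the $n^\beta$ cap sacrifices a tiny amount of variance optimality in order to purchase a uniform $\ell_\infty$ bound on the linear scores $\bX m_i$, which is exactly the ingredient that powers the $(2+a)$-th moment truncation. Once the Lindeberg condition is verified, the conditional CLT gives $\prob(Z_i\le x\mid \bX) \to \Phi(x)$ for every $x$, which is the claim; validity of the $p$-values and confidence intervals in Theorem~\ref{thm:nongauss} then follows by the same argument as in the Gaussian case of Lemma~\ref{lemma:LastDistribution}, absorbing the $o(1)$ term $\Delta_i/(\sigma[M\hSigma M^\sT]_{ii}^{1/2})$ via the tail bound~\eqref{eq:TailBoundQuant}.
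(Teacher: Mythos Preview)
Your argument is correct and in fact a bit cleaner than the paper's. Both proofs set up the same triangular array $\xi_j = m_i^{\sT}X_jW_j/(\sigma[m_i^{\sT}\hSigma m_i]^{1/2})$ and verify Lindeberg using the constraint $\|\bX m_i\|_\infty\le n^\beta$ from~\eqref{eq:optimization_mod} together with the lower bound of Lemma~\ref{lem:missing_bound}. The difference is in how the Lindeberg truncation is carried out. The paper uses the crude inclusion $\{|\xi_j|>\eps\sqrt n\}\subseteq\{\|W\|_\infty/\sigma>\eps c_n n^{1/2-\beta}\}$ and then a union bound over the $n$ coordinates of $W$; this introduces an extra factor of $n$ and is precisely why the moment exponent must satisfy $a(1/2-\beta)>1$. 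Your route --- the direct $(2+a)$th-moment Markov bound $\E(\xi_j^2\ind_{\{|\xi_j|>\eps\sqrt n\}}\mid\bX)\le(\eps\sqrt n)^{-a}\E(|\xi_j|^{2+a}\mid\bX)$ followed by $\sum_j|m_i^{\sT}X_j|^{2+a}\le n^{\beta a}\cdot n\,m_i^{\sT}\hSigma m_i$ --- avoids that union bound entirely. Your final estimate $C\eps^{-a}n^{-a(1/2-\beta)}(m_i^{\sT}\hSigma m_i)^{-a/2}$ actually tends to zero for any $a>0$ (since $\beta<1/2$), so you have proved more than you claim; the hypothesis $a>(1/2-\beta)^{-1}$ is needed only by the paper's coarser argument.

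One point you do not address, and which the paper does, is feasibility of the modified program~\eqref{eq:optimization_mod}. The extra constraint $\|\bX m\|_\infty\le n^\beta$ could in principle render the problem infeasible, in which case $m_i$ is undefined and the whole computation collapses. The paper closes this by checking that $\Sigma^{-1}_{\cdot,i}$ is feasible with high probability: the first constraint holds by Lemma~\ref{lem:main_lem}, and $\|\bX\Sigma^{-1}_{\cdot,i}\|_\infty\le n^\beta$ follows from a sub-gaussian tail bound on $\Sigma^{-1}_{i,\cdot}X_j$ together with a union bound over $j\in[n]$. You should include this step, as the lemma is stated ``under the assumptions of Theorem~\ref{thm:main_thm}'' and hence for random design, so feasibility w.h.p.\ is part of what must be verified.
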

Lemma~\ref{lem:nongauss} is proved in Appendix~\ref{app:nongauss}.

\subsubsection*{Acknowledgments}

A.J. is supported by a Caroline and Fabian
Pease Stanford Graduate Fellowship.
This work was partially supported by the NSF CAREER award CCF-0743978, the NSF
grant DMS-0806211, and the grants AFOSR/DARPA FA9550-12-1-0411 and FA9550-13-1-0036.

%====================================================
% Appendix
%====================================================
\newpage
\appendix

%%%%%%%%%%%%%%%%%%%%%%%%%%%%%%%%
\section{Proof of technical lemmas}

\subsection{Proof of Lemma~\ref{lem:missing_bound}}
\label{app:missing_bound}

Let $C_{i}(\coh)$ be the optimal value of the optimization problem~\eqref{eq:optimization}.
We claim that 
\begin{align}
C_i(\coh)\ge \frac{(1-\coh)^2}{\hSigma_{ii}}\, .\label{eq:ClaimMM}
\end{align}
To prove this claim notice that the constraint implies (by considering
its $i$-th component):
\begin{align*}
1-\<e_i,\hSigma m\>\le \coh\,.
\end{align*}
Therefore if $\tilde{m}$ is feasible and $c\ge 0$, then
\[
\<\tilde{m}, \hSigma \tilde{m}\> \ge \<\tilde{m},\hSigma
\tilde{m}\>+c(1-\coh)-c\<e_i,\hSigma \tilde{m}\>
\ge \min_m\Big\{\<m,\hSigma
m\>+c(1-\coh)-c\<e_i,\hSigma m\>\Big\}\, .
\]
Minimizing over all feasible $\tilde{m}$ gives
\begin{align}
C_{i}(\coh)\ge\min_m\Big\{\<m,\hSigma
m\>+c(1-\coh)-c\<e_i,\hSigma m\>\Big\}\, .
\end{align}
The minimum over $m$ is achieved at $m=ce_i/2$. Plugging in for $m$, we get
\begin{align}
C_{i}(\coh)\ge c(1-\coh)-\frac{c^2}{4}\hSigma_{ii}
\end{align}
Optimizing this bound over $c$, we obtain the claim
(\ref{eq:ClaimMM}), with the optimal choice being $c = 2(1-\coh)/\hSigma_{ii}$.

%+++++++++++++++++++++++++++++++++++++++++++++
\subsection{Proof of Lemma~\ref{lem:nongauss}}
\label{app:nongauss}
Write 
\[
Z_i  = \frac{1}{\sqrt{n}} \sum_{j=1}^n \xi_j\, \quad\quad \text{with}\quad \xi_j \equiv 
\frac{m_i^\sT X_j W_j}{\sigma [m_i^\sT \hSigma m_i]^{1/2}}\,.
\]
Conditional on $\bX$, the summands $\xi_j$ are zero mean and independent. Furthermore, $\sum_{j=1}^n \E(\xi_j^2|\bX) =n$.
We next prove the Lindenberg condition as per
Eq.~\eqref{eq:lindenberg}. Let $c_n \equiv (m_i^\sT \hSigma
m_i)^{1/2}$. 
By Lemma \ref{lem:missing_bound}, we have, almost surely,
$\lim\inf_{n\to\infty}c_n\ge c_{\infty}>0$.
If all the optimization problems in~\eqref{eq:optimization_mod}
are feasible, then $|\xi_j| \le  c_n^{-1} \|\bX m_i\|_{\infty} \|W\|_\infty/\sigma \le c_n^{-1} n^{\beta} (\|W\|_\infty/\sigma)$. Hence,
\begin{align*}
\lim_{n\to \infty} \frac{1}{n} \sum_{j=1}^n \E \Big(\xi_j^2 \ind_{\{|\xi_j| > \eps \sqrt{n}\}} |\bX \Big) 
&\le \lim_{n\to \infty} \frac{1}{n} \sum_{j=1}^n \E \Big(\xi_j^2 \ind_{\{\|W\|_\infty/\sigma > \eps c_n n^{1/2-\beta} \}} |\bX \Big)\\
& = \lim_{n\to \infty}  \frac{1}{n} \sum_{j=1}^n 
\frac{m_i^\sT X_j X_j^\sT m_i}{m_i^\sT \hSigma m_i}
\E(\tW_j^2\, \ind_{\{\|\tW\|_\infty > \eps c_{\infty}
  n^{1/2-\beta} \}}  \Big)\\
&\le \lim_{n\to\infty}n \E(\tW_1^2\, \ind_{\{|\tW_1| > \eps c_{\infty}
  n^{1/2-\beta} \}}  \Big) \\
&\le c'(\eps)\lim_{n\to\infty} n^{1-a(1/2-\beta)}\E\{|\tW_1|^{2+a}\} =
0\, .
\end{align*}
where $\tW_j=W_j/\sigma$ and the last limit follows by taking
$a>(1/2-\beta)^{-1}$ as per the assumptions.

Using Lindenberg central limit theorem, we obtain $Z_i |\bX$ converges
weakly to standard normal distribution, and hence,
$\bX$-almost surely
\[
\lim_{n\to \infty} \prob(Z_i \le x|\bX) = \Phi(x)\,. 
\]

What remains is to show that with high probability all the $p$ optimization problems in~\eqref{eq:optimization_mod} are feasible.
In particular, we show that $\Sigma^{-1}_{i,\cdot}$ is a feasible solution to the $i$-th optimization problem, for $i \in [p]$. 
By Lemma~\ref{lem:main_lem}, $|\Sigma^{-1} \hSigma - \id|_{\infty} \le \coh$, with high probability. Moreover,
\begin{align*}
\sup_{j\in [p]} \|\Sigma^{-1}_{i,\cdot} X_j\|_{\psi_2} &= \sup_{j\in [p]} \|\Sigma^{-1/2}_{i,\cdot} \Sigma^{-1/2} X_j\|_{\psi_2}\\
&= \|\Sigma^{-1/2}_{i,\cdot}\|_2 \sup_{j\in [p]} \|\Sigma^{-1/2} X_j\|_{\psi_2}\\
 &= [\Sigma^{-1}_{i,i}]^{1/2}  \sup_{j\in [p]} \|\Sigma^{-1/2} X_j\|_{\psi_2} = O(1)\,.
\end{align*}
Using tail bound for sub-gaussian variables $\Sigma^{-1}_{i,\cdot} X_j$ and union bounding over $j \in [n]$, we get
\[
\prob(\|\bX\Sigma^{-1}_{\cdot,i}\|_\infty > n^\beta) \le n e^{-cn^{2\beta}}\,,
\]
for some constant $c > 0$.
Note that $s_0 = o(\sqrt{n}/\log p)$ implies $p=e^{o(n^{2\beta})}$.
 Hence, eventually almost surely, $\Sigma^{-1}_{i,\cdot}$ is a feasible solution to optimization problem~\eqref{eq:optimization_mod},
for all $i \in [p]$.

%
%*******************************************************************
%
\section{Corollaries of Theorem \ref{thm:main_thm}}

\subsection{Proof of Corollary \ref{coro:UIsNotBiased}}\label{app:UIsNotBiased}

By Theorem \ref{thm:deterministic}, for any $\bX\in
\event_n(\sqrt{C_{\min}}/2,s_0,3/2)\cap \coev_n(a)$, we have
\begin{align}
\prob\left\{\|\Delta\|_{\infty}\ge L\, c\Big|\bX\right\} \le 2\,
p^{1-(c^2/48)}\, ,\;\;\;\; L\equiv \frac{16a\sigma}{C_{\min}}\,
\frac{s_0\log p}{\sqrt{n}}\, .
\end{align}
(This is obtained by setting $\phi_0 = C_{\min}^{1/2}/2$, $K=3/2$,
$\mu_* = a\sqrt{(\log p)/n}$ in Eq.~(\ref{eq:FirstBoundDelta}). 
Hence
\begin{align}
\|\bias(\htheta^u)\|_{\infty}&\le
\frac{1}{\sqrt{n}}\E\big\{\|\Delta\|_{\infty}\big|\bX\big\}\\
&= \frac{L'}{\sqrt{n}}\int_{0}^\infty
\prob\left\{\|\Delta\|_{\infty}\ge B'\, c\Big|\bX\right\} \, \de c\\
& \le  \frac{L}{\sqrt{n}}\int_{0}^{\infty} \min(1 , p^{1-(c^2/48)})
\,\de c \le \frac{10 L}{\sqrt{n}}\, ,
\end{align}
which coincides with Eq.~(\ref{eq:BoundCoroPerX}). 
The probability estimate  (\ref{eq:CoroProbabilityBound}) simply
follows from Theorem \ref{thm:event_thm} using union bound. 

\subsection{Proof of Corollary
  \ref{coro:LASSOIsBiased}}\label{app:LASSOIsBiased}

By Theorem \ref{thm:event_thm}.$(a)$, we have (setting $C_{\min} =
C_{\max} = \kappa = 1$):
\begin{align}
\prob\Big(\bX\in\cE_n(1/2,s_0,3/2)\Big)\ge 1-4\, e^{-n/c_*}\, .
\end{align}
Further, by Lemma \ref{lem:main_lem}, with 
$\hSigma \equiv \bX^{\sT}\bX/n$,
we have
\begin{align}
\prob\Big(\mu_*(\bX;\id)\le 30\sqrt{\frac{\log p}{n}}\Big)\ge 1-2\, p^{-3}\, .
\end{align}
Finally, by an obvious consequence of the proof of Theorem
\ref{thm:event_thm}.$(a)$
\begin{align}
\prob\Big(\Big\{\bX:\, \min_{i\in[p]}\hSigma_{ii}\ge \frac{1}{2}\Big\}\Big)\ge 1-2\, e^{-n/c_*}\, .
\end{align}

Hence, defining
\begin{align}
\cB_n \equiv \cE_n(1/2,s_0,3/2)\cap \Big\{\bX\in\reals^{n\times p}:\;
\mu_*(\bX;\id)\le 30\sqrt{\frac{\log p}{n}}\Big\}\cap
\Big\{\bX:\, \min_{i\in[p]}\hSigma_{ii}\ge \frac{1}{2}\Big\}
\end{align} 
we have the desired probability bound (\ref{eq:CoroPbound2}). 

Let
\begin{align}
\htheta^* \equiv \htheta^n + \frac{1}{n}\bX^{\sT}(Y-\bX\htheta^n)\, ,
\end{align}
where $\htheta^n = \htheta^n(Y,\bX;\lambda)$ is the LASSO solution
with $\lambda = \sigma\sqrt{(c^2\log p)/n}$.
By Theorem \ref{thm:deterministic}, we have, for any $\bX\in \cB_n$
\begin{align}
\htheta^* =\theta_0+ \frac{1}{\sqrt{n}}Z + \frac{1}{\sqrt{n}}\Delta\,, \quad 
Z|\bX \sim \normal(0,\sigma^2 \hSigma)\,, 
\end{align}
and further 
\begin{align}
\prob\Big\{\|\Delta\|_{\infty}\ge \frac{240 c\sigma s_0\log
  p}{\sqrt{n}}\Big|\bX\Big\}\le 2p^{1-(c^2/48)}\, ,
\end{align}
whence, proceeding as in the proof in the last section,
we get, for some universal numerical constant $c_{**}$, 
\begin{align}
\|\bias(\htheta^u)\|_{\infty}\le \frac{1}{\sqrt{n}}\E\Big\{\|\Delta\|_{\infty}\Big|\bX\Big\}\le c_{**}\sigma\frac{s_0\log
p}{n}\, . \label{eq:BoundDeltaStandard}
\end{align}

Next by Eq.~(\ref{eq:TwoBiasRelation}) we have 
\begin{align}
\big\|\bias(\htheta^n)\big\|_{\infty} \ge
\Big|\lambda\big\|\E\{v(\htheta^n)|\bX\}\big\|_{\infty}-
\big\|\bias(\htheta^u)\big\|_{\infty}\Big|\, .
\end{align}
Hence, in order to prove Eq.~(\ref{eq:CoroBiasStd}), it is sufficient
to prove that $\|\E\{v(\htheta^n)|\bX\}\|_{\infty}\ge 2/3$.

Note that $v(\htheta^n)_i = 1$ whenever $\htheta^n_i>0$ and,
and $|v(\htheta^n)_i|\le 1$, and therefore (letting $b_0 \equiv c_{**}\sigma(s_0\log
p)/n$)
\begin{align}
1-\E\{v(\htheta^n)_i|\bX\}&\le 2\prob\Big(\htheta^n_i\le 0\Big|\bX\Big) 
\le 2\prob\Big(\htheta^n_i\le \lambda\Big|\bX\Big)\\
& \le 2\prob\Big(\theta_{0,i}+\frac{1}{\sqrt{n}}Z_i
+\frac{1}{\sqrt{n}}\Delta_i\le \lambda\Big|\bX\Big)\\
& \le 2\prob\Big(\frac{1}{\sqrt{n}}Z_i\le
\lambda+b_0-\theta_{0,i}\Big|\bX\Big)\\
& = 2\Phi\Big((\lambda+b_0-\theta_{0,i})\sqrt{n/(\sigma^2\hSigma_{ii})}\Big)\\
& \le 2\Phi\Big((\lambda+b_0-\theta_{0,i})\sqrt{2n/(3\sigma^2)}\Big)
\end{align}
with $\Phi(x)$ the standard normal distribution function, and in the
last inequality we used the fact that $\max_{i\in [p]}\hSigma_{ii}\le
3/2$ on $\cB_n$.
We then choose $\theta_0$ so that $\theta_{0,i}\ge
b_0+\lambda+\sqrt{30\sigma^2/n}$, for $i\in [p]$ in the support of
$\theta_0$.
We therefore obtain 
\begin{align}
\E\{v(\htheta^n)_i|\bX\}\ge 1-2\Phi(-\sqrt{20})\ge \frac{2}{3}\, .
\end{align}
This finishes the proof of Eq.~(\ref{eq:CoroBiasStd}).
Equations (\ref{eq:CoroBiasStd_final}) and
(\ref{eq:CoroBiasStd_final2})  are obtained by substituting 
$\lambda = c\sigma\sqrt{(\log p)/n}$ and using
Eq.~(\ref{eq:CoroBiasStd}).
%
%*******************************************************
%
\section{Proof of Lemma \ref{lemma:ConsistencySigma}}
\label{sec:ConsistencySigma}

Let $\event_n = \event_n(\phi_0,s_0,K)$ be the event defined as per
Theorem  \ref{thm:event_thm}.$(a)$. In particular, we take 
$\phi_0 = C_{\rm min}^{1/2}/2$, and $K\ge
1+20\kappa^2\sqrt{(\log p)/n}$ (for, instance $K=1.1$ will work for
all $n$ large enough since $(s_0\log p)^2/n\to 0$, with $s_0\ge 1$, by
assumption). 
Further note that we can assume without loss of generality
$n\ge\nu_0\,  s_0\log (p/s_0)$,
since $s_0 = o(\sqrt{n}/\log p)$.
Fixing $\eps>0$, we have therefore
\begin{align}
\prob\Big(\Big|\frac{\hsigma}{\sigma}-1\Big|\ge \eps\Big)& \le 
\sup_{\bX\in\event_n}\prob\Big(\Big|\frac{\hsigma}{\sigma}-1\Big|\ge
\eps\, \Big|\, \bX\,\Big) + \prob\big(\bX\not\in\event_n\big)\\
& \le \sup_{\bX\in\event_n}\prob\Big(\Big|\frac{\hsigma}{\sigma}-1\Big|\ge
\eps\, \Big|\, \bX\,\Big) + 4\, e^{-c_1n}\, ,
\end{align}
where $c_1>0$ is a constant defined as per Theorem
\ref{thm:event_thm}.$(a)$. 

We are therefore left with the task of bounding the first term in the
last expression above, uniformly over $\theta_0\in\reals^p$,
$\|\theta_0\|_0\le s_0$. For $\bX\in\event_n$, we can apply
\cite[Theorem 1.2]{SZ-scaledLASSO} whereby (using the notations of
\cite{SZ-scaledLASSO}, with their $\lambda_0$ replaced by $\tlambda$) $\xi = 3$, $T=\supp(\theta_0)$,
$\kappa(\xi,T)\ge \phi_0$, $\eta_*(\tlambda,\xi)\le
4s_0\tlambda^2/\phi_0^2$.
By a straightforward manipulation of Eq.~(13) in
\cite{SZ-scaledLASSO}, we have, for $\bX\in\event$, and 
$\|\bX^{\sT}W/n\|_{\infty}\le \tlambda/4$
(letting $\sigma^*$ the
oracle estimator of $\sigma$ introduced there)
\begin{align}
\Big|\frac{\hsigma}{\sigma^*}-1\Big|\le \frac{4\sqrt{s_0}\tlambda}{\phi_0}\le
\frac{\eps}{2}
\end{align}
where the last inequality follows for all $n$ large enough since $s_0
= o(\sqrt{n}/\log p)$.

Hence 
\begin{align}
\sup_{\bX\in\event_n}\prob\Big(\Big|\frac{\hsigma}{\sigma}-1\Big|\ge
\eps\, \Big|\, \bX\,\Big)\le \sup_{\bX\in\event_n}\prob\Big(\, \|\bX^{\sT}W/n\|_{\infty}\le \tlambda/4\Big|\, \bX\,\Big)
+\sup_{\bX\in\event_n}\prob\Big(\Big|\frac{\sigma^*}{\sigma}-1\Big|\ge
\frac{1}{10}\, \Big|\, \bX\,\Big) \, ,
\end{align} 
where we note that the right hand side is independent of $\theta_0$.
The first term vanishes as $n\to\infty$ by a standard tail bound on the
supremum of $p$ Gaussian random variables. The second term also
vanishes because it is controlled by the tail of a chi-squared random
variable \cite{SZ-scaledLASSO}.
%
%********************************************
%
\section{Proof of Theorem \ref{thm:FWER}}
\label{app:FWER}

By definition, letting $\cF_{p,s_0} \equiv\{x\in\reals^p:\, \|x\|_0\le
s_0\}$, and fixing $\eps\in (0,1/10)$
\begin{align}
\FWER(\hTf,n) &=\sup_{\theta_0\in\cF_{p,s_0}} \prob\left\{\exists i\in
    [p]\setminus\supp(\theta_0) \mbox{ s.t. } \frac{\sqrt{n}\,
      |\htheta^u_i-\theta_{0,i}|}{\hsigma [M \hSigma
      M^\sT]_{i,i}^{1/2}}  \ge
\Phi^{-1}\Big(1-\frac{\alpha}{2p}\Big)\right\}\\
& \le \sup_{\theta_0\in\cF_{p,s_0}} \prob\left\{\exists i\in
    [p]\setminus\supp(\theta_0) \mbox{ s.t. } \frac{\sqrt{n}\,
      |\htheta^u_i-\theta_{0,i}|}{\sigma [M \hSigma
      M^\sT]_{i,i}^{1/2}}  \ge (1-\eps)
\Phi^{-1}\Big(1-\frac{\alpha}{2p}\Big)\right\}\\
&\nonumber\phantom{\le\le}+
\sup_{\theta_0\in\cF_{p,s_0}}
\prob\Big(\Big|\frac{\hsigma}{\sigma}-1\Big|\ge\frac{\eps}{2}\Big)\, .
\end{align}

Since the second term vanishes as $n\to\infty$ by assumption
Eq.~(\ref{eq:ConsistencySigma}), 
it is sufficient to consider the first term.  Using Bonferroni
inequality, letting $z_\alpha(\eps) \equiv  (1-\eps)
\Phi^{-1}\big(1-\frac{\alpha}{2p} \big)$, we have
\begin{align}
\underset{n\to\infty}{\lim\sup}\,\, \FWER(\hTf,n) &\le \underset{n\to\infty}{\lim\sup}\,\,\sum_{i=1}^p
\sup_{\theta_0\in\cF_{p,s_0},\theta_{0,i}=0}
\prob\left\{ \frac{\sqrt{n}\,
      |\htheta^u_i-\theta_{0,i}|}{\sigma [M \hSigma
      M^\sT]_{i,i}^{1/2}}  \ge z_\alpha(\eps)
\right\}\\
&=  \underset{n\to\infty}{\lim\sup}\,\,\sum_{i=1}^p
\sup_{\theta_0\in\cF_{p,s_0},\theta_{0,i}=0}
\prob\left\{ \left|\tZ_i+\frac{\Delta_i}{\sigma[M\hSigma
      M^{\sT}]_{ii}^{1/2}}\right|\ge z_\alpha(\eps)\right\}
\end{align}
where, by Theorem \ref{thm:main_thm}, $\tZ_i\sim\normal(0,1)$
and $\Delta_i$ is given by Eq.~(\ref{eq:GeneralRepresentationBis}). We
then have
\begin{align}
\underset{n\to\infty}{\lim\sup}\,\,\FWER(\hTf,n) &\le  \underset{n\to\infty}{\lim\sup}\,\,\sum_{i=1}^p
\prob\big\{ |\tZ_i|\ge z_\alpha(\eps)-\eps\big\} \nonumber\\
&\,\,\,+\underset{n\to\infty}{\lim\sup}\,\,  \sum_{i=1}^p\sup_{\theta_0\in\cF_{p,s_0},\theta_{0,i}=0}
\prob\left\{\|\Delta\|_{\infty}\ge
  \frac{\eps\sigma}{2\hSigma_{ii}^{1/2}}\right\}\nonumber\\
&\le  2\big(1-\Phi(z_{\alpha}(\eps)-\eps)\big) +\underset{n\to\infty}{\lim\sup}\,\,
p\max_{i\in [p]}\prob(\hSigma_{ii}\ge 2)\nonumber\\ 
&\,\,\,+\underset{n\to\infty}{\lim\sup}\,\,  \sup_{\theta_0\in\cF_{p,s_0},\theta_{0,i}=0}\,
p\, 
\prob\left\{\|\Delta\|_{\infty}\ge
  \frac{\eps \sigma}{4}\right\}\, ,\label{eq:BonferroniLast}
\end{align}
where in the first inequality, we used 
 $[M \hSigma M^\sT]_{i,i}\ge 1/(4\hSigma_{ii})$ for all $n$ large enough,
  by Lemma \ref{lem:missing_bound}, and since $\coh = a\sqrt{(\log
    p)/n}\to 0$ as $n,p\to\infty$.  Now the second term in
  the right hand side of Eq.~(\ref{eq:BonferroniLast})
vanishes by Theorem  \ref{thm:event_thm}.$(a)$, and the last term is zero by Theorem \ref{thm:main_thm} 
since $\sqrt{n}/\log(p)\ge s_0\ge 1$.
Therefore 
\begin{align}
\lim\sup_{n\to\infty}\FWER(\hTf,n) \le
2\big(1-\Phi(z_{\alpha}(\eps)-\eps)\big) \, ,
\end{align}
and the claim follows by letting $\eps\to 0$. 
%
%%%%%%%%%%%%%%%%%%%%%%%%%%%%%%%%%%%%%%%%%%%%

%% that's all folks
\bibliographystyle{amsalpha}

\newcommand{\etalchar}[1]{$^{#1}$}
\providecommand{\bysame}{\leavevmode\hbox to3em{\hrulefill}\thinspace}
\providecommand{\MR}{\relax\ifhmode\unskip\space\fi MR }
% \MRhref is called by the amsart/book/proc definition of \MR.
\providecommand{\MRhref}[2]{%
  \href{http://www.ams.org/mathscinet-getitem?mr=#1}{#2}
}
\providecommand{\href}[2]{#2}

\end{document}